\tikzset{>={Latex[width=1.5mm,length=1.5mm]}}
\def\R{\mathbb{R}}
\def\Q{\mathbb{Q}}
\def\Z{\mathbb{Z}}
\newcommand{\opt}{\textsf{OPT}}
\def\ep{\varepsilon}
\def\tO{\tilde{O}}
\newtheorem{theorem}{Theorem}[section]
\newtheorem{lemma}[theorem]{Lemma}
\newtheorem{claim}[theorem]{Claim}
\theoremstyle{definition}
\newtheorem{definition}[theorem]{Definition}
\providecommand{\email}[1]{\href{mailto:#1}{\nolinkurl{#1}\xspace}}
\def\final{0}  
\def\iflong{\iffalse}
\newcommand{\eps}{\varepsilon}
\def\*#1{\mathbf{#1}}
\def\+#1{\mathcal{#1}}
\newcommand{\poly}{\ensuremath{\mathsf{poly}}}
\newcommand{\polylog}{\ensuremath{\mathsf{polylog}}}
\newcommand*{\inlineequation}[2][]{
  \begingroup
    \refstepcounter{equation}
    \ifx\\#1\\
    \else
      \label{#1}
    \fi
    \relpenalty=10000 
    \binoppenalty=10000 
    \ensuremath{
      #2
    }
    ~\@eqnnum
  \endgroup
}
\newcommand{\gsf}{\textsc{Group Steiner Forest}\xspace}
\newcommand{\gsr}{\textsc{Group Steiner Rounding}\xspace}
\newcommand{\rcsp}{\textsc{Resource-constrained Shortest Path}\xspace}
\newcommand{\gbh}{\textsc{Generalized $\beta$-Hopset}\xspace}
\newcommand{\mcs}{\textsc{Multicriteria Spanner}\xspace}
\newcommand{\pks}{\textsc{Packing-Covering Spanner}\xspace}
\newcommand{\rcs}{\textsc{Routing-controlled Spanner}\xspace}
\newcommand{\gss}{\textsc{Graph spanner for Group Steiner metric}\xspace}
\newcommand{\Dist}{\textsc{Dist}}
\newcommand{\spks}{\textsc{Scaled Packing-Covering Spanner}\xspace}
\newcommand{\hop}{\textsc{Hop-bound}\xspace}
\newcommand{\mslc}{\textsc{Minimum Density Steiner Label Cover}\xspace}
\newcommand{\bdgt}{\textbf{Bdgt}\xspace}
\newcommand{\ctrl}{\textbf{Ctrl}\xspace}
\newcommand{\dis}{\textsc{dis}\xspace}
\newcommand{\threshold}{\tau}
\newcommand{\res}{\boldsymbol{RES}\xspace}
\newcommand{\sres}{\textsc{ScaledRes}\xspace}
\newcommand{\zvec}{\boldsymbol{0}\xspace}
\DeclareMathOperator{\siign}{sgn}
\newcommand{\lenup}{\textsc{Max length}\xspace}
\newcommand{\lenlow}{ \textsc{Min length}\xspace}
\newcommand{\rcjt}{\textsc{Minimum-density Resource-constrained Junction Tree}\xspace}
\newcommand{\trcjt}{$\theta$-\textsc{Relaxed Minimum-density Resource-constrained Junction Tree}\xspace}
\newcommand{\cost}{\textsc{Cost}\xspace}
\newcommand{\rrcjt}{\textsc{$\theta$-relaxed Minimum-density Resource-constrained Junction Tree}\xspace}
\newcommand{\validlayer}{\textsc{VL}\xspace}
\newcommand{\indexlist}{\textsc{A}\xspace}
\newcommand{\cJ}{\mathcal{J}}
\def\ep{\varepsilon}
\def\tO{\tilde{O}}
\title{Routing-Controlled Spanners}
\author{
Elena Grigorescu\thanks{University of Waterloo.
 E-mail: \email{elena.grigorescu@uwaterloo.ca}. Supported in part by NSF CCF-1910659, NSF CCF-1910411, and NSF CCF-2228814.
}
 \and
 Nithish Kumar\thanks{Purdue University. E-mail: \email{kumar410@purdue.edu}. Partially supported by NSF CCF-1910411, NSF CCF-2228814, NSF Award CCF-2127806 and ONR Award N00014-24-1-2695.}
 \and
 Young-San Lin\thanks{Melbourne Business School. 
 E-mail: \email{y.lin@mbs.edu}.}
}   
\begin{document}

\maketitle

\begin{abstract}
\normalsize
Designing sparse directed {\em spanners}, which are subgraphs that approximately maintain distance constraints, has attracted sustained interest in TCS, especially due to their wide applicability, as well as the difficulty to obtain tight results. However, a significant drawback of the notion of spanners is that it does not capture a natural setting where demand pairs are subject to restrictions beyond a distance constraint.

In this paper we initiate the study of \emph{routing-controlled spanners},  where in addition to distance constraints, demand pairs are also subjected to routing constraints, which may require or forbid visiting specific vertices on feasible paths. The goal is to find a minimum-cost routing solution that satisfies the multiple constraints. Moreover, we introduce an even more general notion, which we call \emph{packing-covering spanner}\footnote{In a previous version of this article, we referred to \pks as \mcs. However, we have changed the name as we believe that the new name is more appropriate for our model.}, where  each demand pair is subjected to a number of packing and covering constraints, in addition to a distance constraint. Packing-covering spanners capture other natural network connectivity problems such as optimal hopsets and graph spanners for group Steiner distances.

To the best of our knowledge, we obtain the first approximation algorithms for the packing-covering spanner problem, and thus for the routing-controlled spanners, under natural assumptions. Our results match the state-of-the-art approximation ratios in special cases of ours, such as Steiner Forests and Directed Spanners. Our results also imply approximation algorithms for optimal hopsets and graph spanners for group Steiner distances in the directed setting, and position packing-covering spanners as a natural abstraction unifying several well-studied problems in directed network design.

Our main technical tool is a delicate generalization of the minimum-density junction tree framework of Chekuri, Even, Gupta, and Segev (TALG 2011) to the notion of \emph{minimum-density resource-constrained junction trees}, which also extends ideas from  Chlamt{\'a}{\v{c}},  Dinitz, Kortsarz, and Laekhanukit (TALG 2020).

\end{abstract}

\section{Introduction}
\label{sec:intro}

A fundamental goal in network design problems is to cheaply connect terminal pairs, while satisfying distance or connectivity constraints between demand pairs. In particular, the notion of graph {\em spanners} \cite{Awerbuch, PelegS89} has been central to the study of multi-commodity network problems. Spanners are subgraphs of a given graph network, in which pairwise distances approximate the respective distances in the original graph. For the past several decades, they have found a wide range of applications both in theory and practice, including  distributed computation \cite{Awerbuch, PelegS89}, data structures \cite{Yao1982SpacetimeTF, Alon87optimalpreprocessing}, routing schemes \cite{PelegU89a,CowenW04,RodittyTZ08,PachockiRSTW18}, approximate shortest paths \cite{DorHZ00, Elkin05, BaswanaK10}, distance oracles \cite{BaswanaK10, Chechik15, PatrascuR14}, and  property testing and reconstruction \cite{bhattacharyya2012transitive, BhattacharyyaGJJRW12, BermanBGRWY14, AwasthiJMR16}. See also more comprehensive surveys \cite{Raskhodnikova10, ahmed2020graph}.

Despite being so   widely applicable, a significant drawback of the notion of spanner is that it does not capture a natural setting where demand pairs are subject to restrictions beyond a distance constraint. For example, in transportation networks, not every route is accessible to every vehicle \footnote{Consider for instance, a huge pickup truck that can't fit inside tunnels or lightweight bridges}. A similar, yet distinct scenario is when the route of some vehicles explicitly needs to go through specific points \footnote{Consider for instance, the same pickup truck which needs to pick up its goods at specific points}. Situations such as these are very common in everyday life; however, current definitions of spanners and other network design problems cannot accommodate such restrictions. To address this drawback, we introduce and study the notion of \emph{routing-controlled} spanners in directed graphs, as a general framework that captures the usage of restrictions on routes connecting demand pairs. In addition to distance constraints, demand pairs are also associated with vertex constraints which may disallow or force a path resolving the demand pair to go through certain vertices. 

We now formalize the notion of \rcs and then we discuss its connection to spanners. 

\begin{restatable}{definition}{defrcs} 
\label{def:rcs}
    
    \rcs
    
    \textbf{Instance}: A directed graph $G = (V, E)$, and 
    \begin{enumerate}
        \item \textbf{General parameters:} Number of controls $m\in \Z_{> 0}$ ($m$ is a constant) and number of nodes $n$. 
    
        \item  \textbf{Edge cost:} Each edge $e \in E$ has a cost $\sigma(e)$, given by $\sigma: E \to \Q_{\ge 0}$. We define the cost of a subgraph $H$ of $G$ as $\cost(H) = \sum_{e \in H} \sigma(e)$. 
        
        \item \textbf{Edge length:} Each edge $e \in E$ also has a length $\ell_e \in [\poly(n)]_+$.

        \item \textbf{Groups:} We are also given $m$ sets of vertices $S_1,S_2,\ldots,S_c, \ldots, S_{c+p} \subseteq V$ (where $c+p = m$). We assume that the last $p$ groups have constant size.
        
        \item \textbf{Demand Pairs:} We are given a set $D \subseteq {V} \times {V}$ of ordered pairs. The set $D$ is also called as the demand pair set.
        
        \item \textbf{Controls:} We define the function $\ctrl : D \to \Z^+ \times \{0,1\}^c \times \{-1,0\}^p$ in the following way. $\ctrl(s,t)$ is a vector of $m+1$ elements called the {\em control vector}.
        \begin{itemize}
            \item For $(s,t) \in D$,  The zeroth element, $\ctrl(s,t)[0]$ captures the distance constraint. 
            \item The next $c$ elements  $\ctrl(s,t)[i]$ where $i \in [1,c]$ are either $1$ or $0$; these elements are set to $1$ if a walk resolving $(s,t)$ must visit at least one vertex in $S_i$ and $0$ if there are no such requirements.
            \item The last $p$ elements  $\ctrl(s,t)[i]$ where $i \in [c+1,p+c]$ are either $-1$ or $0$; these elements are set to $-1$ if a walk resolving $(s,t)$ must avoid all vertices in $S_i$ and $0$ if there are no requirements \footnote{Given a specific vector $\ctrl(s,t)$, we use the $0^{th}$ element of the vector i.e., $\ctrl(s,t)[0]$ to capture the distance constraint; the last $m$ elements are used for the routing controls.}. 
        \end{itemize}
         Furthermore, we assume that there exists a {\em routing-controlled feasible} $s \leadsto t$ walk (see Definition \ref{def:group_feasible_walk}) for each $(s,t) \in D$ that satisfies the control vector for each $(s,t) \in D $ \footnote{One can in fact detect whether or not such a walk exists by using the extension of \rcsp that is presented in \cite{grigorescu2024directed}. The problem is not defined when no such walk exists.}.          
        
    \end{enumerate}   

    \textbf{Objective}: Find a minimum cost  subgraph $H$ that satisfies all demands $(s,t) \in D$. That is, for all $(s,t) \in D$ there exists a {\em routing-controlled feasible} $s \leadsto t$ walk $p(s,t)$.
\end{restatable}

\begin{restatable}[Routing-controlled feasible walk]{definition}{defroutingfeasiblewalk}\label{def:group_feasible_walk}
For a given demand pair $(s,t) \in D$, let $\dis(s,t) = \ctrl(s,t)[0]$ \footnote{For a vector $\boldsymbol{v}$, we use $\boldsymbol{v}[i]$ to denote the $i$-th dimensional entry value.}. We say that a $s \leadsto t$ walk $p(s,t)$ is {\em Routing-control-feasible} if 
\begin{itemize}
    \item $\sum_{e \in p(s,t)} \ell_e \leq \dis(s,t) = \ctrl(s,t)[0]$ and
    \item For all $i \in [1,c]$, if $\ctrl(s,t)[i] = 1$, $p(s,t)$ touches at least one point in $S_{i}$.
    \item For all $i \in [c+1,m]$, if $\ctrl(s,t)[i] = -1$, $p(s,t)$ does not touch any points in $S_{i}$.
\end{itemize}
\end{restatable}

\paragraph{Discussion about Definition \ref{def:rcs}:}

Note that in this problem, we could have two different must-visit vertices for the same $(s,t)$ and the only way to visit them both may force us to go through the same intermediate point twice. For instance, consider the graph $G$ in Figure \ref{fig:walk_graph}. If we have a demand pair $(a,e)$ whose route must through $h$ and $g$, it has to visit $c$ at least twice. For this reason, we need to consider walks as opposed to paths for \rcs.

\begin{figure}[ht]
\centering
\begin{tikzpicture}[>=Stealth,  
    every node/.style={circle, draw, inner sep=1.5pt, minimum size=7mm, font=\small}
]
\node (v1) at (0,0) {a};
\node (v2) at (1.6,0) {b};
\node (v3) at (3.2,0) {c}; 
\node (v4) at (4.8,0) {d};
\node (v5) at (6.4,0) {e};
\draw[->] (v1) -- (v2);
\draw[->] (v2) -- (v3);
\draw[->] (v3) -- (v4);
\draw[->] (v4) -- (v5);

\node (u1) at (3.2,1.6)  {f};
\node (u2) at (4.4,0.8)  {g};
\draw[->] (v3) -- (u1);
\draw[->] (u1) -- (u2);
\draw[->] (u2) -- (v3);

\node (l1) at (3.2,-1.6) {h};
\node (l2) at (2.0,-0.8) {i};
\draw[->] (v3) -- (l1);
\draw[->] (l1) -- (l2);
\draw[->] (l2) -- (v3);

\end{tikzpicture} 
\caption{A graph $G$ where any feasible solution for the given demand pair $(a,e)$ requires revisiting vertex $c$ twice if both $h$ and $g$ are must-visit vertices. This highlights the need to allow walks (rather than only simple paths) when defining routing-controlled feasibility.}
\label{fig:walk_graph}
\end{figure}

Using Definition \ref{def:rcs}, we can prescribe, for a walk $p$, the set of vertices it must pass through as well as the set of vertices it must avoid. One could use $\ctrl(s,t)[i]$ where $i \in [1,c], i \in Z^+$ to force any walk resolving $(s,t)$ to go through $c$ different sets of points \footnote{Returning to our earlier example about pickup trucks, this allows us to force the truck to pickup raw materials from specific points; such as fuel from a gas station, paint from a hardware store.}. One can then use $\ctrl(s,t)[i]$ where $i \in [c+1,p+c]$ to force any walk resolving $(s,t)$ to avoid some points \footnote{So, if $(s_1,t_t)$ provides a walk for a truck, it can avoid tunnels using the last $p$ elements of $\ctrl(s_1,t_1)$; while a different $(s_2,t_2)$ which provides a walk for cars, need not avoid them.}. The special constraint $\ctrl(s,t)[0]$ gives the distance constraint for the demand pair $(s,t) \in D$. 

As mentioned above, \rcs captures a very natural scenario that has not been addressed by the existing literature. In this work, we provide the first approximation algorithm for \rcs in the directed setting. A weaker version of \rcs where we only need to handle a walk with must-visit vertices (and not worry about walks avoiding some vertices) was explored in \cite{bilo2024graph} for the undirected setting. That version is called \gss and our work here gives an approximation algorithm for \gss in the directed setting whose approximation factor matches that of some subproblems like Steiner forest. 

\paragraph{\pks:}
Although \rcs already captures a natural setting not studied before, it is a special case of an even  more general problem that we introduce and study, called the \pks problem. In the \pks problem, every demand pair can be associated with a number of packing and covering constraints in addition to a distance constraint. We give the first approximation algorithms for \pks. We will see later on how \pks can be useful as a general tool by using our results for \pks to recover and slightly generalize a result for designing hopsets from \cite{dinitz2025approximation} (see Section \ref{sec:gbh}).

We now formalize the notion of \pks. Some readers may notice the strong similarities between \pks and \rcs. We have separated these two problems for ease of presentation. We believe that \rcs is a very natural formulation of \pks, which allows us to understand the impact and utility of the problem in the real world. \pks, on the other hand, is a distilled version of \rcs that is easier to deal with mathematically. 
 
\begin{restatable}{definition}{defpks} 
\label{def:pks}
    
    \pks
    
    \textbf{Instance}: A directed graph $G = (V, E)$, and 
    \begin{enumerate}
        \item \textbf{General parameters:} Magnitude threshold $\threshold \in \Z_{\ge 0}$, and number of resources $m+1 \in \Z_{> 0}$. 
    
        \item  \textbf{Edge cost:} Each edge $e \in E$ has a cost $\sigma(e)$, given by  $\sigma: E \to \Q_{\ge 0}$. We define the cost of a subgraph $H$ of $G$ as $\cost(H) = \sum_{e \in H} \sigma(e)$. 
        
        \item \textbf{Edge resource consumption:} Each edge $e \in E$ has an $(m+1)$-dimensional resource consumption vector $\boldsymbol{r_e} \in \R \times [\threshold]_{+}^p \times [\threshold]_{-}^c$ \footnote{For any $\tau \in \Z_{\ge 0}$, we define $[\tau]_{+}:=\{0,1,...,\tau\}$ and $[\tau]_{-}:=\{-\tau,-\tau+1,...,-1,0\}$.} where $c+p = m$ and $c,p \in Z^+$.  
        The resource $\boldsymbol{r_e}[0]$ is called the length of an edge $e$; we also use $\ell_e$ to denote $\boldsymbol{r_e}[0]$. We assume that there are no negative cycles for length.
        
        \item \textbf{Demand Pairs and Budgets:} A set $D \subseteq {V} \times {V}$ of ordered {\em demand} pairs, and a resource {\em budget} captured by the function $\bdgt: D \to \R \times [\threshold]_{+}^p \times [\threshold]_{-}^c$. Furthermore, we assume that there exists a feasible (see Definition \ref{def:feasible_walk}) $s \leadsto t$ walk\footnote{We do allow negative cycles for the last $c$ resources - due to this reason, we may need to consider walks as opposed to simple walks. However, when the budget of the last $m$ resources is a constant, these walks will only have polynomially many edges.} that satisfies the resource budget for each $(s,t) \in D$ \footnote{One can in fact detect whether or not such a walk exists by using the extension of \rcsp that is presented in \cite{grigorescu2024directed}. The problem is not defined when no such walk exists.}. 
        
    \end{enumerate}   

    \textbf{Objective}: Find a minimum cost  subgraph $H$ that satisfies all demands $(s,t) \in D$. That is, for all $(s,t) \in D$ there exists a {\em feasible} $s \leadsto t$ walk $p(s,t)$.
\end{restatable}

\begin{restatable}[Feasible walk]{definition}{deffeasiblewalk}\label{def:feasible_walk}
 For a given demand pair $(s,t)$, we say that a $s \leadsto t$ walk $p(s,t)$ is {\em feasible} if $\sum_{ e \in p(s,t)} \boldsymbol{r_{e}}[i] \le \bdgt(s,t) [i] \: \forall i \in [m+1]$.
\end{restatable}

\paragraph{Discussion about Definition \ref{def:pks}:}

We remark that for a resource where $\boldsymbol{r_{e}}[i] \geq 0 \:, \: \forall e \in E$, the budget $\bdgt(s,t)[i]$ can be interpreted as a {\em packing constraint}. Thus, we call such a resource as a {\em packing resource}. In contrast, for a resource where $\boldsymbol{r_{e}}[i] \leq 0 \:, \: \forall e \in E$, the budget $\bdgt(s,t)[i]$ can be interpreted as a {\em covering constraint}. Thus, we call such a resource as a {\em covering resource}.

We note that unit-cost spanners \cite{chlamtavc2020approximating,berman2013approximation,grigorescu2021online} are a special case of \pks, where we have only one resource and $\sigma(e) = 1, \forall e \in E$. In addition, the notion of \pks also generalizes weighted spanner \cite{gkl2023},  which is in turn a unification of Steiner forest\footnote{A Steiner forest is a min cost subgraph that preserves connectivity between pairs of interest. It has no distance constraints.} \cite{abboud2018reachability,berman2013approximation,feldman2012improved,chekuri2011set} and unit-length spanners \cite{chlamtavc2020approximating}.

We are also not aware of any previous results for the \pks problem when $m \geq 1$, except for one very special case. When $|D| = 1$ (i.e., when we have only one demand pair $(s,t)$ and just need to find a $(s,t)$ walk), \cite{horvath2018multi} presents an approximation algorithm in which all resource constraints are violated up to a factor of $1+\eps$,  while the cost is optimal. It also assumes that the resource consumptions are positive for all resources, but does not place any restriction on the budget. The results of \cite{grigorescu2024directed} generalize \cite{horvath2018multi} by allowing the resource consumption to be negative, but restrict the range of the negative resource consumptions.

\subsection{Our contributions} \label{sec:intro-con}

Our most general results address the case when the edge lengths (the zeroth resource) may have rational or negative values. However, for clarity, we first describe the case where the length is a positive integer. We focus on \pks first and move on to \rcs afterwards.

\subsubsection{Our results for \pks}

The following theorem assumes that the length is a positive integer. Throughout this work, we fix $k= |D|$.

Recall that in \pks, the last $m$ resources are always bounded integers.
The formal proof of the following theorem appears in Section \ref{sec:khalf_vanilla}.

\begin{restatable}{theorem}{thmvpkskhalf} \label{thm:vpks_khalf} (Integer version)
   If $\boldsymbol{r_e}[0] \in \{1, 2, \ldots, T\}$ for all $e\in E$, then for any two constants $\ep>0, m>0$, there exists a $\tO(k^{1/2 + \ep} \cdot \threshold^m)$-approximation algorithm for \pks running in time $\poly(n,\tau, T)$.
\end{restatable}

As previously mentioned, Theorem \ref{thm:vpks_khalf} addresses a more general problem than previous literature, yet it recovers some results of the special cases of weighted spanners \cite{gkl2023}, unit-cost spanners \cite{grigorescu2021online}, and Steiner forests \cite{chekuri2011set} \footnote{We note that previous results in directed network design typically come in two flavors. The approximation factor for these results is either a function of $n$ or a function of $k$. This work only focuses on results whose approximation factor is a function of $k$.}. 

We also prove a more general fractional version of Theorem \ref{thm:vpks_khalf} that allows us to handle {\em rational} and {\em negative} values for $\boldsymbol{r_e}[0]$. See Theorem \ref{thm:khalf_pks} and its proof in Section \ref{sec:khalf_complex}. Because this variant requires extra parameters, we only discuss its statement here informally.

In particular, the analysis depends on two condition numbers $\eta$ and $\xi$ (see Definitions \ref{def:eta} and \ref{def:xi}). These condition numbers are only required for the length ($\boldsymbol{r_e}[0]$). Intuitively, $\eta$ denotes the ratio between the magnitude of the most negative resource consumption for the length, and the smallest absolute value of the budget for that resource.\footnote{$\eta$ cares about $\min_{e \in E}\{\boldsymbol{r_e}[0]\}$, but not about $\max_{e \in E}\{\boldsymbol{r_e}[0]\}$. This is because we can safely ignore edges that have consumption bigger than the budget, but we cannot do so for edges (with negative lengths) that consume much less than the budget.} If the resource consumption for the length is always non-negative, then $\eta=0$. 
Similarly, $\xi$ denotes the ratio between the largest and the smallest absolute value of the budget for the length - intuitively, it is a measure of how useful the relaxation given to us by approximately fulfilling $\bdgt(s,t)[0]$ is. These condition numbers affect both the approximation ratio and the runtime of Theorem \ref{thm:khalf_pks}. The runtime of Theorem \ref{thm:khalf_pks} blows up by a  $\poly(\eta,\xi)$ factor compared to Theorem \ref{thm:vpks_khalf}, while the approximation factor of \ref{thm:khalf_pks} blows up by a $\polylog(\eta,\xi)$ factor.

\subsubsection{Resource-constrained junction trees (Main technical tool)} \label{sec:intro-junc}

In previous literature, one main engine for solving the directed pairwise spanner problem \cite{chlamtavc2020approximating,grigorescu2021online}, and the directed Steiner forest problem \cite{berman2013approximation,charikar1999approximation,feldman2012improved,chekuri2011set} is the notion of \emph{junction tree}. The notion of junction tree used for these problems is a subgraph $H$ rooted at a vertex $r$ and resolves a subset $D'$ of the demand pair set $D$. For our purpose, we extend the definition of junction trees to handle multiple resource constraints.

\begin{definition} [Resource-constrained junction tree] \label{def:vanilla_rcjt}
    Given an instance of \pks and a root vertex $r \in V$, a {\em resource-constrained junction tree} $\cJ$ rooted at $r$ is a non-empty subgraph in $G$ that has feasible $s\leadsto r\leadsto t$ walks for all $(s,t) \in D' \subseteq D$.
\end{definition}

The density of a junction tree is a ratio of its cost to the number of demand pairs resolved by that tree. We now formally define the problem of finding the {\em minimum density resource constrained junction tree}.

\begin{restatable}{definition}{defjuncvanilla} \label{def:junc_vanilla}
    \rcjt   
    
    \textbf{Instance}: Same as \pks.
    
    \textbf{Objective}: Find a root $r \in V$, a resource-constrained junction tree $\cJ$ rooted at $r$, such that the ratio  $\cost(\cJ)$ to the number of $(s,t)$ pairs that satisfy the resource requirement 
    \[\res(p(s,r,t)) := \sum_{ e \in p(s,r,t)} \boldsymbol{r_e} \preceq \bdgt(s,t)\] \footnote{ For two vectors $\boldsymbol{u}$ and $\boldsymbol{v}$ with the same dimension, we use $\boldsymbol{u} \preceq \boldsymbol{v}$ to denote that $\boldsymbol{u}[i] \le \boldsymbol{v}[i]$ for all elements $i$.}
    is minimized. Specifically, the goal is the following:
    \begin{equation}
             \min_{r \in V, \cJ}\frac{\cost(\cJ)}{|\{(s,t) \in D \mid \exists \: p(s,r,t) \text{ in } \cJ: \res(p(s,r,t)) \preceq \bdgt(s,t)\}| }.
    \end{equation}
\end{restatable}

We provide an approximation algorithm to find the \rcjt in Theorem \ref{thm:vpks_khalf}. 
This theorem is the main engine used to prove Theorem \ref{thm:vpks_khalf} and its proof appears in Section \ref{sec:jtree_vanilla_version}.

\begin{restatable}{theorem}{thmmdrcjtvanilla} \label{thm:mdrcjt_vanilla}

 (Integer version) If $\boldsymbol{r_e}[0] \in \{1, 2, \ldots, T\}$ for all $e\in E$,  then for any two constants $\ep>0, m>0$ there is  an $\tO(k^\ep \cdot \threshold^m)$-approximation algorithm  for \rcjt  that runs in $\poly(n, \threshold^m, T)$ time.
\end{restatable}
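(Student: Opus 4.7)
The plan is to extend the minimum-density junction tree LP-rounding framework of Chekuri-Even-Gupta-Segev and Chlamtáč-Dinitz-Kortsarz-Laekhanukit to the resource-constrained setting. First I would guess the root by enumerating all $n$ candidates $r \in V$ and returning the junction tree with the smallest resulting density. For each fixed $r$, I would encode the resource constraints via a layered auxiliary graph $\tilde{G}$: for each $v \in V$ and each residual budget vector $\boldsymbol{b} \in \{0,\ldots,\poly(n,T)\} \times [-\threshold,\threshold]^m$, introduce a node $(v,\boldsymbol{b})$; an edge of $G$ carrying resource vector $\boldsymbol{r_e}$ induces edges across layers that decrement the budget accordingly. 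Feasibility of an $s \leadsto r$ walk with consumption $\preceq \boldsymbol{b_s}$ then becomes reachability in $\tilde{G}$ from $(s,\boldsymbol{b_s})$ to some $r$-layer node, and similarly for the $r \leadsto t$ portion.

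Next, for each demand pair $(s,t)$, I would enumerate the at most $O(\threshold^m \cdot \bdgt(s,t)[1])$ splits $\bdgt(s,t) = \boldsymbol{b_s} + \boldsymbol{b_t}$ at $r$ and write a density LP whose variables are $x_e \in [0,1]$ for each original edge $e \in E$, $z_{s,t} \in [0,1]$ for each pair, plus flow variables over $\tilde{G}$ enforcing that each pair with $z_{s,t}>0$ is served by an in-arborescence path from $(s,\boldsymbol{b_s})$ to an $r$-layer node together with an out-arborescence path from an $r$-layer node to $(t,\boldsymbol{0})$. The objective is to minimize $\sum_e \sigma(e)x_e$ subject to $\sum_{(s,t)} z_{s,t} \ge 1$ and capacity constraints binding flow on $\tilde{G}$-edges to the $x_e$ of the underlying edge $e \in E$. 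This LP has $\poly(n,T,\threshold^m)$ variables and constraints and is solvable in that time.

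The rounding step would apply the $\tO(k^\ep)$-density rounding of Chlamtáč et al.\ to this LP. Crucially, cost is carried on $x_e$ variables for $e \in E$, so the rounded output lives at the level of the original graph and is a subgraph of $G$; the layered structure only shapes the flow/reachability constraints. Since $\tilde{G}$ has a $\threshold^m \cdot \poly(n,T)$-factor node-count blowup, and the group-Steiner-style rounding pays roughly one factor per resource state that it must serve separately, I expect the final density to satisfy $\cost(\cJ)/|D'| \le \tO(k^\ep \cdot \threshold^m) \cdot \text{LP OPT}$, with the $\poly(n,T)$ factor absorbed into the runtime.

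The hardest part will be verifying that Chlamtáč et al.'s rounding, originally designed for distance-only spanners, composes cleanly with the resource-state layering. Specifically, their iterative/randomized rounding relies on a Steiner label-cover decomposition of the in- and out-arborescence pieces; I would need to show that splitting the budget at $r$ and routing in- and out-paths independently still supports this decomposition, and that the reuse of $x_e$ across many $(s,t)$ pairs and many budget splits inflates the final cost by at most the $\threshold^m$ overhead rather than something proportional to the total number of layer states. A secondary obstacle is path-decomposing the fractional $\tilde{G}$-flow back into $G$-subgraphs while preserving feasibility of the reconstructed walks; since feasible walks have only $\poly(n,\threshold,T)$ edges, this decomposition should go through, but care is needed to charge the original cost $\sigma(e)$ rather than a blown-up version.
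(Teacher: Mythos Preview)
Your high-level plan matches the paper's---guess the root, build a layered product graph encoding resource constraints as connectivity, cast the problem as minimum-density Steiner label cover, and round---but two concrete pieces are missing, and both are where the paper's technical work actually lives.

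First, you assert without justification that the first resource's $\poly(n,T)$ layers go ``into the runtime'' while only $\threshold^m$ enters the approximation ratio. This asymmetry is precisely the paper's key structural lemma, and it does not follow from your LP as written. The paper writes its LP over edges of the \emph{layered} graph (after height reduction), not over original edges; thus the LP optimum is a priori inflated by the full number of layers, $\poly(n,T)\cdot\threshold^m$, relative to the true junction-tree optimum in $G$. The observation that rescues the $\threshold^m$ bound is that for any fixed configuration of the last $m$ coordinates, the copies of a given edge along the first-resource axis are totally ordered (since $\boldsymbol{r_e}[1]>0$, the copy closest to the root dominates), so when lifting an optimal junction tree from $G$ into the layered graph one needs at most one copy per configuration, hence at most $O(\threshold^m)$ copies total. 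Your alternative formulation with $x_e$ on original edges and flow living in $\tilde{G}$ would avoid the LP-value blowup entirely, but then the Chlamt\'a\v{c} et al.\ $\tO(k^\ep)$ pipeline cannot be invoked: that pipeline requires height reduction on the layered graph followed by group-Steiner rounding on the resulting tree, whose edge variables are not in correspondence with the original $x_e$. You cannot have it both ways.

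Second, you correctly flag as ``the hardest part'' the label-cover consistency issue---the LP may put mass on a source representative $(s^t,\boldsymbol{I})$ and a sink representative $(t^s,\boldsymbol{J})$ with $\boldsymbol{I}+\boldsymbol{J}\not\preceq\bdgt(s,t)$, so rounding the two sides independently can produce an infeasible walk---but you offer no mechanism to handle it. For a single resource, Chlamt\'a\v{c} et al.\ sort representatives by that coordinate and keep the bottom half on each side; with $m+1$ resources this one-shot median trick fails, because a representative that is good for coordinate $i$ may be bad for coordinate $j$. The paper's fix is to iterate: prune on coordinate $1$, then among the survivors prune on coordinate $2$, and so on through coordinate $m+1$, at each step keeping everything at or below the \emph{median consumption value} (not the median terminal). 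One then argues that each pass removes at most half of the remaining LP mass, so a $2^{-(m+1)}$ fraction survives all passes, and every surviving source--sink pair satisfies the relation in all coordinates simultaneously. This $2^{m+1}$ loss is absorbed since $m$ is constant, but the argument that the iterated pruning does not interact badly across coordinates is the delicate point, and it is not supplied by your proposal.
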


Theorem \ref{thm:mdrcjt_vanilla} generalizes the minimum density junction tree used for pairwise spanners \cite{chlamtavc2020approximating}, and for weighted spanners \cite{gkl2023}, with the same approximation ratio. We emphasize that our minimum-density junction tree framework accounts for resource constraints (as opposed to distance constraints). Furthermore, since junction trees and their variants are used in several algorithms in network design, we believe Theorem \ref{thm:mdrcjt_vanilla} which extends the junction tree black box in multiple ways are of independent interest. As an illustration, Theorem \ref{thm:rcs_keps} is a direct application of Theorem \ref{thm:mdrcjt_vanilla}. 

We also extend our results to the domain $\R$ for the length by defining and using a relaxed version of resource-constrained junction trees. This result - Theorem \ref{thm:mdrcjt_complex} - is in Section \ref{sec:jtree_complex_version}. This result also depends on the condition numbers $\xi$ and $\eta$ in the same way as Theorem \ref{thm:khalf_pks}. In Section \ref{sec:khalf_complex}, we use Theorem \ref{thm:mdrcjt_complex} to obtain Theorem \ref{thm:khalf_pks}. 

\subsubsection{Our results for \rcs}

We note that although our formulation of \pks greatly restricts the last $m$ resources, it is still a natural and useful tool. To illustrate the utility and versatility of our formulation of \pks, we establish reductions from \rcs to \pks.  Theorem \ref{thm:rcs_khalf} follows through a natural application of Theorem \ref{thm:vpks_khalf}. See Section \ref{sec:rcs} for details.

\begin{restatable}{theorem}{thmrcskhalf}
\label{thm:rcs_khalf}
When $ m, \eps > 0$ are constants, there exists a $\poly(n)$ time algorithm with an approximation factor of $\tO(k^{1/2+\ep})$ for \rcs.
\end{restatable}

We say $S' = \cup_{i \in \indexlist}S_i$ is essential if every demand pair $(s,t) \in D$ has to visit at least one vertex from $S'$. Theorem \ref{thm:rcs_keps} presents a specialized result of \rcs in terms of the size of the set $S'$. When $|S'|$ is small, the Theorem gives a very good approximation factor (especially when the essential subset is constant size).

\begin{restatable}{theorem}{thmrcskeps}
\label{thm:rcs_keps}
For a given instance of \rcs, let  $\indexlist \subseteq \{1,2,\ldots,m\}$. If $S' = \cup_{i \in \indexlist}S_i$ is essential, then when $m,\eps$ are constants, there exists a $\poly(n)$-time algorithm with an approximation factor of $\tO\left(|S'| \cdot k^{\ep}\right)$ for \rcs.
\end{restatable}

\subsubsection{Using \pks for constructing optimal hopsets}

A \emph{hopset} $H$ with hopbound $\beta$ and stretch $\alpha$ for a graph $G$ is a set of edges such that, 
for every pair of vertices $u, v \in V(G)$, there exists a walk in $G \cup H$ that uses at most $\beta$ hops 
and whose length is at most $\alpha$ times the actual distance between $u$ and $v$ in $G$. \cite{dinitz2025approximation} studies the optimization version of this problem, where the goal for a given graph instance is to determine the smallest set of edges that meets the hopbound $\beta$ and stretch $\alpha$ requirements. We show the versatility and the utility of the \pks formulation by recovering Theorem 5.1 from \cite{dinitz2025approximation} using Theorem \ref{thm:vpks_khalf}. See Section \ref{sec:gbh} for details.

\subsection{High-level technical overview}

We now sketch the proof of Theorem \ref{thm:mdrcjt_vanilla}. This is our main technical result and an important tool for deriving many of our other results, i.e., Theorems \ref{thm:vpks_khalf} and \ref{thm:rcs_keps}. Theorem \ref{thm:mdrcjt_complex} which further allows the length to be rational and negative needs one additional step which we describe in the end.

At a high level, junction trees form a cheap partial solution to connect a subset of the demand pairs. One possible solution for \pks is to iteratively select a low-density junction tree repeatedly until every demand pair is resolved. However, for \pks, we have to build \emph{feasible} junction trees which fulfill the resource constraints while connecting the demand pairs. We call this  modified junction-tree a {\em resource-constrained junction tree}. It is the main engine of our framework.

We start by describing how to  find a minimum-density resource-constrained junction tree rooted at a specific vertex $r$. Here the notion of \emph{density} is defined as the cost of the junction tree divided by the number of terminal pairs connected. This algorithm appeared in \cite{chlamtavc2020approximating} and is based on the framework for finding a minimum density junction tree (without distance constraints) given in \cite{chekuri2011set}. It is used for pairwise spanners with unit lengths and has three main steps:  1) use the input graph $G$ to build a layered graph. This layered graph captures the distance constraints (i.e., there is only one resource) of the original problem using connectivity constraints; 2) use a \emph{height reduction} technique from \cite{zelikovsky1997series} to construct a shallow tree-like graph from the layered graph, by paying a small approximation ratio;  and finally 3) solve the corresponding \emph{minimum-density Steiner label cover} problem on the shallow graph. The minimum-density Steiner label cover instance is solved by building, solving, and rounding a linear program. The blackbox used to round this linear program has a good integrality gap when the graph is shallow and it is for this reason, we need step 2. 
To extend the previous results to handle multiple resources, we develop new ideas which significantly depart from the work of \cite{chlamtavc2020approximating}. More importantly, allowing multiple resources also destroys some structural properties that were essential for the techniques in \cite{chlamtavc2020approximating}. We make a number of careful observations and some natural assumptions to work around these structural deficiencies. 
We highlight the main strategies used to build our minimum density junction tree framework in the next  three paragraphs. In the fourth paragraph we describe how to extend the framework to allow the consumption of the length to be rational. Finally, in the fifth paragraph, we describe how to design an approximation algorithm based on the minimum density junction tree framework.

\paragraph{Turning resource constraints into connectivity constraints.}

In our first step, we construct a \emph{product graph} that captures the resource constraints using connectivity constraints. This product graph comprises multiple copies of each vertex $u$ in the original graph $G$. These copies are of the form $(u,\boldsymbol{I})$ (where $\boldsymbol{I}$ is an integer vector with $m+1$ elements). The product graph is structured in such a way that we can find a walk from $(u,\boldsymbol{I})$ to the root $r$ if and only if there is a walk of resource consumption $\boldsymbol{I}$ from the vertex $u$ to the root $r$. The copies we create and the edges we use to connect them need to handle multiple resources unlike the work of \cite{chlamtavc2020approximating}. To handle negative edge lengths, we use the key modification introduced by \cite{grigorescu2024directed} which allows edges to go \emph{backward}, instead of always forward as in \cite{chlamtavc2020approximating}.

A major structural challenge arises in this step since we are dealing with  more than one resource. When $m=0$ (i.e., in weighted and unit-cost spanners), if we have multiple demand pairs associated with the same vertex $s$, we just need one $s \leadsto r$ walk - this walk just needs to be short enough for the strictest distance constraint associated with $s$ (i.e., when $m=0$ all resource constraints associated with a specific terminal $s$ are dominated by one constraint since a set of one-dimensional vectors form a total ordering). But when $m \geq 1$, there is no notion of the strictest constraint. For instance, if $m=1$, the vertex $s$ could be a part of two demand pairs $(s,t_1)$ and $(s,t_2)$ with a resource budget of $(10,20)$ and $(20,10)$ (since a set of two-dimensional vectors form a partial ordering). One would need two separate $s \leadsto r$ walks in this example and potentially as many as $|D|$ (recall that $D$ is the set of demand pairs for \pks) $s \leadsto r$ walks in a more general example which makes the solution prohibitively expensive in comparison to the optimal solution. This implies that a collection of $s \leadsto r$ walks no longer has an in-arborescence structure, unlike traditional junction trees.

To resolve this issue, we show a key observation - the over counting of the edges can instead be upper bounded by the number of layers that correspond to any $m$ of the $m+1$ resources. This means that we can still obtain a good approximation if we bound the budget for $m$ of the $m+1$ resources. We use this observation to allow greater precision and freedom for the length. This key observation is a crucial step in obtaining a good approximation factor for \pks under natural assumptions which allow us to solve \rcs and other interesting problems. 

A secondary structural challenge is that for a covering resource $i$, a walk $p$ with $\res(p)[i] \leq -\tau \leq \bdgt[i]$ may also be a feasible walk that is used to connect a demand pair. To directly track such a walk with a product graph, we need to build an unlimited number of layers for the $i^{th}$ resource which in turn makes the approximation factor unbounded. We get around this challenge by making some careful observations on the nature of covering resources which allow us to get away with just $\tau$ layers for any covering resource. This is explained in detail in Section \ref{sec:jtree_vanilla_version}.

Finally, for every demand pair $(s,t) \in D$, we add a number of dummy terminals of the form $(s^t,\boldsymbol{I})$ and $(t^s,\boldsymbol{J})$. These dummy terminals are structured in such a way that there is a walk from $(s^t,\boldsymbol{I})$ to $(t^s,\boldsymbol{J})$ in the product graph if and only if in the original graph $G$ there is a $s \leadsto t$ walk whose resource consumption is at most $\boldsymbol{J}+\boldsymbol{I}$. Now the resource constraints can be modeled by connectivity constraints involving the correct dummy terminals.

\paragraph{Turning the product graph into a shallow tree}

In the second step, we apply the height reduction result from \cite{zelikovsky1997series} to turn the product graph into a shallow graph. Now, to solve the resource constraint problem we originally had, we just need to solve a variant of the Steiner problem, namely, \emph{minimum density Steiner label cover} on a shallow graph. A crucial property of the black box from \cite{zelikovsky1997series} is that we can still keep track of the dummy terminals after obtaining the shallow graph and that allows us to enforce our constraints.

\paragraph{Solving the \emph{minimum density Steiner label cover} instance}

We first build a standard linear program \cite{chlamtavc2020approximating,chekuri2011set} for the given instance of \emph{minimum density Steiner label cover} and obtain its fractional solution. Before we round this fractional solution, we need to ensure that for any $(s,t) \in D$, whatever copies of the form $(s^t,\boldsymbol{I})$ and $(t^s,\boldsymbol{J})$ we select will still fulfill the resource constraint. Note that a specific copy $(s^t,\boldsymbol{I_1})$ might fulfill the resource constraint when paired with $(t^s,\boldsymbol{J_1})$ but may not do so when paired with $(t^s,\boldsymbol{J_2})$ i.e., it might be the case that $\boldsymbol{I_1} + \boldsymbol{J_1} \preceq \bdgt(s,t)$ but $\boldsymbol{I_1} + \boldsymbol{J_2} \succ \bdgt(s,t)$. The linear program may still assign some value for the dummy terminal $(s^t,\boldsymbol{I_1})$ since it can be paired with $(t^s,\boldsymbol{J_1})$. The same might happen for  $(t^s,\boldsymbol{J_2})$ since it might be the case that a different $(s^t,\boldsymbol{I_2})$ can be paired with it instead. And then if we round the fractional solution of the linear program, we might end up rounding $(s^t,\boldsymbol{I_1})$ and $(t^s,\boldsymbol{J_2})$. But a walk from $(s^t,\boldsymbol{I_1})$ to $(t^s,\boldsymbol{J_2})$ will not fulfill the resource constraint.

Thus, for a specific demand pair $(s,t)\in D$, we need to pick a subset of the dummy terminals so that no matter what dummy terminal we round, the resource constraint for that demand pair would be satisfied. We say that a dummy terminal $(s^t,\boldsymbol{I_1})$ is {\em bad} for a resource $i$ if there is a different dummy terminal $(t^s,\boldsymbol{J_1})$ such that $\boldsymbol{I_1}[i]+\boldsymbol{J_1}[i] > \bdgt(s,t)[i]$; we say it is {\em good} otherwise. A pruning process to remove the bad dummy terminals was introduced in \cite{chlamtavc2020approximating} when there is only one resource. 

However, our specific formulation has a new challenge.  What happens if a dummy terminal $(s^t,\boldsymbol{I})$ is a bad terminal for the length but not for the second? How do we ensure that pruning the bad terminals for resource $i$ does not delete all the good terminals for a different resource $j$? Our work takes advantage of some structural properties of the product graph and delicately adapts the pruning process to the multiple resource setting. Our pruning process occurs in phases where in each phase we deal with the constraints associated with one specific resource. We show that any phase of the new pruning process will only delete at most half the good terminals associated with the other resources. There are  $m+1$ phases, each corresponds to one resource. The number of unpruned terminals associated with any resource will only drop by a factor of $2^{m+1}$. Therefore, we can prune the bad terminals and then scale the unpruned terminals with factor of $2^{m+1}$ when rounding them. 

After the pruning process, we just need to round the solution of the linear program using Group Steiner rounding and that gives us our resource-constrained junction tree.

\paragraph{Scaling and rounding the edge lengths} This step is only required for Theorem \ref{thm:mdrcjt_complex} i.e., when we have rational edge lengths. For Theorem \ref{thm:mdrcjt_complex}, we run this step before any of the other three steps. In this step, we use an approach similar to \cite{horvath2018multi,grigorescu2024directed} to properly scale and round the length. This process effectively turns the rational values into integers (which are smaller when the problem is well conditioned). Note that the last $m$ resources are already integers and therefore require no scaling. We round up the length and for this reason, the consumption of the length might be slightly overestimated. This rounding and scaling process allows us to construct a product graph of size polynomial in the condition numbers. This product graph approximately can model the resource constraints using connectivity constraints. See Figure \ref{fig:original_scaled_graph} for an example of the scaling process. This step would not be required when the length is restricted to integers polynomial in $n$ i.e., for Theorem \ref{thm:mdrcjt_vanilla}.

\paragraph{The approximation algorithm for \pks} 

  The approximation algorithm for the integer version of \pks i.e., Theorem \ref{thm:vpks_khalf} follows a standard iterative density procedure used for directed Steiner forests \cite{chekuri2011set} and spanners \cite{grigorescu2021online,gkl2023}. Iteratively picking minimum-density distance-constrained junction trees only pays a factor of $O(\sqrt{k})$. Combining this and Theorem \ref{thm:mdrcjt_vanilla} results an $\tO(k^{1/2 + \ep} \tau^m)$-approximation.

\subsection{Future directions}
A very interesting follow-up question would be  if one can get a version of Theorem \ref{thm:mdrcjt_vanilla} where the approximation factor does not depend on $\tau$. Even making the dependence sublinear in $\tau$ would be of great utility for constructing hopsets and many other applications. Another interesting follow-up problem would be to allow $\boldsymbol{r_e}[i]$ to be both positive and negative for the same $i$. This extension would allow us to address an entirely new line of applications. For instance, allowing this would let us track a resource like fuel which can both be gained (by going to gas station) and lost (by travelling) in a long trip. A much simpler question would be to see what other applications one can find for \pks. We believe that further research along these lines would make the connection between network design in theory and practice stronger. 

\subsection{Additional Related work}

A well-studied variant of spanners is called the \emph{directed $s$-spanner} problem, where 
there is a fixed value $s \geq 1$ called the \emph{stretch}, and the goal is 
to find a subgraph with a minimum number of edges such that the distance between \emph{every} pair of
vertices is preserved  within a factor of $s$ in the original
graph. The \emph{directed $s$-spanner} problem is studied in \cite{elkin1999client, Kortsarz2001OnTH,berman2013approximation,dinitz2016approximating}. The problem is hard to approximate within an $O(2^{{\log^{1-\eps} n}})$ factor for $3 \leq s = O(n^{1-\delta})$ and any $\eps, \delta \in (0,1)$, unless $NP\subseteq  DTIME(n^{\operatorname{polylog} n})$ \cite{ElkinP07}. More general variants of spanners are studied in \cite{chlamtavc2020approximating,elkin1999client,bhattacharyya2012transitive}.

Recently, \cite{gkl2023} studied the {\em weighted} spanner problem for arbitrary terminal pairs, which has a closer formulation to multi criteria spanners. \cite{grigorescu2024directed} studied an extension of {\em weighted} spanners called buy-at-bulk spanners. \cite{grigorescu2021online} studied online directed spanners. We refer the reader to 
the excellent survey \cite{ahmed2020graph} for a more comprehensive exposition. 

In the {\em group steiner metric}, we are given a collection of groups of must-visit vertices, and we measure the distance between any two vertices as the length of the shortest path between them that
traverses at least one must-visit vertex from each group. \cite{bilo2024graph} presents an algorithm to build a spanner w.r.t. the {\em group steiner metric} for the undirected setting. We note that \rcs is a generalization of this metric in the directed setting. \rcs allows edge costs to arbitrary (\cite{bilo2024graph} considers edge costs to be uniform). Our model is for the pairwise setting and we allow individual pairs to skip some groups (\cite{bilo2024graph} is for the all pair setting and assumes that every vertex pair needs to visit every group). More importantly, our model also allows us to specify vertices to be avoided for each demand pair $(s,t) \in D$. 

\subsection{Organization}

We first present the integer version of our results - Theorem \ref{thm:mdrcjt_vanilla} (our integer version result for \rcjt) and Theorem \ref{thm:vpks_khalf} (our integer version approximation algorithm for \pks) in Section \ref{sec:jtree_vanilla_version} and Section \ref{sec:khalf_vanilla} respectively. We present our results for \rcs and \gbh in Section \ref{sec:rcs} and \ref{sec:gbh} respectively. Finally, we present the rational version of our results - Theorem \ref{thm:mdrcjt_complex} and Theorem \ref{thm:khalf_pks} in Section \ref{sec:jtree_complex_version} and Section  \ref{sec:khalf_complex} respectively. We also give a compilation of the notation in Section \ref{sec:appendix}.

\section{Proof of theorem \ref{thm:mdrcjt_vanilla}} 
\label{sec:jtree_vanilla_version} 

The overall structure of this proof is based on \cite{chlamtavc2020approximating,grigorescu2024directed} which are in turn based on \cite{chekuri2011set}. We do the following procedure using every possible root $r \in V$ and then take whichever case has the minimum density among all the possible roots. First, we build a product graph to turn the resource constraints/budgets into connectivity constraints. This gives rise to an instance of the \mslc problem. Then we use the height reduction result from  \cite{zelikovsky1997series} to obtain a shallow tree  for the instance of \mslc. Finally, we can solve this instance by building an LP, solving it and using the \gsf problem to round it using an extension of the approach in \cite{chlamtavc2020approximating}.

In this procedure we have to deal with resource constraints as opposed to \cite{chlamtavc2020approximating} which deals with distance constraints. This departure introduces new technical challenges as turning the resource constraints into connectivity constraints is more complex than turning distance constraints to connectivity constraints. Dealing with cases where $m \geq 1$, also introduces a new structural challenge. When $m=0$ (i.e., in weighted and unit-cost spanners), if we have multiple demand pairs associated with the same vertex $s$, we just need one $s \leadsto r$ walk - this walk just needs to be short enough for the strictest distance constraint (i.e., when $m=0$ the resource constraints have a dominating set of size 1). But when $m \geq 1$, there is no notion of the strictest constraint (i.e., your dominating set is potentially of infinite size). For instance, if $m=1$, the vertex $s$ could be a part of two demand pairs $(s,t_1)$ and $(s,t_2)$ with a resource budget of $(10,20)$ and $(20,10)$. One would need two separate $s \leadsto r$ walks in this example and potentially as many as $|D|$ (recall that $D$ is the set of demand pairs for \pks) $s \leadsto r$ walks in a more general example which makes the solution prohibitively expensive in comparison to the optimal solution. For this reason, we need strong assumptions on the resource consumptions (for the last $m$ resources) and a deeper analysis than \cite{chlamtavc2020approximating,chekuri2011set}

In addition, solving the \mslc problem after using height reduction also has additional complications when compared to \cite{chlamtavc2020approximating}. The pruning procedure in \cite{chlamtavc2020approximating} that is used to carefully process the \mslc problem to get a good input for \gsf needs to be extended to the case where $m \geq 1$.

\subsection{Turning distance constraints into connectivity constraints}

\paragraph{High-level idea and potential challenges:}

For a specific root vertex $r$, we turn our distance constraints with edge costs problem into a connectivity problem with edge costs. The overall process for this is as follows. We first build a product graph. After building this product graph, a number of dummy terminals are added to various layers - now the problem of finding a $s \leadsto r \leadsto t$ walk within the resource budget changes into a problem of connecting the correct dummy terminals. 

While the overall approach of turning a distance constraint into a connectivity constraint was introduced in \cite{chlamtavc2020approximating}, we have to keep several things in mind while designing a similar construction. We are dealing with multiple resources as opposed to just one resource (i.e., length). This means that our product graph needs to be more complex than that of \cite{chlamtavc2020approximating}. As mentioned earlier, our analysis also needs to be more detailed due to structural changes caused by having more than one resource.

\paragraph{Graph construction:} Let us now see our graph construction which needs to keep all of these concerns in mind. Let the vectors $\boldsymbol{t^{-}}$ and $\boldsymbol{t^{+}}$ represent the smallest and largest possible value the resource consumption of a feasible walk could take. Let $\zvec$ be a vector of $m$ dimensions with zeroes everywhere. Observe that $\boldsymbol{t^-}[0] = 0$ and $\boldsymbol{t^+}[0] = \bdgt_{max}[0] \leq O(T \cdot \poly(n))$.  

For $i \in [1,m]$, 

\begin{itemize}
    \item when $\boldsymbol{r_{e}}[i] \geq 0 \:\: \forall e \in E$ i.e., $i$ is a packing resource, $\boldsymbol{t}^{-}[i] = 0$ (since all resource consumptions for this resource are non negative, the total resource consumption for this resource is at least zero) and $\boldsymbol{t}^{+}[i] = \threshold$.
    \item when $\boldsymbol{r_{e}}[i] \leq 0\:\: \forall e \in E$ i.e., $i$ is a covering resource, $\boldsymbol{t}^{+}[i] = 0$ and $\boldsymbol{t}^{-}[i] = -\threshold$.
\end{itemize}. 

We construct a product graph $\bar{G}_r$ with the following vertices:

\begin{align}
    & \Bar{V}_r^L = \left((V\setminus r) \times \prod_{i=0}^{m} \{\boldsymbol{t^{-}}[i],\ldots,-2 ,-1 ,0 ,1 ,2,\ldots,\boldsymbol{t^{+}}[i] \}  \times \{L\}\right) \cup \{(r,\zvec,L)\}, \\
    & \Bar{V}_r^R = \left((V\setminus r) \times \prod_{i=0}^{m} \{\boldsymbol{t^{-}}[i],\ldots,-2 ,-1 ,0 ,1 ,2,\ldots,\boldsymbol{t^{+}}[i] \} \times \{R\}\right) \cup \{(r,\zvec,R)\}, \\
    & \bar{V}_r = \bar{V}_r^R \cup \bar{V}_r^L.
\end{align}

As an example, a vertex in the newly constructed graph looks as follows: $(u,\boldsymbol{I}, L)$. This denotes that the new vertex is a copy of the vertex $u$ from the graph $G$, and if this copy can reach the root in the product graph $\bar{G}_r$, then the vertex $u$ can reach the root with a resource consumption of  $\boldsymbol{I}$ in the original graph $G$. Let us call this $\boldsymbol{I}$ (note that $\boldsymbol{I} \in Z ^ {m+1}$) as the label of the layer. $L$ (left) and $R$ (right) are labels that are used to distinguish two separate copies of the same vertex set. Vertices of the form $(u,\boldsymbol{I}, L)$ cannot be reached from the root; on the other hand, vertices of the form $(u,\boldsymbol{I}, R)$ cannot reach the root.

We connect these vertices with the following edges:

\begin{equation}
    \begin{split}
        \Bar{E}_r^R &= \{((u,\boldsymbol{I},R)(v,\boldsymbol{J},R)) \mid (u,\boldsymbol{I},R),(v,\boldsymbol{J},R) \in \bar{V}_r^R,(u,v)\in E \text{ and } \boldsymbol{\Bar{r}}_{(u,v)}=(\boldsymbol{J}-\boldsymbol{I} ) \text{ where } \boldsymbol{I},\boldsymbol{J} \in  Z^{m+1} \\ &\quad \text{ and } \boldsymbol{\Bar{r}}_{(u,v)} \text{ is the resource consumption of the edge} (u,v)\}.
    \end{split}
    \notag
\end{equation}

\begin{equation}
    \begin{split}
        \Bar{E}_r^L &= \{((u,\boldsymbol{I},L)(v,\boldsymbol{J},L)) | (u,\boldsymbol{I},L),(v,\boldsymbol{J},L) \in \bar{V}_r^L,(u,v)\in E
        \text{ and } \boldsymbol{\Bar{r}}_{(u,v)}=(\boldsymbol{I} - \boldsymbol{J} ) \text{ where } \boldsymbol{I},\boldsymbol{J} \in  Z^{m+1} \\
        & \quad \text{ and } \boldsymbol{\Bar{r}}_{(u,v)} \text{ is the resource consumption of the edge} (u,v)\}.
    \end{split}
\end{equation}

Intuitively, we add an edge between two vertices whenever it makes sense i.e., when the original copies of these two vertices are connected in the original graph $G$ and the layer separation between these two vertices is equal to the resource consumption of the edge in $G$. The edges in our product graph $\Bar{G}_r$ inherit the costs from the corresponding edges in the original graph. 

Let the resource consumption of a walk $P$ be denoted by $\res(P)$. The edges and vertices are built in such a way that if there is a walk from $(u,\boldsymbol{I},L)$ to the root $r$, then in the graph there is a $u \leadsto r$ walk $p(u,r)$ such that $\res(p(u,r)) = \boldsymbol{I}$. Similarly, if there is a walk from the root $r$ to $(u,\boldsymbol{I},R)$, then in the graph there is a $r \leadsto u$ walk $p(r,u)$ such that $\res(p(r,u)) = \boldsymbol{I}$.

Note that $(\bar{G}_r^R = (\bar{V}_r^R,\bar{E}_r^R))$ and $(\bar{G}_r^L = (\bar{V}_r^L,\bar{E}_r^L))$ are both disjoint until now. We add one final edge. This connects $(r,\zvec,L)$ to $(r,\zvec,R)$ and thus connects $\bar{G}_r^L$ to $\bar{G}_r^R$. It is a dummy edge and therefore has zero cost.

Let

\begin{equation}
    \Bar{E}_r = \Bar{E}_r^R \cup \Bar{E}_r^L.
\end{equation}

\begin{equation}
    \Bar{E}_r = \Bar{E}_r \cup \{((r,\zvec,L),(r,\zvec,R))\}.
\end{equation}

\begin{equation}
    \bar{G}_r = (\bar{V}_r,\bar{E}_r).
\end{equation}

\begin{definition} \label{de:valid_layer}
    We call a vector $\boldsymbol{I} \in Z^{m+1}$ {\em valid} if $\boldsymbol{t^{-}} \preceq \boldsymbol{I} \preceq \boldsymbol{t^{+}}$.
\end{definition} 

For every terminal pair $(s,t) \in D$, do the following, 

\begin{enumerate}
    \item Add new vertices $(s^t,\boldsymbol{I})$ and $(t^s,\boldsymbol{J})$ for all {\em valid} vectors $\boldsymbol{I},\boldsymbol{J}$  to $\bar{V}_r$.
    \item For all such $\boldsymbol{I}$ and $\boldsymbol{J}$ add edges $((s^t,\boldsymbol{I})(s,(\boldsymbol{I},L)))$ and $((t,(\boldsymbol{J},R))(t^s,\boldsymbol{J}))$ with zero cost to $E_r$.
    \item Now for every terminal pair $(s,t) \in D$ define:
    \begin{enumerate}
        \item terminal sets $S_{s,t} = \{(s^t,\boldsymbol{I}) \:\forall \text{ {\em valid} vectors }\boldsymbol{I}\}$,
        \item $T_{s,t} = \{(t^s,\boldsymbol{J}) | \:\forall \text{ {\em valid} vectors }\boldsymbol{J}\}$ and 
        \item relation $R_{s,t} = \{(s^t,\boldsymbol{I}),(t^s,\boldsymbol{J}) \in S_{s,t}\times T_{s,t} | (\boldsymbol{I}+\boldsymbol{J}) \preceq \bdgt(s,t) \}$.
    \end{enumerate}
\end{enumerate}

However, the above construction does not handle one specific case. Recall that for $i \in [p,p+c]$, $t^-[i] = -\tau$ for covering resources. However, for a given demand pair $(s_j,t_j)\in D$, a walk $p$ with $\res(p)[i] \leq -\tau \leq  \bdgt[i]$ may also be a feasible walk that is used to connect a demand pair. The above construction cannot track any walk with $\res(p)[i]\leq -\tau$. 

Observe that, since a covering resource always has $\boldsymbol{r_e}[i] \leq 0$, once a walk $p$ has $\res(p)[i] \leq \bdgt[i]$, adding more edges to the walk will never make $\res(p)[i] > \bdgt[i]$. So, if a walk has $\res(p)[i] \leq \bdgt[i]$, we don't need to track $\res(p)[i]$ any more. We just need to ensure that the walk leads to the root i.e., at this stage, we only need to preserve connectivity not the resource constraint. Thus, we don't need to build any extra layers beyond $\bdgt[i] >= t^-[i]$ for the $i^{th}$ resource. To remedy the issue mentioned in the previous paragraph, we add another set of edges. For every covering resource $i$, every valid vector $\hat{I}$ of the form $(\ldots,t^-[i],\ldots)$ and for every edge $(u,v) \in E$, we add an edge from $(u,\hat{I})$ to $(v,\hat{I})$ to $\bar{E_r}$. These edges will allow us to track walks with $\res(p)[i] \leq -\tau \leq \bdgt[i]$. Note that these new edges will not introduce any new feasible walk that does not already exist in the original graph. This is because any subwalk $p'$ with $\res(p')[i] \leq -\tau$ resource consumption cannot be combined with more edges to create a walk $p$ with $\res(p')[i] > -\tau$ since $\boldsymbol{r_e}[i] \leq 0$ for every edge when $i$ is a covering resource.

\paragraph{Relating the product graph with the original graph:}

Let us now create a simpler graph: Let $\hat{G}$ be a graph comprised of two copies of $G$ - namely $G_{-}$ and $G_{+}$ intersecting in the node $r$. For every vertex $u \in V$, let $u_{+}$ and $u_{-}$ denote the copies of $u$ in $G_{+}$ and $G_{-}$ respectively. We call this graph as the {\em intersection graph}. For convenience, we use $\validlayer$ to denote the following expression: $\validlayer = \prod_{i=1}^{m} \left(O(\boldsymbol{t^{-}}[i]+ \boldsymbol{t^{+}}[i] \right))$.

The following two claims will relate the product graph $\bar{G_r}$ with the intersection graph $\hat{G}$ (and thus indirectly the original graph $G$).

\begin{claim} \label{cl:junctree_layeredoriginalcomparison}
    For any $f > 0$, and set of terminal pairs $D' \subseteq D$, assume there exists a subgraph $\hat{H}$ in $\hat{G}$ of total cost $\leq f$ containing a walk of resource consumption at most $\bdgt(s,t)$ from $s_+$ to $t_-$ for every $(s,t) \in D'$. Then there exists a subgraph $\bar{H_r}$ in $\bar{G_r}$ of total cost $\leq f \cdot \validlayer$ containing a walk from $(s^t, \boldsymbol{I})$ to $(t^s, \boldsymbol{J})$ such that $((s^t,\boldsymbol{I}),(t^s,\boldsymbol{J})) \in R_{s,t}$ for every $(s,t) \in D'$. 

    \end{claim}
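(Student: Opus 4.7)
The plan is to lift each feasible $s_+\leadsto t_-$ walk in $\hat{H}$ to a walk in $\bar{G}_r$ that connects some pair $(s^t,\boldsymbol{I})$ to $(t^s,\boldsymbol{J})$ with $((s^t,\boldsymbol{I}),(t^s,\boldsymbol{J})) \in R_{s,t}$, and to take $\bar{H}_r$ to be the union of these lifts together with the zero-cost dummy-terminal edges and the zero-cost edge $(r,\zvec,L)\to(r,\zvec,R)$. First, for each $(s,t)\in D'$, I would fix a walk $\hat{P}_{s,t}\subseteq\hat{H}$ from $s_+$ to $t_-$ with $\res(\hat{P}_{s,t})\preceq\bdgt(s,t)$; since $G_+$ and $G_-$ meet only at $r$, I split $\hat{P}_{s,t}$ there into $P^+_{s,t}$ in $G_+$ and $P^-_{s,t}$ in $G_-$ and set $\boldsymbol{I}_{s,t}:=\res(P^+_{s,t})$, $\boldsymbol{J}_{s,t}:=\res(P^-_{s,t})$. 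Then $\boldsymbol{I}_{s,t}+\boldsymbol{J}_{s,t}\preceq\bdgt(s,t)$, so $((s^t,\boldsymbol{I}_{s,t}),(t^s,\boldsymbol{J}_{s,t}))\in R_{s,t}$, and both vectors are valid layers: their last $m$ coordinates remain in $[\boldsymbol{t^-}[i],\boldsymbol{t^+}[i]]$ by the sign restriction on each resource, and their first coordinate is a nonnegative integer at most $\bdgt_{max}[1]$.

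Next, I would construct the lift using running partial sums. Writing $P^+_{s,t}=s=u_0\to u_1\to\cdots\to u_\ell=r$ and defining $\boldsymbol{A}_j:=\boldsymbol{I}_{s,t}-\sum_{i\le j}\boldsymbol{\bar{r}}_{(u_{i-1},u_i)}$, the $L$-side lift is $(s^t,\boldsymbol{I}_{s,t})\to(s,\boldsymbol{I}_{s,t},L)\to(u_1,\boldsymbol{A}_1,L)\to\cdots\to(r,\zvec,L)$, which uses only edges in $\bar{E}_r^L$ because consecutive layer labels differ by the corresponding $\boldsymbol{\bar{r}}_{(u_{i-1},u_i)}$ and each $\boldsymbol{A}_j$ is a valid layer. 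A symmetric $R$-side lift of $P^-_{s,t}$, joined via $(r,\zvec,L)\to(r,\zvec,R)$ and the zero-cost terminal edges, yields the required walk from $(s^t,\boldsymbol{I}_{s,t})$ to $(t^s,\boldsymbol{J}_{s,t})$ in $\bar{G}_r$.

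Finally, I would bound the cost of $\bar{H}_r$ by showing that each $(u,v)\in\hat{H}$ contributes at most $|\validlayer|$ distinct product-graph copies. The key structural observation is that the last $m$ coordinates of any layer label attained by a lifted walk lie in the bounded integer box $\prod_{i=2}^{m+1}[\boldsymbol{t^-}[i],\boldsymbol{t^+}[i]]$ of cardinality $|\validlayer|$; I would then reroute the lifts so that the first coordinate of the layer label at each vertex is canonical (for instance, equal to the minimum first-resource consumption of any realized subwalk between $r$ and that vertex, which is well-defined because the first resource admits no negative cycles). Once canonical, the layer label at the tail of $(u,v)$ is determined by its last $m$ coordinates and the multiplicity per $\hat{H}$-edge is capped at $|\validlayer|$, yielding $\cost(\bar{H}_r)\le f\cdot|\validlayer|$. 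The hard part is exactly this rerouting: naively the first-coordinate label ranges over $O(Tn)$ integer values and would inflate the bound by a $T\cdot n$ factor, so collapsing first-coordinate layers per vertex without violating the remaining $m$ resource constraints of any lifted walk is the structural insight that makes the overcounting depend on only $m$ of the $m+1$ resources, as emphasized in the overview and absent from the single-resource construction of \cite{chlamtavc2020approximating}.
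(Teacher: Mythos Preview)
Your proposal is correct and follows essentially the same approach as the paper. Both arguments split each $s_+\leadsto t_-$ walk at $r$, lift the two halves into $\bar G_r^L$ and $\bar G_r^R$ via running partial sums, and then argue that per configuration of the last $m$ coordinates only one copy of each $\hat H$-edge is needed; the paper phrases this last step as ``for a specific configuration, a specific assignment of the first resource will dominate all other assignments,'' which is exactly your canonical-first-coordinate rerouting.

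One small refinement is worth making explicit. Your canonical value ``minimum first-resource consumption of any \emph{realized} subwalk between $r$ and that vertex'' should be taken as the minimum over \emph{all} $u\to r$ walks in $\hat H$ with the given last-$m$ suffix-consumption $c$, not just suffixes of the chosen $P^+_{s,t}$'s. With this choice the canonical labels satisfy the Bellman-type identity $a^*(u,c)=\boldsymbol{r}_{(u,v)}[1]+a^*(v,c')$ along the selected out-edge, so the rerouted lifts form an in-arborescence in $\bar G_r^L$ whose vertex set is contained in $\{(u,(a^*(u,c),c),L)\}$; this is what guarantees at most one product-graph copy of each $\hat H$-edge per configuration. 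Restricting to realized subwalks can leave the head of a rerouted edge at a non-canonical label, reintroducing extra copies. Since rerouting can only decrease the first coordinate while keeping the last $m$ coordinates fixed, the new $\boldsymbol I'$ still satisfies $\boldsymbol I'+\boldsymbol J\preceq\bdgt(s,t)$, so membership in $R_{s,t}$ is preserved.
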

    
\begin{proof}
     Let $p(s_i,t_i)$ be a $s_i \leadsto t_i$ walk in $\hat{H}$. Due to the construction of $\hat{G}$, this walk is forced to go through the root $r$ and $\res(p(s_i,t_i) \leq \bdgt(s,t)$. Let $p(s_i,r)$ be a subwalk of $p(s_i,t_i)$ that ends at $r$ and let $\res(p(s_i,r)) = \boldsymbol{I}$. By construction $\boldsymbol{t^-} \preceq \boldsymbol{I} \preceq \boldsymbol{t^+}$. Similarly, let $p(r,t_i)$ be a $r \leadsto t_i$ subwalk of $p(s_i,t_i)$ and let $\res(p(s_i,r)) = \boldsymbol{J}$. By construction $\boldsymbol{t^-} \preceq \boldsymbol{J} \preceq \boldsymbol{t^+}$. 
        
    Observe that $\boldsymbol{J} + \boldsymbol{I} \preceq \bdgt(s,t)$. Additionally,  by construction $(s^t,\boldsymbol{I}),(t^s,\boldsymbol{J}) \in \bar{V}_r$ and thus $((s^t,\boldsymbol{I}),(t^s,\boldsymbol{J})) \in R_{s,t}$. Note that $(s^t,\boldsymbol{I})$ and $(t^s,\boldsymbol{J})$ can be connected by a walk analogous to $p(s_i,t_i)$. Compile all such walks and we will have $\bar{H_r}$.

    For the cost note that we effectively use the same set of edges but since $\bar{G_r}$ has multiple copies of the edges from $G$, it could over count some edges. This issue (which does not arise in \cite{chlamtavc2020approximating}) is why we need to be careful about how many layers we build. In \cite{chlamtavc2020approximating}, we have $m=0$ and therefore the layers we include have a dominating set of size $1$ (i.e., when we are trying to reach the root, there is no need to consider an edge at a distance of $i$ from the root when we include a different copy of the same edge at a distance of $j$ from the root with $j < i$). That is not the case when $m \geq 1$. 
    
    For any layer $\boldsymbol{I}$, we call the assignment of the last $m$ elements of $\boldsymbol{I}$ as its configuration. We now note that for a specific configuration, we need only one copy of an edge. This is because when we fix a configuration, a specific assignment of the first resource will dominate all other assignments (and thus we have a dominating set of size $1$ after we fix a configuration). In total, the number of configurations we have is given by $\validlayer$ and that proves the rest of the claim.

\end{proof}

\begin{claim}

    For any $f > 0$, assume there exists a subgraph $\bar{H_r}$ in $\bar{G_r}$ of total cost $\leq f$ containing a walk from from $(s^t, \boldsymbol{I} )$ to $(t^s, \boldsymbol{J} )$ such that $((s^t,\boldsymbol{I}),(t^s,\boldsymbol{J})) \in R_{s,t}$ for every $(s,t) \in D'$. Then there exists a subgraph $\hat{H}$ in $\hat{G}$ of total cost $\leq f$ containing a walk of resource consumption at most $\bdgt(s,t)$ from $s_+$ to $t_-$ for every $(s,t) \in D'$.
\end{claim}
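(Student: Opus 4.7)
The plan is to invert the construction used in the previous claim: project each walk in $\bar{G}_r$ down to a walk in the intersection graph $\hat{G}$, verify the resource budget is satisfied via a telescoping identity on layer labels, and note that projection can only collapse edges together so the cost bound is immediate.

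First, I would fix a demand pair $(s,t) \in D'$ together with a chosen $\bar{p}(s,t)$, a walk in $\bar{H_r}$ from $(s^t,\boldsymbol{I})$ to $(t^s,\boldsymbol{J})$ for some pair with $((s^t,\boldsymbol{I}),(t^s,\boldsymbol{J})) \in R_{s,t}$. A key structural observation is that the only edge of $\bar{G}_r$ joining an $L$-vertex to an $R$-vertex is the dummy edge $((r,\zvec,L),(r,\zvec,R))$: the sets $\bar{E}_r^L$ and $\bar{E}_r^R$ are internally confined to their respective sides, and the $(s^t,\cdot)$ and $(t^s,\cdot)$ vertices only touch $L$ and $R$ respectively. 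Therefore $\bar{p}(s,t)$ must decompose as a walk from $(s,\boldsymbol{I},L)$ to $(r,\zvec,L)$ inside the $L$-side, the dummy edge crossing, and a walk from $(r,\zvec,R)$ to $(t,\boldsymbol{J},R)$ inside the $R$-side, bookended by the two zero-cost edges from the dummy terminals.

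Next, define the projection $\pi$ by $(u,\boldsymbol{X},L)\mapsto u_+$, $(u,\boldsymbol{X},R)\mapsto u_-$, $(s^t,\boldsymbol{I})\mapsto s_+$, $(t^s,\boldsymbol{J})\mapsto t_-$, and $(r,\zvec,L),(r,\zvec,R)\mapsto r$. Applying $\pi$ edgewise to $\bar{p}(s,t)$ yields a walk $p(s,t)$ in $\hat{G}$ from $s_+$ through $r$ to $t_-$; this is well-defined because whenever $((u,\boldsymbol{I'},L),(v,\boldsymbol{J'},L)) \in \bar{E}_r^L$ we have $(u,v) \in E$, and analogously on the $R$-side, so each non-dummy edge of $\bar{p}(s,t)$ maps to a genuine edge of $G_+$ or $G_-$ in $\hat{G}$, while the dummy edges map to the identity at $r$ and carry no cost. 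To bound the resource consumption, telescope the layer labels: on the $L$-side every edge satisfies $\boldsymbol{I'}-\boldsymbol{J'}=\boldsymbol{\bar{r}}_{(u,v)}$, so summing from the starting label $\boldsymbol{I}$ to the ending label $\zvec$ gives exactly $\boldsymbol{I}$; on the $R$-side every edge satisfies $\boldsymbol{J'}-\boldsymbol{I'}=\boldsymbol{\bar{r}}_{(u,v)}$, so summing from $\zvec$ to $\boldsymbol{J}$ gives exactly $\boldsymbol{J}$. Hence $\res(p(s,t))=\boldsymbol{I}+\boldsymbol{J}\preceq \bdgt(s,t)$ by definition of $R_{s,t}$.

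Finally, let $\hat{H}$ be the union over $(s,t) \in D'$ of the edges of $\pi(\bar{p}(s,t))$. Each non-dummy edge of $\bar{H_r}$ projects to at most one edge in $\hat{G}$ of the same cost, while dummy edges contribute zero, so $\cost(\hat{H}) \le \cost(\bar{H_r}) \le f$. Unlike the forward direction in Claim~\ref{cl:junctree_layeredoriginalcomparison}, where we had to pay a $|\validlayer|$ factor to account for the many layer-copies of a single $G$-edge, here many copies can only collapse together, so no blowup occurs. The only ``obstacle'' is really a sanity check, namely verifying that the decomposition into an $L$-side subwalk and an $R$-side subwalk is forced by the structure of $\bar{E}_r$ and that the telescoping is done against the correct sign convention on each side; both are immediate from the construction.
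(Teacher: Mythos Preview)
Your proof is correct and follows essentially the same approach as the paper: project the product-graph walk through the forced crossing at $(r,\zvec,L)\to(r,\zvec,R)$ back to $\hat{G}$, use the layer labels to certify the resource bound, and observe that projection cannot increase cost. The paper's own proof is far terser (it appeals to ``the construction of $\bar{G}_r$'' without spelling out the telescoping or the $L/R$ decomposition), so your version is in fact a more complete rendering of the same argument.
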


\begin{proof}
    This claim is much simpler. Let $p(s_i,t_i)$ be a $s_i \leadsto t_i$ walk in $\bar{H_r}$. Due to the construction of $\bar{G_r}$, this walk is forced to go through the root $r$ and $\res(p(s_i,t_i) \leq \bdgt(s,t)$. Due to the construction of $\bar{G_r}$, we can see that there is a $s_i \leadsto r \leadsto t_i$ walk $p'$ in $\hat{G}$ with the same cost and resource consumption $\preceq \res(p(s_i,t_i)) \preceq \bdgt(s,t)$. 

    Compile all such walks and we have $\hat{H}$. The overall cost of this set $\leq f$ since we can reuse the same set of edges (in practice the cost can be lower since $\hat{G}$ has only two duplicate copies of any edge unlike $\bar{G_r}$ which has several duplicate copies).

\end{proof}

\paragraph{Runtime:} 
Our runtime so far is polynomial in $|\bar{G}_r|$ (for any graph $G$, we use $|G|$ to denote the number of vertices in $G$). Note that $|\bar{G}_r|$ is a polynomial in $O(n \cdot \Pi_{i=1}^{i=m+1}(|\boldsymbol{t^{-}}[i]| + |\boldsymbol{t^{+}}[i]| ))$.

\paragraph{To summarize,} we have turned all budget constraints into connectivity constraints so far. We keep track of the budget constraints using some relations. 

\subsection{Using Height reduction}

We now state the height reduction lemma from \cite{cekp} here which is in turn a modification of the height reduction result from \cite{helvig2001improved}.

\begin{restatable}{lemma}{lemmahrbb}
    \label{le:height_reduction_bb} \cite{cekp}
    (\textbf{Height Reduction}) Given a directed graph $G=(V,E)$ with edge costs $c(e)$, for all $h>0$, we can efficiently find a downward directed, layered graph $G_r^{down}$ on $(h+1)$ levels and edges (with new edge costs) only between consecutive levels going from top (level $0$) to bottom (level $h$), such that each layer has $n$ vertices corresponding to the vertices of $G$, and, for any set of terminals $S$ and any root vertex $r$,
    \begin{itemize}
        \item  the optimal objective value of the single-source problem to connect $r$ (at level $0$) with $h$ (at level $h$) on the graph $G^{down}_r$ is at most $O(hk^{1/h})\rho$, where $\rho$ is the objective value of an optimal solution of the same instance on the original graph G,
        
        \item given an integral (fractional solution) of objective value $\rho$ for the single-sink problem to connect $r$ with $S$ on the graph $G^{down}_r$, we can efficiently recover an integral (fractional solution) of objective value at most $\rho$ for the problem on the original graph $G$.
    \end{itemize}
    In the same way, we can obtain a upward directed, layered graph $G^{up}_r$ on $(h+1)$-levels with edges going from bottom to top, satisfying the same properties as above except for single-sink (as opposed to single-source) instances instead.
\end{restatable}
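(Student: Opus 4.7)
The plan is to use the classical recursive height-reduction construction originating from Zelikovsky and refined by Helvig, Robins, and Zelikovsky. I will describe $G_r^{down}$ explicitly; the $G_r^{up}$ case follows by reversing all edges and swapping the roles of root and terminals. Take vertex set $V \times \{0, 1, \ldots, h\}$ with the root identified as $(r, 0)$ and the terminal set placed at level $h$; for every ordered pair $(u, v) \in V \times V$ and every $i \in \{0, \ldots, h-1\}$, add an edge from $(u, i)$ to $(v, i+1)$ whose new cost equals the shortest-path distance $d_G(u, v)$, precomputed once by an all-pairs shortest-paths run on $G$. This already produces the claimed layered structure with the correct number of levels.

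The recovery direction is immediate: each layered edge $((u,i),(v,i+1))$ is witnessed by a $u \leadsto v$ path $P_{u,v}$ in $G$ of equal cost, so given any integral (or fractional flow) solution of value $\rho$ on $G_r^{down}$, replacing each chosen layered edge by $P_{u,v}$ (respectively routing each unit of flow along $P_{u,v}$) yields a solution on $G$ of cost at most $\rho$.

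The main technical direction bounds the optimum on $G_r^{down}$ by $O(h k^{1/h}) \rho$ and is proved by induction on $h$. Fix any optimal arborescence $T \subseteq G$ rooted at $r$ spanning the terminal set $S$ with $\mathrm{cost}(T) = \rho$. For $h = 1$ use a single-level edge from $(r, 0)$ to each $(t, 1)$; each costs $d_G(r,t) \leq \rho$, summing to $k\rho = O(h k^{1/h}) \rho$. For the inductive step, greedily cut $T$ by scanning it in post-order and detaching a subtree as soon as its terminal count first reaches the threshold $k^{(h-1)/h}$. This yields at most $p = O(k^{1/h})$ subtrees $T_1, \ldots, T_p$ rooted at vertices $v_1, \ldots, v_p$, each containing at most $k^{(h-1)/h}$ terminals, with one extra subtree absorbing any leftover terminals near $r$. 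Spend $d_G(r, v_i) \leq \rho$ on the top layer to reach each $v_i$ (total $O(k^{1/h}) \rho$), and inside each $T_i$ invoke the inductive hypothesis with $h-1$ layers on at most $k^{(h-1)/h}$ terminals, producing cost at most $O\bigl((h-1)\cdot (k^{(h-1)/h})^{1/(h-1)}\bigr) \cdot \mathrm{cost}(T_i) = O((h-1) k^{1/h}) \cdot \mathrm{cost}(T_i)$. Since the $T_i$'s are edge-disjoint subtrees of $T$, summing over $i$ yields total cost $O(k^{1/h}) \rho + O((h-1) k^{1/h}) \rho = O(h k^{1/h}) \rho$.

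The main obstacle will be verifying that the greedy cut genuinely produces $p = O(k^{1/h})$ subtrees with $\Theta(k^{(h-1)/h})$ terminals apiece: one must process $T$ in post-order and be careful about overshoot when a single child's subtree already exceeds the threshold, folding the remainder into a residual subtree. This, together with checking that fractional flows are preserved under the edge-to-path substitution in the recovery step, is the only non-routine work; everything else is bookkeeping that telescopes cleanly through the induction.
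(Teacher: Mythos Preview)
The paper does not prove this lemma; it is quoted verbatim as a black box from \cite{cekp} (itself a refinement of \cite{helvig2001improved}), and is invoked only to pass from the product graph $\bar G_r$ to the shallow tree $\bar T_r$. So there is no ``paper's own proof'' to compare against.

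That said, your sketch is the standard and correct argument from the Zelikovsky / Helvig--Robins--Zelikovsky line: the metric-closure layered graph on $V\times\{0,\dots,h\}$ with edge costs $d_G(u,v)$, recovery by substituting shortest paths, and the $O(hk^{1/h})$ bound via the recursive subtree-splitting at threshold $k^{(h-1)/h}$. The induction telescopes exactly as you wrote, and the two caveats you flag (overshoot in the greedy cut, and flow preservation under path substitution) are indeed the only points requiring care. One small remark: for the integral recovery direction you actually get cost $\le \rho$ (edges shared among several $P_{u,v}$'s are paid once), whereas for the fractional direction your flow-routing argument gives cost exactly $\rho$; both suffice for the ``at most $\rho$'' claim.
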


We now apply Lemma \ref{le:height_reduction_bb} on $(\bar{G}_r^R = (\bar{V}_r^R,\bar{E}_r^R))$ and $(\bar{G}_r^L = (\bar{V}_r^L,\bar{E}_r^L))$ separately, and obtain two new $(h+1)$ layered graphs $\bar{G}^{up}_r$ and $\bar{G}^{down}_r$ where $h$ is some positive integer which depends on $1/\eps$. We create a new graph $\bar{T_r}$ by adding an edge from the root $r^{up}$ in $\bar{G}^{up}_r$ to the root $r^{down}$ in $\bar{G}^{down}_r$. We can only keep track of the terminal vertices after using Lemma \ref{le:height_reduction_bb} and that is sufficient for our needs. Let $\psi: V(\bar{T_r}) \mapsto V(\bar{G}_r)$ be a mapping that relates terminals in the height reduced graph $\bar{T_r}$ to the terminals in the product graph $\bar{G}_r$.

Let $w_r$ be the cost on the graph $\bar{T}_r$. Let $S_{s,t}^r = \{\psi^{-1}((s^t,\boldsymbol{I})) \: \forall (s^t,\boldsymbol{I}) \in S_{s,t}\}$ contain all terminals in $\bar{T}_r$ that correspond to the terminals in $S_{s,t}$. Similarly, let $T_{s,t}^r = \{\psi^{-1}((t^s,\boldsymbol{I})) \: \forall (t^s,\boldsymbol{I}) \in T_{s,t}\}$ contain all terminals in $\bar{T}_r$ that correspond to the terminals in $T_{s,t}$ . Finally, set $R_{s,t}^r = \{(s^t_r,\boldsymbol{I}),(t^s_r,\boldsymbol{J}) \in S_{s,t}^r\times T_{s,t}^r \mid \boldsymbol{I}  +\boldsymbol{J}\preceq \bdgt(s,t) \} $.

We now define the \mslc problem.

\begin{definition}
    In the \mslc problem, we have a directed graph $G = (V,E),$ two collections of disjoint vertex sets $\hat{S},\hat{T} \subseteq 2^V$, a collection of set pairs $P\subseteq \hat{S} \times \hat{T},$ and for each set pair $(S,T) \in P,$ a relation $R(S,T) \subseteq S \times T$ and non-negative edge costs $c: E \to \R_{\geq 0},$. The objective here is to find an edge set $F\subseteq E$ that minimizes the ratio
    \begin{equation*}
        \frac{\sum_{e \in F}c(e)}{|\{(S,T)\in P \mid \exists (s,t) \in 
 R(S,T):\text{ $F$ contains an $s \leadsto t$ walk }\}|}.
    \end{equation*}
\end{definition}

Now, to prove Theorem\ref{thm:mdrcjt_vanilla}, we just need to show that we can achieve an $O(n^\ep)$ approximation for the \mslc instance $(\bar{T}_r,\{S_{s,t}^r, T_{s,t}^r, R_{s,t}^r \mid (s,t) \in D\},w_r)$ obtained from our reduction. In other words, it suffices to show the following lemma.

\begin{lemma} \label{le:junction_tree_intermediate}
    In the given setting, there exists a $O\left(\log^3 |\bar{G_r}|\right) = O\left(\log^3 \left(n \cdot \Pi_{i=0}^{i=m}\left(\boldsymbol{t^{-}}[i] + \boldsymbol{t^{+}}[i] \right)\right)\right)$ approximation algorithm for the following problem that runs in time polynomial in $|\bar{G_r}|$.

    \begin{itemize}
        \item Find an edge set $F \subseteq T_r$ minimizing the ratio
        \begin{equation*}
            \frac{\sum_{e \in F}w_r(e)}{\mid\{(s,t)\in P \mid \exists (\hat{s},\hat{t}) \in 
 R_{s,t}^r:\text{ F contains an $\hat{s}$-$\hat{t}$ walk }\}|}.
        \end{equation*}
    \end{itemize}
\end{lemma}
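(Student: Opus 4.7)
The plan is to follow the minimum density Steiner label cover template of Chlamt\'a\v{c}--Dinitz--Kortsarz--Laekhanukit, adapted to handle the multiple-resource pruning required by our construction. First, write the standard flow-based LP for the MSLC instance $(\bar{T}_r,\{(S_{s,t}^r,T_{s,t}^r,R_{s,t}^r)\}_{(s,t)\in D},w_r)$: introduce a variable $x_e\in[0,1]$ for each edge $e\in\bar{E}_r$ and, for every demand pair $(s,t)\in D$, a pair-indicator $z_{s,t}\in[0,1]$ together with indicators $y_{\hat s},y_{\hat t}$ for the copies $\hat s\in S_{s,t}^r$, $\hat t\in T_{s,t}^r$. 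The constraints enforce (i) $\sum_{(\hat s,\hat t)\in R_{s,t}^r} (\text{flow from }\hat s\text{ to }\hat t) \ge z_{s,t}$ via fractional cuts in the shallow tree $\bar{T}_r$, (ii) $y_{\hat s}\le \sum_{\hat t:(\hat s,\hat t)\in R_{s,t}^r} f_{\hat s,\hat t}$ and symmetrically for $y_{\hat t}$, and (iii) $x_e$ upper-bounds the flow through $e$. The objective minimizes $\sum_e w_r(e)x_e$ subject to $\sum_{(s,t)} z_{s,t}\ge 1$. Since $\bar{T}_r$ is a shallow tree-like graph with $h+1$ levels, this LP is polynomial-size and solvable in $\poly(|\bar{G}_r|)$ time.

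Next comes the main obstacle: rounding. A naive rounding would pick a copy $\hat s=(s^t,\boldsymbol{I})\in S_{s,t}^r$ and a copy $\hat t'=(t^s,\boldsymbol{J}')\in T_{s,t}^r$ with $(\hat s,\hat t')\notin R_{s,t}^r$, so the rounded walk would violate some resource budget. I would extend the two-phase pruning idea of the prior work to $m+1$ phases, one per resource coordinate. In phase $i$, call $\hat s=(s^t,\boldsymbol{I})\in S_{s,t}^r$ \emph{bad for resource $i$} if there exists $\hat t=(t^s,\boldsymbol{J})\in T_{s,t}^r$ with $y_{\hat t}>0$ and $\boldsymbol{I}[i]+\boldsymbol{J}[i]>\bdgt(s,t)[i]$, and symmetrically for $T_{s,t}^r$. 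On one side of the pair (whichever has the larger fractional contribution of good mass on coordinate $i$), discard all bad copies and rescale. The key structural observation I would establish is: within a single phase, the fractional mass of copies that remain \emph{good for every other resource $j\ne i$} drops by at most a factor of $2$, because the discarded copies on one side can be replaced using the half of the contributions that was previously carrying the weight for coordinate $i$. Iterating over all $m+1$ coordinates costs an overall factor of $2^{m+1}$; since $\tau^m$ already appears in our target ratio this is absorbed.

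After pruning, the surviving terminals satisfy: for every $(s,t)$, any pair of surviving $(\hat s,\hat t)\in S_{s,t}^r\times T_{s,t}^r$ already lies in $R_{s,t}^r$. The instance therefore collapses to a standard Group Steiner Forest/Label Cover instance on the shallow graph $\bar{T}_r$, with group pairs $\{(S_{s,t}^{r,\mathrm{good}},T_{s,t}^{r,\mathrm{good}})\}$ and no cross-pair compatibility constraint. At this point I would invoke the Group Steiner rounding of Garg--Konjevod--Ravi (as used in \cite{chlamtavc2020approximating,chekuri2011set}) on the fractional solution restricted to surviving terminals; on a tree of depth $h+1$ this rounding loses a factor $O(\log|\bar{G}_r|\cdot\log k)$, and combined with the $O(\log|\bar{G}_r|)$ loss incurred by reducing the bi-directional Steiner label cover instance to a tree Group Steiner instance (via the $r^{up}\to r^{down}$ root edge) the total integrality gap is $O(\log^3|\bar{G}_r|)$.

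Finally, I would verify correctness of the returned tree $T\subseteq\bar{T}_r$: by the surviving-pair property every satisfied demand pair $(s,t)$ admits an $\hat s\leadsto \hat t$ walk in $T$ with $(\hat s,\hat t)\in R_{s,t}^r$; the density bound follows by comparing the rounded cost against the LP value, using that the LP value lower-bounds the optimal integral MSLC density. The expected main difficulty is establishing the ``only half the good mass is lost per phase'' lemma, because the symmetry between $S_{s,t}^r$ and $T_{s,t}^r$ is delicate when pair-compatibility is a coordinatewise inequality rather than a single scalar threshold; I would handle this by, in each phase, choosing the pruning side adaptively based on which side carries at least half of the currently good fractional mass on coordinate $i$, then arguing via LP complementary slackness that the surviving mass on the other side is undisturbed.
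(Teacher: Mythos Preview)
Your high-level template (LP relaxation $\to$ per-resource pruning $\to$ Group Steiner rounding on the shallow tree) matches the paper's. The real content of the lemma, however, is the pruning step, and here your proposal diverges from the paper and, as written, does not work.

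Your definition of ``bad for resource $i$'' is existential: $\hat s$ is bad if \emph{some} $\hat t$ with $y_{\hat t}>0$ violates coordinate $i$. You then propose to keep only the good copies on whichever side has more good mass. But a single sink copy $\hat t^\star$ with large $\boldsymbol{J}[i]$ and arbitrarily small positive mass makes essentially every source copy bad; symmetrically a single outlier source makes almost every sink bad. Hence both sides can simultaneously have negligible good mass, and adaptively choosing a side does not save you. The ``complementary slackness'' remark does not address this, and there is no reason to expect a factor-$2$ retention per phase under your definition. The paper's mechanism is different: for each coordinate $c$ it sorts the surviving source (respectively sink) copies by their $c$-th consumption, finds the median consumption $\mu(s^t,c)$ (respectively $\mu(t^s,c)$) with respect to the current LP mass, and keeps copies with consumption at most the median on \emph{both} sides. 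The non-obvious step (Lemma~\ref{le:junc_pruner}) is showing $\mu(s^t,c)+\mu(t^s,c)\le \bdgt(s,t)[c]$, which follows because the sources at or above $\mu(s^t,c)$ carry at least half the remaining mass and can only route to sinks with consumption at most $\bdgt(s,t)[c]-\mu(s^t,c)$. This both-sided median argument is what guarantees the $2^{-(m+1)}$ retention and the product property $\tilde S_{s,t}^r\times\tilde T_{s,t}^r\subseteq R_{s,t}^r$; your one-sided existential pruning does not.

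A second gap: you are missing the bucketing step, and your accounting of the three logarithms is off. After pruning, the paper sets $\gamma_{s,t}=\sum_{(\hat s,\hat t)\in R_{s,t}^r} y_{\hat s,\hat t}$, buckets pairs by $\gamma_{s,t}\in(2^{-i-1},2^{-i}]$, picks a bucket $D_{i^*}$ with total $\gamma$-mass $\Omega(1/\log|D|)$, and scales $x_e$ by $2^{m+1}\cdot 2^{i^*+1}$ before rounding. The GKR rounding on the tree then contributes $O(\log^2|\bar G_r|)$, and the bucketing contributes the remaining $O(\log|D|)\le O(\log|\bar G_r|)$. Your attribution of the third log to ``reducing the bi-directional SLC to a tree GST via the $r^{up}\to r^{down}$ edge'' is incorrect; that edge is free and the reduction is lossless.
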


\subsection{Proof of Lemma \ref{le:junction_tree_intermediate} - Reduction to \gsf}
\paragraph{Preliminary LP formulation:}
We follow a structure similar to that of \cite{chlamtavc2020approximating} here for the proof of \ref{le:junction_tree_intermediate}. The only change we need to make is in the relation $R_{s,t}^r$ that we are using. This makes it necessary for us to make some careful and very important changes in the overall proof as we shall see soon. The following LP is identical to the LP in \cite{chlamtavc2020approximating} (with slightly different notation) with the sole exception of the relation. Note that the objective in the following LP tries to minimize the density of the solution obtained which is what we need for the junction tree. 

\begin{subequations} \label{lp:junc_flow}
\begin{align} 
& \min & & \sum_{e \in E(T_r)}{w_r(e) \cdot x_e} \\
& \text{subject to}
& & \sum_{(s,t) \in D} \sum_{(\hat{s},\hat{t}) \in {R}_{s,t}^r} y_{\hat{s},\hat{t}} = 1 \label{eq:lpjunc_flow_1}\\ 
& & & \sum_{\hat{t}:(\hat{s},\hat{t}) \in R_{s,t}^r} y_{\hat{s},\hat{t}} \leq z_{\hat{s}} & \forall (s,t) \in D, \hat{s} \in S_{s,t}^r \label{eq:lpjunc_flow_2}\\ 
& & & \sum_{\hat{s}:(\hat{s},\hat{t}) \in R_{s,t}^r} y_{\hat{s},\hat{t}} \leq z_{\hat{t}} & \forall (s,t) \in D, \hat{t} \in T_{s,t}^r  \label{eq:lpjunc_flow_3}\\
& & & \text{ capacities } x_e \text{ support } z_{\hat{s}} \text{ flow from } \hat{s} \text{ to } \hat{r} & \forall (s,t) \in D, \hat{s} \in S_{s,t}^r \label{eq:lpjunc_flow_4}\\
& & & \text{ capacities } x_e \text{ support } z_{\hat{t}} \text{ flow from } \hat{r} \text{ to } \hat{t} & \forall (s,t) \in D, \hat{t} \in T_{s,t}^r  \label{eq:lpjunc_flow_5}\\
& & &  y_{\hat{s},\hat{t}} \geq 0 & \forall (s,t) \in D, (\hat{s},\hat{t}) \in R_{s,t}^r \\
& & & x_e \geq 0 & \forall e \in E(T_r)
\end{align}
\end{subequations}

\paragraph{Pruning the set of representatives:} 
We now need to prune the sets of representatives $S_{s,t}^r$ and $T_{s,t}^r$. This is to ensure that we can transform a solution to the above LP to an LP relaxation for GROUP STEINER TREE on the tree $T_r$ (which has an effective rounding scheme) where we can pick representatives independently while still fulfilling the resource constraints. Our overall approach here is similar to \cite{chlamtavc2020approximating} but we have to repeat their procedure for $m$ times taking a smaller and smaller set of potential terminals in each round. This means we would need a bigger scaling parameter, but if $m$ is small enough, it wouldn't affect the approximation factor significantly.

In the following discussion, when we use the term LP value, we mean $y_{\hat{s},\hat{t}}$ (where $(\hat{s},
\hat{t})\in R_{s,t}^r$). More formally, we need to find representative sets $\tilde{S}_{s,t}^r \subseteq S_{s,t}^r$ and $\tilde{T}_{s,t}^r \subseteq T_{s,t}^r$ such that $\tilde{S}_{s,t}^r \times \tilde{T}_{s,t}^r \subseteq R_{s,t}^r$ - this ensures that no matter what representative we pick on each set if they are connected - it will be within the resource requirements. The challenge here is to ensure we don't lose too much of the LP value i.e., $y_{\hat{s},\hat{t}}$ while we do so. 

\cite{chlamtavc2020approximating} accomplishes this by sorting the representatives of each terminal by their distance labels, and taking all representatives upto the median representative (by summation of the LP value). If we do implement this sorting procedure on both sides of $T_r$, then the representatives we pick will be such that any combination of them will satisfy the single distance requirement they have, while also having sufficient LP value. While the above pruning procedure would be sufficient for one distance constraint (for a given $(s,t)$ pair) - it cannot satisfy multiple distance constraints for a single $(s,t)$ pair simultaneously. In fact, it is entirely unclear what an analogous sorting procedure might be - for sorting based on a specific resource $ i \in \{1,\ldots,m\}$ does not in any way guarantee good performance for another resource $j \neq i$. Further, it is extremely important to ensure that when we take a set of good representatives for a specific resource $i$, we don't disallow the good representatives for another resource $j$. 

We overcome this challenge with a simple yet delicate adaption of \cite{chlamtavc2020approximating}'s pruning procedure which is described below. We first sort the terminals according to their resource consumption for the first resource alone (breaking ties arbitrarily). We then prune the representatives which consume more of the first resource than the consumption of the median terminal, and then do the same for the second resource - but only among the representatives still alive - and repeat this procedure for the other resource. Note the subtle but extremely important change here - we only prune after the median resource consumption, not the median terminal. If we prune after the median terminal, we might get rid of terminals that are actually good enough for this specific resource(since they might consume the same value of this resource as the median terminal). Also, note that repeated pruning does mean that the LP value of our final set of representatives is much smaller, but we can compensate by using a big enough scaling parameter (when $m$ is a constant, the scaling parameter is also a constant). 

This technique only works because our initial graph construction ensures that the good representatives for any resource $i$ (roughly speaking, a representative $\hat{s} \in S_{s,t}^r$ is good with respect to a resource $i$ if the $i^{th}$ resource consumption associated with $\hat{s}$ can potentially meet the budget requirements for $(s,t) \in D$) are evenly distributed across the length measures for any other resource $j$. That is, all representatives in $\tilde{S}_{s,t}^r$ which are good with respect to a resource $i$ are evenly distributed over every possible value of any different resource $j$. This ensures that the proportion of representatives that are good for resource $j$ will remain the same before and after running the pruning procedure with respect to resource $i$.

We describe the formal pruning procedure in Algorithm \ref{alg:junc_pruning}, which is designed for a specific $(s,t)$ pair. We can then run this procedure for every $(s,t)\in D$ one after the other. Algorithms \ref{alg:junc_sorting}, \ref{alg:junc_median} are minor subroutines used by Algorithm \ref{alg:junc_pruning}.

\begin{algorithm}[!htb]
\caption{Representative set sorter $(\tilde{S}_{s,t}^r,c)$} \label{alg:junc_sorting}
\begin{algorithmic}[1]

\State $\text{Sort representative sets } \tilde{S}_{s,t}^r \text{ and } \tilde{T}_{s,t}^r$ in the non decreasing order using the $c^{th}$ element of their distance label vector.

\State In other words, Sort 
\[\tilde{S}_{s,t}^r = \{(s_r^t,I_1),(s_r^t,I_2),\ldots\} \]
so that for any 
\[I_1,I_2 \text{ if } ( I_1=(z_1^1,z_2^1,\ldots,z_c^1,\ldots)) \text{ and } I_2 = (z_1^2,z_2^2,\ldots,z_c^2,\ldots))\]
if $z_c^1 < z_c^2$ then $(s_r^t,I_1)$ comes before $(s_r^t,I_2)$ in the sorted order. Ties are broken arbitrarily.

\State \Return $\tilde{S}_{s,t}^r$
\end{algorithmic}
\end{algorithm}

\begin{algorithm}[!htb]
\caption{Representative set median consumption finder $(S_{s,t}^r, T_{s,t}^r, \lambda, R_{s,t}^r,\text{ a solution to LP \eqref{lp:junc_flow} },c)$} \label{alg:junc_median}
\begin{algorithmic}[1]

\State{$\mu(s^t,c)$ is the median resource consumption that we are trying to find now. It only cares about the consumption of resource $c$.}

\State{$\mu(s^t,c) := \min\{ q \mid \sum\limits_{i=1}^{i=q} \sum\limits_{\hat{t}:((s_r^t,J),\hat{t}) \in R_{s,t}^r, J[c]=i} y_{((s_r^t,J),\hat{t})}
\geq \lambda\}$} 

\State \Comment{The above line just tries to find the first value of resource consumption $\mu(s^t,c)$ for resource $c$ such that such that the sum of LP values for all terminals that use less than or equal to $\mu(s^t,c)$ of resource $c$ in the sorting is at least $\lambda$.}

\State{$\mu(t^s,c) := \min\{ q \mid \sum\limits_{i=1}^{i=q} \sum\limits_{\hat{s}:(\hat{s},(t_r^s,J)) \in R_{s,t}^r, J[c]=i} y_{(\hat{s},(t_r^s,J))} \geq \lambda\}$}

\State \Return $\mu(s^t,c),\mu(t^s,c)$
\end{algorithmic}
\end{algorithm}

\begin{algorithm}[!htb]
\caption{Representative set pruner $(S_{s,t}^r, T_{s,t}^r, m, \text{ a solution to LP \eqref{lp:junc_flow} },R_{s,t}^r)$} \label{alg:junc_pruning}
\begin{algorithmic}[1]

\State{$\tilde{S}_{s,t}^r \gets S_{s,t}^r$, $\tilde{T}_{s,t}^r \gets T_{s,t}^r$.}

\State{$\gamma_{s,t} \gets \sum_{(\hat{s},
\hat{t})\in R_{s,t}^r}y_{\hat{s},\hat{t}}$.}

\For{$c \in 0,1,2,\ldots,m$} \Comment{Go through all resource one at a time}

\State $\tilde{S}_{s,t}^r = \text{ Algorithm \ref{alg:junc_sorting} }(\tilde{S}_{s,t}^r,c)$ \Comment{Sort using Algorithm \ref{alg:junc_sorting}}.
 
\State $\tilde{T}_{s,t}^r = \text{ Algorithm \ref{alg:junc_sorting} }(\tilde{T}_{s,t}^r,c)$ .

\State $\mu(s^t,c),\mu(t^s,c) \gets \text{ Algorithm \ref{alg:junc_median} }(\tilde{S}_{s,t}^r,\tilde{T}_{s,t}^r,\gamma_{s,t}/2^c,R_{s,t}^r,\text{ a solution to LP \ref{lp:junc_flow} },c)$. \Comment{Find median resource consumption using \ref{alg:junc_median}}

\State{$\tilde{S}_{s,t}^r \gets \text{ All terminals in } \tilde{S}_{s,t}^r \text{ that use less than or equal to } \mu(s^t,c)$ \text{ of resource } c}

\State{$\tilde{T}_{s,t}^r \gets \text{ All terminals in } \tilde{T}_{s,t}^r \text{ that use less than or equal to } \mu(t^s,c)$ \text{ of resource } c}

\EndFor

\State \Return $\tilde{S}_{s,t}^r,\tilde{T}_{s,t}^r$
\end{algorithmic}
\end{algorithm}
The following lemma shows that Algorithm \ref{alg:junc_pruning} does not allow any terminal pairs representing $(s,t) \in D$ that cannot fulfill the resource budget limit for $(s,t)$ to stay alive after all iterations.

\begin{lemma} \label{le:junc_pruner}
    After running Algorithm \ref{alg:junc_pruning}, we have $\tilde{S}_{s,t}^r \times \tilde{T}_{s,t}^r \subseteq R_{s,t}^r$.
\end{lemma}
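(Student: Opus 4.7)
The plan is to reduce the lemma to a per-coordinate inequality on the medians, and then prove that inequality by an LP-mass accounting argument. Fix a demand pair $(s,t) \in D$ and observe that, by construction of the pruning step, after Algorithm \ref{alg:junc_pruning} terminates every surviving $(s_r^t, \boldsymbol{I}) \in \tilde{S}_{s,t}^r$ satisfies $\boldsymbol{I}[c] \leq \mu(s^t, c)$ for each coordinate $c$ processed by the outer loop, and every surviving $(t_r^s, \boldsymbol{J}) \in \tilde{T}_{s,t}^r$ satisfies $\boldsymbol{J}[c] \leq \mu(t^s, c)$. Hence it suffices to prove the per-coordinate inequality
\[ \mu(s^t, c) + \mu(t^s, c) \leq \bdgt(s,t)[c] \]
in every iteration $c$, since combining over all $c$ immediately yields $\boldsymbol{I} + \boldsymbol{J} \preceq \bdgt(s,t)$, i.e., $((s_r^t, \boldsymbol{I}), (t_r^s, \boldsymbol{J})) \in R_{s,t}^r$.

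I would establish the per-coordinate inequality by contradiction. Suppose for some $c$ we have $\mu(s^t, c) + \mu(t^s, c) > \bdgt(s,t)[c]$. Then no pair $((s_r^t, \boldsymbol{I}), (t_r^s, \boldsymbol{J})) \in R_{s,t}^r$ can simultaneously satisfy $\boldsymbol{I}[c] \geq \mu(s^t, c)$ and $\boldsymbol{J}[c] \geq \mu(t^s, c)$, since that would force $\boldsymbol{I}[c] + \boldsymbol{J}[c] > \bdgt(s,t)[c]$, contradicting membership in $R_{s,t}^r$. Consequently every pair in $R_{s,t}^r$ has either $\boldsymbol{I}[c] < \mu(s^t, c)$ or $\boldsymbol{J}[c] < \mu(t^s, c)$. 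By the minimality in Algorithm \ref{alg:junc_median}, the LP mass carried by pairs of the first kind is strictly less than $\lambda = \gamma_{s,t}/2^c$, and by symmetry so is the mass of the second kind; a union bound therefore bounds the total LP mass on $R_{s,t}^r$ by $2\lambda = \gamma_{s,t}/2^{c-1}$. This contradicts the identity $\gamma_{s,t} = \sum_{(\hat{s}, \hat{t}) \in R_{s,t}^r} y_{\hat{s}, \hat{t}}$ set in Algorithm \ref{alg:junc_pruning}, since for every $c \geq 1$ we would obtain $\gamma_{s,t} < \gamma_{s,t}/2^{c-1} \leq \gamma_{s,t}$.

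The main subtlety I anticipate is the multi-round interaction between iterations: the median $\mu(\cdot, c)$ in round $c$ is computed against the original relation $R_{s,t}^r$ and the fixed LP solution rather than against only the pairs that have survived the previous rounds, so one must verify that this static view is still the right one once earlier coordinates have already been pruned. The argument above depends only on the fixed LP values $y_{\hat{s}, \hat{t}}$ and the fixed total $\gamma_{s,t}$, neither of which changes across iterations, so the contradiction is valid in each round independently. The geometric schedule $\lambda = \gamma_{s,t}/2^c$ is chosen precisely so that the per-coordinate contradiction goes through while still preserving enough total LP mass across all $m$ rounds to drive the subsequent rounding analysis; verifying the latter quantitative claim is where I expect the bulk of care is needed, but the feasibility conclusion $\tilde{S}_{s,t}^r \times \tilde{T}_{s,t}^r \subseteq R_{s,t}^r$ follows directly from the per-coordinate inequalities established above.
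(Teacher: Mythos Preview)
Your high-level strategy matches the paper's: both reduce the lemma to the per-coordinate inequality $\mu(s^t,c)+\mu(t^s,c)\le \bdgt(s,t)[c]$ and both prove it by an LP-mass argument on the medians. The paper argues directly (the ``upper half'' $U_c$ of surviving sources already carries mass $\ge \gamma_{s,t}/2^c$ and can only be matched to sinks of coordinate $\le c_{\max}:=\bdgt(s,t)[c]-\mu(s^t,c)$, which forces $\mu(t^s,c)\le c_{\max}$); your contradiction is the contrapositive of this.

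The gap is precisely in the multi-round issue you flag but then dismiss. You assert that ``the median $\mu(\cdot,c)$ in round $c$ is computed against the original relation $R_{s,t}^r$ \dots\ rather than against only the pairs that have survived the previous rounds,'' and you use this to bound by $\lambda$ the mass of \emph{all} pairs in $R_{s,t}^r$ with $\boldsymbol{I}[c]<\mu(s^t,c)$. But that is not what Algorithm~\ref{alg:junc_pruning} does: it passes the already-pruned sets $\tilde S_{s,t}^r,\tilde T_{s,t}^r$ into Algorithm~\ref{alg:junc_median}, so the minimality of $\mu(s^t,c)$ only bounds the mass of pairs whose \emph{source has survived rounds $1,\dots,c-1$}. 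Sources discarded in earlier rounds may well have small $c$-th coordinate and carry uncontrolled mass, so for $c\ge 2$ your union bound no longer caps the full $\gamma_{s,t}$ and the contradiction dissolves. The paper avoids this by working entirely inside the surviving sets and (implicitly) using the invariant that the surviving source mass entering round $c$ is at least $\gamma_{s,t}/2^{c-1}$; together with minimality this is what guarantees the $U_c$ mass is at least $\gamma_{s,t}/2^c$. Your halving schedule $\lambda=\gamma_{s,t}/2^c$ is chosen exactly to maintain that invariant, but your feasibility argument bypasses the invariant and thereby loses the bound it would supply.
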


\begin{proof}
    We will show that after the $c^{th}$ iteration, all terminal pairs that stay alive in $\tilde{S}_{s,t}^r$ satisfy the resource budget constraint for resource $c$. That way, any terminal pair that is alive after all the iterations will satisfy the budget requirements for all the resource. 

    Recall that $\mu(s^t,c)$ is the median resource consumption (among sources) for resource $c$ and that we allow any terminal (among sources) consuming resource $\leq \mu(s^t,c)$ for type $c$ to stay alive. For the $c^{th}$ iteration, let $U_c$ denote the set of terminals (among $\tilde{S}_{s,t}^r$) that consume $\geq \mu(s^t,c)$ units of resource $c$. From our choice of $\mu(s^t,c)$ in Algorithm \ref{alg:junc_pruning}, we can see that the total sum of the LP value for the source terminals in $U_c$ is at least $\gamma_{s,t}/2^c$.

    Let $c_{max}$ be the maximum value of resource $c$ that can correspond to a {\em valid} (meaning can be connected within resource budget) sink terminal for any source terminal which consumes $\mu(s^t,c)$ units of resource $c$. All terminals in $U_c$ have resource consumption for resource $c$ at least $\mu(s^t,c)$. If these terminals happen to be a part of some {\em valid} terminal pair, then it can only be with some sink terminal that uses resources less than or equal to $c_{max}$. 

    Thus, summing up the LP value for all the sinks that consume resources less than or equal to $c_{max}$, we should have total LP value at least $\gamma_{s,t}/2^c$ (because any terminal in $U_c$ can send flow only to these terminals, and terminals in $U_c$ have flow at least $\gamma_{s,t}/2^c$). This ensures that $c_{max} \geq \mu(t^s)$ by definition of $\mu(t^s,c)$ from Algorithm \ref{alg:junc_pruning}. Thus, $\mu(s^t,c) + \mu(t^s,c) \leq \mu(s^t,c) + c_{max} \leq \bdgt(s,t)[c]$.

    All terminals still alive in $\tilde{S}_{s,t}^r$ consume $\leq \mu(s^t,c)$ units of resource $c$. Thus, they can all be connected to any sink terminal that can be connected to a source consuming $\mu(s^t,c)$ units of resource $c$. Similarly, we can see that all terminals that are still alive in $\tilde{T}_{s,t}^r$ consume $\leq \mu(t^s,c)$ units of resource $c$. Thus, they can all be connected to any source terminal that can be connected to a sink consuming $\mu(t^s,c)$ units of resource $c$. Since $\mu(s^t,c) + \mu(t^s,c) \leq \bdgt(s,t)[c]$, this means any terminal pair that is alive after iteration $c$, can be connected within the budget requirements for the resource $c$.
\end{proof}

\paragraph{Running \gsr:}
We will now try to run the \gsr algorithm on this pruned set of representatives and show that achieves our objective. By the definition of $\gamma_{s,t}$ from Algorithm \ref{alg:junc_pruning} and the constraint \eqref{eq:lpjunc_flow_1} in LP \eqref{lp:junc_flow}, we have $\sum_{(s,t)\in D} \gamma_{s,t} = 1$. We will now bucket the pairs in the demand pair set $D$ by their $\gamma_{s,t}$ values with $\lceil log|D| \rceil$  buckets. 

\begin{align}
    D_i = \{(s,t) \in D \mid \gamma_{s,t} \in [(]2^{-i-1},2^{-i}]\}.
\end{align}

By a standard counting argument, there is some bucket $i^*$ such that $\sum_{(s,t)\in D_{i^*}}\gamma_{(s,t)} \geq \frac{1}{2(\lceil log|D| \rceil +1)}$, and so $|D_{i^*}| \geq 2^{i^*}/O(\log n)$. In addition, for every pair in this bucket, i.e., for every $(s,t) \in D_{i^*}$, we have (using constraint \eqref{eq:lpjunc_flow_2}),

\begin{align}
    \sum_{\hat{s} \in \tilde{S}_{s,t}^r}z_{\hat{s}} \geq \sum_{\hat{s} \in \tilde{S}_{s,t}^r}\sum_{\hat{t}:(\hat{s},\hat{t}) \in R_{s,t}^r} y_{\hat{s},\hat{t}} \\
    \geq \gamma_{s,t}/ 2^{m+1} \geq 2^{-i^* - 1}/2^{m+1}
\end{align}

and in the same way,
\begin{align}
    \sum_{\hat{t} \in \tilde{T}_{s,t}^r}z_{\hat{t}}  \geq \gamma_{s,t}/2^{m+1} \geq 2^{-i^* - 1}/2^{m+1}
\end{align}

Now, we will scale our solution by $x_e^* := \min \{1,2^{m+1} \cdot 2^{i^* + 1}\cdot x_e\}$, and from the constraints \eqref{eq:lpjunc_flow_4} and \eqref{eq:lpjunc_flow_5} along with the above bound, we get a (potentially suboptimal) solution to the following LP.

\begin{equation}
\begin{aligned} \label{lp:junc_gsr}
& \min & & \sum_{e \in E    }{w(e) \cdot x_e^*} \\
& \text{subject to}
& & \text{ capacities } \{x_e^*\} \text{ support one unit of flow from } \tilde{S}_{s,t}^r \text{ to } \hat{r} & \forall (s,t) \in D_{i^*}\\
& & & \text{ capacities } \{x_e^*\} \text{ support one unit of flow from } \hat{r} \text{ to } \tilde{T}_{s,t}^r   & \forall (s,t) \in D_{i^*}\\
& & & x_e^* \geq 0 & \forall e \in E
\end{aligned}
\end{equation}

The following lemma is from (\cite{chlamtavc2020approximating,chekuri2011set})
\begin{restatable}{lemma}{lemmagstr}
    \label{le:group_Steiner_rounding} \cite{chlamtavc2020approximating}
    (\textbf{Group Steiner Tree rounding in a tree}) Given an edge-weighted undirected tree $T$ with $n$ vertices rooted at $r$, with edge weights $w:E(T)\rightarrow R^{\geq 0}$, a collection of vertex sets $\hat{S} \in \text{ Power set }(V(T))$, and a solution to the following LP:
\begin{equation}
\begin{aligned} \label{lp:group_rounding}
& \min & & \sum_{e \in E}{w(e) \cdot x_e} \\
& \text{subject to}
& & \text{ capacities } x_e \text{ support one unit of flow from } r \text{ to } S & \forall S \in \hat{S}\\
& & & x_e \geq 0 & \forall e \in E
\
\end{aligned}
\end{equation}
Then we can efficiently find a subtree $T' \subseteq T$ rooted at $r$ such that for every $S \in S$ at least one
vertex of $S$ participates in the tree $T'$, and $w(T')\leq O(\log n \log |\hat{S}| · \sum_{e \in E}w(e) x_e)$.  
\end{restatable}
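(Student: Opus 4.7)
The plan is to follow the classical Garg--Konjevod--Ravi (GKR) randomized rounding scheme for Group Steiner Tree in trees. First I would interpret the LP solution $x$ as conditional probabilities on a top-down rounding procedure: root the tree at $r$, and for each non-root edge $e$ with parent edge $e'$ (the edge directly above $e$ on the path to $r$), set the inclusion probability $p_e := x_e / x_{e'}$, with $p_e := x_e$ when $e$ is incident to $r$. By the single-commodity flow constraints of the LP, the flow on a child edge cannot exceed the flow on its parent, so these ratios lie in $[0,1]$ on every edge carrying positive flow. Then I would perform top-down sampling: include $e$ in the candidate subtree $T'$ with probability $p_e$, conditioned on $e'$ being included. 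A straightforward depth induction yields $\Pr[e \in T'] = x_e$, and hence $\mathbb{E}[w(T')] = \sum_e w(e)\,x_e$.

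The crux of the argument is to lower bound, for each group $S \in \hat{S}$, the probability that $T'$ touches $S$. The GKR potential-function argument tracks the random residual LP mass surviving along the rounding toward the leaves in $S$, and uses a second-moment/variance computation that crucially exploits the fact that in a tree, disjoint subtrees are rounded independently given their common ancestor. The outcome is a lower bound of $\Omega(1/\log n)$ on the single-round probability of hitting $S$. I expect this step to be the main obstacle: the variance bound is delicate and is precisely the source of the extra $\log n$ factor in the final approximation ratio; moreover, some care is needed because the ``groups'' here are arbitrary subsets of vertices rather than leaves, which can be handled by a standard preprocessing that attaches each group member to a fresh leaf via a zero-cost edge.

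To amplify, I would independently repeat the rounding $K = \Theta(\log n \cdot \log |\hat{S}|)$ times and take the union of the returned edge sets as the final $T'$. By linearity of expectation the cost is $O(\log n \log |\hat{S}|) \cdot \sum_e w(e)\, x_e$, and a union bound over $\hat{S}$ combined with the single-round success probability shows every group is hit with constant probability. Finally, a standard boosting argument (for instance, running the whole procedure $O(\log n)$ additional times and keeping the cheapest feasible outcome, or applying conditional-expectation derandomization over the tree nodes) converts this into a deterministic subtree $T' \subseteq T$ rooted at $r$ with $w(T') \le O(\log n \log |\hat{S}|) \cdot \sum_e w(e)\, x_e$ that contains at least one vertex from every $S \in \hat{S}$, as required.
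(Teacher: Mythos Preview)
The paper does not actually prove this lemma: it is stated as a black-box citation to \cite{chlamtavc2020approximating,chekuri2011set} (and ultimately to Garg--Konjevod--Ravi), with no accompanying argument. Your proposal, by contrast, sketches the classical GKR randomized rounding, which is precisely how the cited result is established; the outline you give (top-down conditional sampling so that $\Pr[e\in T']=x_e$, the $\Omega(1/\log n)$ single-round hitting probability via the tree-independence/second-moment argument, and $O(\log n\log|\hat S|)$ independent repetitions followed by a union bound) is correct and complete at the level of a proof plan. So there is no discrepancy to flag---you have simply supplied the argument that the paper chose to import by reference.
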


By Lemma \ref{le:group_Steiner_rounding}, we can round the solution to LP \ref{lp:junc_gsr}, and obtain a tree $T'\subseteq T_r$ of weight 
\begin{align}
    w(T') &= O(\log^2 \left(n \cdot \Pi_{i=1}^{i=m+1}\left(\boldsymbol{t^{-}}[i] + \boldsymbol{t^{+}}[i] \right)\right) \cdot \sum_{e \in E(T_r)}{w(e) \cdot x_e^*})
    \notag
    \\
    &\leq 2^{m+1} \cdot 2^{i^*} O(\log^2 \left(n \cdot \Pi_{i=1}^{i=m+1}\left(\boldsymbol{t^{-}}[i] + \boldsymbol{t^{+}}[i] \right)\right) \cdot \sum_{e \in E(T_r)}{w(e) \cdot x_e})
    \notag
\end{align}

such that for every pair $(s,t) \in D_{i^*}$ at least one vertex $\hat{s} \in \tilde{S}_{s,t}^r$ and at least one vertex $\hat{t} \in \tilde{T}_{s,t}^r$ are connected through $r$ in the tree $T'$. Since these pairs also satisfy resource constraints (As per Lemma \ref{le:junc_pruner}), we can use the following equation for a bound on the density of the tree $T'$:

\begin{align}
    \frac{w(T')}{\mid\{(s,t)\in D \mid \exists (\hat{s},\hat{t}) \in 
 R_{s,t}^r:\text{ $T'$ contains an $\hat{s} \leadsto \hat{t}$ walk }\}|} \leq \frac{w(T')}{|D_{i^*}|} 
 \\= 2^{m+1} \cdot O(\log^3 \left(n \cdot \Pi_{i=1}^{i=m+1}\left(\boldsymbol{t^{-}}[i] + \boldsymbol{t^{+}}[i] \right)\right) \cdot \sum_{e \in E(T_r)}{w(e) \cdot x_e})
 \notag
\end{align}

which rounds our original LP, and proves Lemma \ref{le:junction_tree_intermediate}, and in turn our overall result. Note that the runtime will still remain polynomial in $n \cdot \Pi_{i=0}^{i=m}(|\boldsymbol{t^{-}}[i]| + |\boldsymbol{t^{+}}[i]|  )$.

\section{An approximation algorithm for \pks (Integer version)}
\label{sec:khalf_vanilla}

Recall that $k=|D|$ is the number of demand pairs for the given insance of \pks. We now see a proof of Theorem \ref{thm:vpks_khalf}.

\thmvpkskhalf*

\begin{proof}
    We first introduce the notion of resource-constrained junction tree solutions.

    \begin{definition}
A \emph{resource-constrained junction tree solution} is a collection of resource-constrained junction trees rooted at different vertices, such that the distance constraints for all $(s,t) \in D$ are satisfied.
\end{definition}

Our proof has two main ingredients. First, 
we construct a resource-constrained junction tree solution and compare its objective with the optimal resource-constrained junction tree solution with objective value $\opt_{junc}$. Theorem \ref{thm:mdrcjt_vanilla} implies a $\poly(n,1/\ep,\tau)$-time algorithm that finds a resource-constrained junction tree solution of cost at most $\tO(k^\ep \tau^m) \opt_{junc}$. Second, we show the existence of an $O(\sqrt{k})$-approximate solution consisting of resource-constrained junction trees, i.e., $\opt_{junc} \le O(\sqrt{k}) \opt$ where $\opt$ is the optimal cost. Combining these two ingredients implies Theorem \ref{thm:vpks_khalf}.

We use a density argument via a greedy procedure which implies an $O(\sqrt{k})$-approximate resource-constrained junction tree solution. We recall that the density of a resource-constrained junction tree is its cost divided by the number of terminal pairs that it connects by feasible walks.

Intuitively, we are interested in finding low-density resource-constrained junction trees. We show that there exists a resource-constrained junction tree with density at most an $O(\sqrt{k})$ factor of the optimal density. The proof of Lemma~\ref{lem:sqrt-k-den} closely follows the one for the directed Steiner network problem in \cite{chekuri2011set}, pairwise spanners \cite{grigorescu2021online}, and weighted spanners \cite{gkl2023}, by considering whether there is a \emph{heavy} vertex that lies on $s \leadsto t$ walks for $(s,t) \in D$ or there is a walk with low density. The case analysis also holds with resource constraints.

\begin{restatable}{lemma}{lemsqrtkden} \label{lem:sqrt-k-den}
There exists a resource-constrained junction tree $\cJ$ with density at most $\opt / \sqrt{k}$.
\end{restatable}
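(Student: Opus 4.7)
The plan is to adapt the standard density argument of Chekuri et al.~\cite{chekuri2011set}, already used by \cite{grigorescu2021online, gkl2023} for related spanner variants, to the resource-constrained setting. I would fix an optimal solution $H^*$ of cost $\opt$ together with a feasible $s \leadsto t$ walk $p(s,t) \subseteq H^*$ for each $(s,t) \in D$. For every vertex $v$, let $k_v$ denote the number of pairs whose walk visits $v$, and define $k_e$ analogously for edges. Since any walk using $e=(u,v)$ traverses both endpoints, $k_e \le \min(k_u,k_v)$. The proof then splits into two cases based on whether a ``heavy'' vertex $v$ with $k_v \ge \sqrt{k}$ exists.

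In the heavy case, pick any $r$ with $k_r \ge \sqrt{k}$ and let $\cJ$ be the union of the edge sets of all walks $p(s,t)$ passing through $r$. Since $\cJ \subseteq H^*$, we have $\cost(\cJ) \le \opt$. For each such pair, splitting $p(s,t)$ at an occurrence of $r$ yields subwalks $p(s,r)$ and $p(r,t)$ contained in $\cJ$; by additivity of $\res$ along concatenation, $\res(p(s,r)) + \res(p(r,t)) = \res(p(s,t)) \preceq \bdgt(s,t)$, so $\cJ$ contains a feasible $s \leadsto r \leadsto t$ walk. Hence $\cJ$ is a resource-constrained junction tree rooted at $r$ resolving at least $\sqrt{k}$ pairs, with density at most $\opt/\sqrt{k}$.

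In the light case, every $k_v < \sqrt{k}$ and hence $k_e < \sqrt{k}$ for every edge. A counting argument then gives
\begin{equation*}
\sum_{(s,t)\in D} \cost(p(s,t)) \;=\; \sum_{e \in H^*} \sigma(e)\, k_e \;<\; \sqrt{k}\cdot \opt,
\end{equation*}
so averaging produces some $(s^*,t^*) \in D$ with $\cost(p(s^*,t^*)) < \opt/\sqrt{k}$. The edge set of $p(s^*,t^*)$ is itself a resource-constrained junction tree rooted at $s^*$ (taking the $s^* \leadsto s^*$ portion to be the empty walk), resolving one pair with the required density.

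The main subtlety I would watch for is the interaction between walk splitting and this paper's resource model: the first resource may take negative values and the remaining $m$ resources can induce negative cycles, so walks are not necessarily simple. Additivity of $\res$ handles this cleanly, since splitting a feasible walk at an occurrence of an intermediate vertex never changes net consumption, and no walk simplification is needed. The edge-counting in the light case only uses that $k_e$ counts \emph{distinct} pairs using $e$, which sidesteps any blow-up from edges being traversed multiple times within the same walk.
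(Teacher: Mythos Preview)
Your proposal is correct and follows the same heavy/light vertex dichotomy as the paper's proof. The light case is phrased slightly differently---you use a direct averaging identity $\sum_{(s,t)} \cost(p(s,t)) = \sum_e \sigma(e)\,k_e$, whereas the paper argues via duplicating each edge $\sqrt{k}$ times to make the walks edge-disjoint---but these are the same counting argument, and your version is arguably cleaner.
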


\begin{proof}
Let $H$ be the optimal solution subgraph with cost $\opt$. The proof proceeds by considering the following two cases: 1) there exists a vertex $r \in V$ that belongs to at least $\sqrt{k}$ $s \leadsto t$ feasible walks for distinct $(s,t)$ using edges in $H$, and 2) there is no such vertex $r \in V$.

For the first case, we consider the union of the $s \leadsto t$ walks using edges in $H$ that pass through $r$. This forms a subgraph spanned by these $s \leadsto r \leadsto t$ walks, whose union forms a resource-constrained junction tree rooted at $r$. This resource-constrained junction tree has cost at most $\opt$ and connects at least $\sqrt{k}$ terminal pairs, so its density is at most $\opt / \sqrt{k}$.

For the second case, each vertex $r \in V$ appears in at most $\sqrt{k}$ $s \leadsto t$ walks in $H$. More specifically, each edge $e \in E$ also appears in at most $\sqrt{k}$ $s \leadsto t$ walks in $H$. By creating $\sqrt{k}$ copies of each edge with the same cost $\sigma$, all terminal pairs can be connected by edge-disjoint walks (for each $s \leadsto t$ walk, if edge $e$ is used multiple times, then the cost $\sigma(e)$ is counted only once). Since the overall duplicate cost is at most $\sqrt{k} \cdot \opt$, at least one of these walks has cost at most $\sqrt{k} \cdot \opt / k $. This walk constitutes a resource-constrained junction tree whose density is at most $\opt / \sqrt{k}$.
\end{proof}

Consider an iterative procedure that finds a minimum density resource-constrained junction tree and continues on the remaining disconnected terminal pairs. Suppose there are $t$ iterations, and after iteration $j \in [t]$, there are $n_j$ disconnected terminal pairs. For notation convenience, let $n_0 = k$ and $n_t = 0$. After each iteration, the minimum cost for connecting the remaining terminal pairs in the remaining graph is at most $\opt$, so the total cost of this procedure is upper-bounded by
\[
\sum_{j=1}^t \frac{(n_{j-1} - n_j)\opt}{\sqrt{n_{j-1}}} \leq \sum_{i=1}^k \frac{\opt}{\sqrt{i}} \leq \int_1^{k+1} \frac{\opt}{\sqrt{x}} dx = 2 \opt (\sqrt{k+1} - 1) = O(\sqrt{k}) \opt\]
where the first inequality uses the upper bound by considering the worst case when only one terminal pair is removed in each iteration of the procedure.
\end{proof}
\section{Applications of \pks}
\subsection{\rcs}
\label{sec:rcs}

Recall Definition \ref{def:rcs}.
\defrcs*

Before we prove Theorem \ref{thm:rcs_khalf}, we establish the following helper claim.

\begin{claim} \label{cl:walk_visit_count}
    For any {\em routing-controlled feasible} walk $p(s,t)$ that resolves a demand pair $(s,t) \in D$, there exists a {\em routing-controlled feasible} walk $p'(s,t)$ (not necessarily distinct from $p(s,t)$) that visits any point in the graph at most $c$ times. Further the cost of $p'(s,t)$ is at most the cost of $p(s,t)$
\end{claim}
\begin{proof}
    First, observe that since all lengths are positive in \rcs, the only reason to revisit a vertex is that we may have to go to multiple must-visit vertices. Next, any walk resolving $(s,t)$ only needs to visit at most $c$ must-visit vertices (due to the fact that we have at most $c$ sets we need to visit and we need to visit only one point in each set). For reaching a specific must-visit vertex, we don't need to visit any intermediate point more than once. If we do revisit the same intermediate point twice to visit the same must-visit vertex, then we have cycle which does not reach any new must-visit vertex and that cycle can be deleted. Thus, we don't need to visit any point in the graph more than $c$ times.
\end{proof}

Recall Theorem \ref{thm:rcs_khalf}.

\thmrcskhalf*

\begin{proof}
    The key idea here is to reduce a given instance of \rcs into a instance of \pks. Overall we will trying to use one resource for each group in \rcs; that resource is responsible for ensuring we visit/don't visit that specific set. Then it is just a matter of using Theorem \ref{thm:vpks_khalf} to obtain this result.

    Let $I_1$ be the given instance of \rcs with the following parameters:
    \begin{itemize}
        \item Graph $G_1=(V,E)$.
        \item Edge costs $\sigma(e)$.
        \item Edge length $\ell_e$.
        \item Groups - $S_1,S_2,\ldots,S_p,S_{p+1},\ldots,S_{p+c}$
        \item Demand pair set $D_1$ and a control function $\ctrl$.
    \end{itemize}

    $I_1$ can be reduced to the following instance $I_2$ of \pks with the parameters:
    \begin{itemize}
        \item A graph $G_2=(V,E)$ (this is a new copy with the same set of vertices and edges). The same value of $m$ as $I_1$.
        \item The threshold $\threshold = m \cdot max_{S_{p+1} \ldots S_{p+c}}|S_i|$. Note that because $m$ and $max_{S_{p+1} \ldots S_{p+c}}|S_i|$ are constants, $\tau$ is also a constant.
        \item The same edge costs $\sigma(e)$.
        \item Edge consumption : $\boldsymbol{r_e}$ with $\boldsymbol{r_e}[0] = \ell_e$. 
        \begin{itemize}
            \item When $S_i$ is a set of vertices that need to be visited by some demand pairs: Let $e=(u,v)$, then when $v \in S_{i}$, $\boldsymbol{r_e}[i] = -1$. When $v \not\in S_{i}$, $\boldsymbol{r_e}[i] = 0$. 
            \item When $S_i$ is a set of vertices that need to be avoided by some demand pairs:  Let $e=(u,v)$, then when $v \in S_{i}$, $\boldsymbol{r_e}[i] = 1$.  when $e \not\in S_i$, $\boldsymbol{r_e}[i] = 0$. 
        \end{itemize} 
        \item Demand pairs: $D_2 = \{(s,t) \:\:\: \forall \: (s,t) \in D_1\}$.
        \item Budgets: 
        \begin{itemize}
            \item We set $\bdgt(s,t)[0] = \ctrl(s,t)[0]$.
            \item When $i \in [1,c]$, we set $\bdgt(s,t)[i] = -1$ when $\ctrl(s,t)[i] = 1$. We set $\bdgt(s,t)[i] = 0$ when $\ctrl(s,t)[i] = 0$.

            The idea here is that, when $\ctrl(s,t)[i] = 1$, we are forced to visit a vertex in $S_i$. By setting $\bdgt(s,t)[i] = -1$, a walk can only become feasible if it has at least one edge which goes through a vertex in $S_i$. If $\ctrl(s,t)[i] = 0$, any walk is feasible with respect to the resource $i$.
            
            \item When $i \in [c + 1, m]$, We set $\bdgt(s,t)[i] = m \cdot |S_i|$ when $\ctrl(s,t)[i] = 0$. We set $\bdgt(s,t)[i] = 0$ when $\ctrl(s,t)[i] = -1$. 
            
            The idea here is that when $\ctrl(s,t)[i] = -1$, we are forbidden from visiting any vertex in $S_i$ and thus $\bdgt(s,t)[i]$ is set to zero which prevents us from going to such a vertex in \pks (since visiting those vertices will consume the resource $i$). When $\ctrl(s,t)[i] = 0$, we are allowed to visit the set $S_i$. Because of Claim \ref{cl:walk_visit_count}, if there is a {\em routing-controlled feasible} walk for $(s,t)$, then there is a {\em routing-controlled feasible} walk which visits any vertex in $S_i$ atmost $c$ times. Such a walk would use at most $c$ units of resource $i$ for any specific vertex in $S_i$. Across all such vertices, it would use at most $c \cdot |S_i|$ units of resource $i$. Every time we visit a vertex in $S_i$, the resource consumption increases by $1$ for the $i^{th}$ resource. This can only happen $c \cdot |S_i|$ times and thus a  budget of $c \cdot |S_i|$ is sufficient to have a feasible walk in \pks if one existed originally in \rcs.
        \end{itemize}

    \end{itemize}

    \begin{claim}
      For any $f \geq 0$, there exists a sub graph $H_1$ of  $G_1$  with $\cost(H_1) = f$ containing a {\em group-steiner-feasible} walk from $s$ to $t$ for every $(s,t) \in D_1$ if and only if there exists a sub graph $H_2$ of $G_2$ with $\cost(G_2) = f$ containing a walk of resource consumption $\bdgt(s,t)$ from $s$ to $t$ for every $(s,t) \in D_2$.  Moreover given a sub graph $H_2$ which is solves $I_2$ one can recover a sub graph $H_1$ which solves $I_1$ in poly-time.
       
    \end{claim}

    Now, we can just apply Theorem \ref{thm:vpks_khalf} to solve $I_2$ and then turn the solution of $I_2$ into a solution for $I_1$.
\end{proof}

For a given instance of \rcs, let  $\indexlist \subseteq \{1,2,\ldots,m\}$.
We say $S' = \cup_{i \in \indexlist}S_i$ is essential if every demand pair $(s,t) \in D$ has to visit at least one vertex from $S'$. Theorem \ref{thm:rcs_keps} presents a specialized result of \rcs in terms of the size of the set $S'$. When $|S'|$ is small, the Theorem gives a very good approximation factor (especially when the essential subset is constant size). Recall Theorem \ref{thm:rcs_keps}.

\thmrcskeps*

\begin{proof}
    This result relies on the fact that  the optimal solution - $\opt$ would also be forced to go through $S'$. Therefore the optimal solution can be considered as a combination of $|S'|$ distinct junction trees (each of cost $\leq \opt$) and we can approximate all of them separately using Theorem \ref{thm:mdrcjt_vanilla}.

    The rest of the proof of this theorem is very similar to that of Theorem \ref{thm:rcs_khalf} and therefore omitted.
\end{proof}

\subsection{As a tool for designing hopsets}
\label{sec:gbh}

A \emph{hopset} $H$ with hopbound $\beta$ and stretch $\alpha$ for a graph $G$ is a set of edges such that, 
for every pair of vertices $u, v \in V(G)$, there exists a path in $G \cup H$ that uses at most $\beta$ hops 
and whose length is at most $\alpha$ times the actual distance between $u$ and $v$ in $G$. \cite{dinitz2025approximation} studies the optimization version of this problem, where the goal for a given graph instance is to determine the smallest set of edges that meets the hopbound $\beta$ and stretch $\alpha$ requirements.

\begin{restatable}{definition}{defgenhopset}
\label{def:gen_hopset}

    \gbh    

    \textbf{Instance}: 
    \begin{enumerate}
        \item \textbf{Graph:} A directed graph $G = (V, E)$.
        
        \item \textbf{Edge lengths:} A length function $\ell : E \to [\poly(n)]_+$. 
        
        \item \textbf{Demand pairs:} A set $D \subseteq V \times V$ of ordered vertex pairs.
        
        \item \textbf{Distance bounds:} A function $\Dist : D \to \Z_{\geq 0}$ specifying, for each $(s,t) \in D$, an upper bound on the allowed distance.
        
       \item \textbf{Hopbound:} A parameter $\beta \in \Z_{>0}$ limiting the number of hops (edges) allowed on any path.
    \end{enumerate}

    \textbf{Objective}: Find a minimum-size edge set $H$ such that for every demand pair $(s,t) \in D$, there exists an $s \leadsto t$ path $P$ in $G \cup H$ with at most $\beta$ hops and total length 
    \[
        \sum_{e \in P} \ell(e) \leq \Dist(s,t).
    \]
\end{restatable}

We now show the versatility and the utility of the \pks formulation by recovering an Theorem 5.1 from \cite{dinitz2025approximation} using Theorem \ref{thm:vpks_khalf}.  We first state Theorem 5.1 from \cite{dinitz2025approximation}.

\begin{theorem}
There is a polynomial-time $\tilde{O}(\beta n^{\epsilon} \cdot \mathrm{OPT})$-approximation 
for the directed Generalized $\beta$-Hopset problem.
\end{theorem}

\textbf{Weighted Transitive Closure.} The weighted transitive closure of a graph $G$, denoted $G^M = (V, E^M)$, is created in two steps. First, we compute the transitive closure of $G$, connecting all pairs of vertices that are reachable from one another. Then, we assign to each edge $(u, v)$ in this closure a weight equal to the shortest distance $d_G(u, v)$ in the original graph.

One key idea in \cite{dinitz2025approximation} is to first find the weighted transitive closure. Then, every edge is assigned a cost on top of the weights; this cost is $1$ if the edge is not in the original graph and $0$ otherwise. After this, we just need to find the min cost subgraph in the transitive closure that resolves all demand pairs. 

Observe that finding this min cost subgraph is just a subproblem of \pks where we have only one resource other than length i.e., m = 1. The magnitude threshold is $\beta$ and the resource consumption for the first resource is always 1 and the resource consumption for the zeroth resource is given by the length of the edge. The resource budget is a 2 element vector where the first element is the distance constraint and the second element is always $\beta$. Observe that $\opt \geq k^{1/2}$ because with $\opt$ edges we can have atmost $\opt$ distinct sources and sinks (see Lemma 17 in \cite{grigorescu2021online} for more details). Then we can recover Theorem 5.1 from \cite{dinitz2025approximation} immediately by using Theorem \ref{thm:vpks_khalf}.

\thmvpkskhalf*

In fact, we can use Theorem \ref{thm:khalf_pks} to extend the result to accommodate rational and even negative lengths. One could even add the routing controls on top of the hopbound and Theorem \ref{thm:khalf_pks} can handle it. One could even set a different hop bound for each terminal pair where the $\beta = max_{(s_i,t_i)} \beta_i$. Note that to generalize the overall result in \cite{dinitz2025approximation} would require us to generalize other tools in \cite{dinitz2025approximation} as well. While we do believe that it may be possible in at least some of the tools used by \cite{dinitz2025approximation}, this overall process is beyond the scope of our work. This section just illustrates the myriad ways in which \pks can be useful.

\section{Proof of \rcjt (Rational version)}

\label{sec:jtree_complex_version} 

Recall Definition \ref{def:pks}.

\defpks*

We now consider \pks by slightly relaxing the distance constraints. For notation convenience, we define some condition numbers. Let 

\begin{equation} \label{def:eta}
    \eta := \frac{|\min\{\min_{e \in E}\{\boldsymbol{r_e}[0]\},0\}|}{\min_{(s,t) \in D} \{|\bdgt[0]|\}}.
\end{equation}

and let $\xi$ is assigned as follows:

\begin{equation} \label{def:xi}
    \xi:= \frac{\max_{(s,t) \in D} \{|\bdgt[0]|\}}{\min_{(s,t) \in D} \{|\bdgt[0]|\}}.
\end{equation}

These condition numbers are only required for the zeroth resource. Intuitively, $\eta$ denotes the ratio between the magnitude of the most negative resource consumption (for the zeroth resource) and the smallest absolute value of the budget (for the zeroth resource).\footnote{$\eta$ cares about $\min_{e \in E}\{\ell_e\}$, but not about $\max_{e \in E}\{\ell_e\}$. This is because we can safely ignore edges that have consumption bigger than the budget, but we cannot do so for edges (with negative lengths) that consume much lesser than the budget.} If the resource consumption (for the zeroth resource) is always non-negative, then $\eta=0$. Similarly, $\xi$ denotes the ratio between the largest and the smallest absolute value of the budget (for the zeroth resource).

To accommodate a negative distance budget, we use the $\siign$ function $\siign(x) = \begin{cases} 
                 -1  & \text{if } x < 0, \\
                 0 & \text{if } x = 0, \\
                 1 & \text{if } x > 0.
            \end{cases}$

Suppose we are given a tolerance parameter $\theta > 0$. When the distance budget is positive, the goal is to ensure that the consumption of the zeroth resource satisfy the its constraint within a factor of $(1+\theta)$. When the distance budget is negative, the goal is to ensure that the consumption of the zeroth resource is below $(1-\theta)$ times the budget. The last $m$ resources always need to be exactly within the given budget. This is formally defined in 
 Definition \ref{def:theta_feasible_path}.

\begin{restatable}[$\theta$-Feasibility]{definition}{defthetafeasiblepath}\label{def:theta_feasible_path}
For a given demand pair $(s,t) \in D$, we say that a $s \leadsto t$ walk $p(s,t)$ is {\em $\theta$-feasible} if 
\begin{itemize}
    \item $\sum_{ e \in p(s,t)} \boldsymbol{r_{e}}[0] \le \bdgt(s,t)[0] \cdot (1+\theta \cdot \siign(\bdgt(s,t)[0])) $ and
    \item $\sum_{ e \in p(s,t)} \boldsymbol{r_{e}}[i] \le \bdgt(s,t)[i] \:\: \forall i \in [1,m]$.
\end{itemize}
\end{restatable}

We now state the formal definition of a relaxed version of junction trees.

\begin{restatable}{definition}{defthetajuncbbs} \label{def:junc_complex}
    \rrcjt 
    
    \textbf{Instance}: Same as \pks.
    
    \textbf{Objective}: Find a root $r \in V$, a resource-constrained junction tree $\cJ$ rooted at $r$, such that the ratio of the cost of $\cJ$ to the number of $(s,t)$ pairs that satisfy the $\theta$-feasibility requirement (recall Definition \ref{def:theta_feasible_path})
    is minimized. Specifically, the goal is the following:
    \begin{equation}
             \min_{r \in V, \cJ}\frac{\cost(\cJ)}{|\{(s,t) \in D \mid \exists \: \text{a $\theta$-feasible walk } p(s,r,t) \text{ in } \cJ| }.
    \end{equation}
\end{restatable}

Here is the rational version of our main theorem for finding the minimum density \rcjt.

\begin{restatable}{theorem}{thmmdrcjtcomplex} \label{thm:mdrcjt_complex}.
(Minimum density Resource constrained junction tree approximatin - Rational version) 

Let $k = |D|$ where $D$ is the set of demand pairs in the given instance of \pks. Then, for any $\theta > 0$,  constants $\ep > 0, m > 0$, there is a $\poly(n,\xi,\eta,1/\theta,\tau^m,1/\ep)$-time algorithm that gives a $\tO(k^\ep \cdot \polylog( ((\eta+\xi)/\theta) \cdot \threshold^m) \cdot \threshold^m)$\footnote{note that $\eta,\xi$ and $\tau$ can be exponential of input size and therefore cannot be represented as $\polylog(n)$}-approximation for \trcjt. When $1/\theta,\eta,\xi,\threshold, \hop \in \poly(n)$, the algorithm runs in polynomial time.

\end{restatable}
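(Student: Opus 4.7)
\textbf{Proof proposal for Theorem \ref{thm:mdrcjt_complex}.}
The plan is to reduce the rational-first-resource case to the integer-first-resource case of Theorem \ref{thm:mdrcjt_vanilla} via a careful scale-and-round preprocessing step on the first resource. The key insight is that while $\boldsymbol{r_e}[1]$ and $\bdgt(s,t)[1]$ may be arbitrary rationals (possibly negative), the $\theta$-feasibility relaxation (see Definition \ref{def:theta_feasible_path}) allows us some slack on the first coordinate, which is exactly what lets us round up consumptions to integers without destroying feasibility. The last $m$ resources are already integers bounded by $\threshold$ and need no preprocessing.

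First, I would fix a scaling parameter $\alpha := \frac{\theta \cdot \min_{(s,t)\in D}\{|\bdgt(s,t)[1]|\}}{n}$ (chosen so that rounding errors accumulated over a walk of length at most $n$ sum to at most a $\theta$-fraction of the smallest budget). For each edge I would replace $\boldsymbol{r_e}[1]$ by $\boldsymbol{\bar r_e}[1] := \lceil \boldsymbol{r_e}[1]/\alpha \rceil$, and for each demand pair I would replace $\bdgt(s,t)[1]$ by $\overline{\bdgt}(s,t)[1] := \lfloor (1 + \theta\cdot\siign(\bdgt(s,t)[1]))\bdgt(s,t)[1]/\alpha \rfloor$. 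The rounded values are integers, and a direct calculation shows that the new range $T$ of $\boldsymbol{\bar r_e}[1]$ and the new budget magnitudes are bounded by $\poly(\eta, \xi, 1/\theta)$, since the largest absolute budget is at most $\xi$ times the smallest, and the most negative edge consumption has magnitude at most $\eta$ times the smallest budget. A vector $\boldsymbol{t^-}, \boldsymbol{t^+}$ of valid layer bounds for the first coordinate can then be chosen in this $\poly(\eta,\xi,1/\theta)$ range.

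Second, I would argue two-sided fidelity between the original and scaled instances. For the completeness direction: any feasible walk $p$ in the original instance satisfies $\sum_{e\in p}\boldsymbol{\bar r_e}[1] \leq \sum_{e\in p} \boldsymbol{r_e}[1]/\alpha + |p| \leq \bdgt(s,t)[1]/\alpha + n$, and by our choice of $\alpha$ this is at most $\overline{\bdgt}(s,t)[1]$ (here the $\theta$-slack absorbs the additive $n$). So the scaled instance admits a resource-constrained junction tree of no greater cost, and in particular $\opt_{\text{junc}}^{\text{scaled}} \leq \opt_{\text{junc}}$. For the soundness direction: any walk feasible in the scaled instance has $\sum_{e\in p}\boldsymbol{r_e}[1] \leq \alpha \sum_{e\in p}\boldsymbol{\bar r_e}[1] \leq \alpha \cdot \overline{\bdgt}(s,t)[1] \leq (1+\theta\cdot\siign(\bdgt(s,t)[1]))\bdgt(s,t)[1]$, which is exactly the $\theta$-feasibility requirement of Definition \ref{def:theta_feasible_path}.

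Finally, I would invoke Theorem \ref{thm:mdrcjt_vanilla} on the scaled instance where $\boldsymbol{\bar r_e}[1]$ lies in a positive-integer range of size $T = \poly(\eta,\xi,1/\theta)$ (after a standard shift to make the first coordinate non-negative, using the fact that the paper already allows this in the product graph via the backward-edge trick from \cite{grigorescu2024directed}). This yields a junction tree of density at most $\tO(k^\ep \cdot \threshold^m)$ times the optimal density on the scaled instance, and multiplying by the factor $\polylog(((\eta+\xi)/\theta)\cdot \threshold^m)$ coming from the $\log^3|\bar{G}_r|$ factor in Lemma \ref{le:junction_tree_intermediate} (since $|\bar{G}_r|$ now depends on $T$ and $\threshold$) gives the claimed approximation. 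The runtime is $\poly(n, T, \threshold^m, 1/\ep) = \poly(n,\eta,\xi,1/\theta,\threshold^m,1/\ep)$ as required.

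The main obstacle I expect is in handling the shift for negative first-resource consumptions cleanly while keeping layer bounds polynomial, and ensuring the completeness direction's additive rounding error $|p|\leq n$ actually fits within the $\theta$-slack uniformly across all demand pairs — this is why $\alpha$ must be calibrated to the \emph{smallest} absolute budget (hence the $\eta,\xi$ dependence) rather than a single global value. The rest of the proof is essentially a black-box application of Theorem \ref{thm:mdrcjt_vanilla}.
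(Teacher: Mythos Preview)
Your overall strategy matches the paper's: scale-and-round the first resource by a parameter $\Delta \approx \theta \cdot \bdgt_{min}/\hop$ (your $\alpha$ is essentially this, though note that feasible walks may repeat vertices to satisfy covering constraints, so the correct bound is $|p| \leq \hop = O(n)$ rather than $|p| \leq n$), prove two-sided fidelity between the original and scaled instances (Claims \ref{cl:junctree_scalingfeasible} and \ref{cl:junctree_scalingoptimal} in the paper), and then run the integer machinery.

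The one real gap is in your final step. You cannot invoke Theorem \ref{thm:mdrcjt_vanilla} as a black box, because its hypothesis requires $\boldsymbol{r_e}[1] \in \{1,\ldots,T\}$, and after scaling you still have possibly negative integers. Your proposed fix --- a ``standard shift'' --- does not work: adding a constant $C$ to every edge's first-resource value changes a walk's total consumption by $C\cdot |p|$, which depends on the (unknown) walk length and so cannot be absorbed into a single shifted budget. You also conflate this shift with the backward-edge trick of \cite{grigorescu2024directed}, but these are opposite ideas: the backward-edge trick does \emph{not} make consumptions positive; it instead extends the product graph to allow negative layer indices. That is exactly what the paper does: it rebuilds the product graph $\bar G_r$ directly on the scaled instance with layer range $[\boldsymbol{t^-}[1], \boldsymbol{t^+}[1]]$ where $\boldsymbol{t^-}[1] = -O(\eta n^2/\theta)$ and $\boldsymbol{t^+}[1] = O(\xi n/\theta) + |\boldsymbol{t^-}[1]|$, and then the height-reduction, LP, pruning, and Group Steiner rounding steps are carried over verbatim from the proof of Theorem \ref{thm:mdrcjt_vanilla}. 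So the reduction is not to the \emph{statement} of Theorem \ref{thm:mdrcjt_vanilla} but to its \emph{proof}, with the single change that the layer range for the first resource now extends into the negatives. Once you reframe your last step this way, your argument goes through and matches the paper.
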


We note that when the edge lengths are non-negative, $\eta = 0$, so the running time is $\poly(n,1/\theta,\xi)$.

\subsection{Proof of Theorem \ref{thm:mdrcjt_complex}}
    
\subsubsection{High-level Idea}

We do the following procedure using every possible root $r \in V$ and then take whichever case has the minimum density among all the possible roots. First, we scale the resource consumptions of the zeroth resource and turn them into integers. This step comes at the cost of only being able to approximately satisfy the zeroth resource. Then we build a product graph to turn the scaled resource constraints/budgets into connectivity constraints. This gives rise to an instance of the \mslc problem. Then we use the height reduction result from \cite{zelikovsky1997series} to obtain a shallow tree  for the instance of \mslc. Finally, we can solve this instance by using the \gsf problem using an extension of the approach in \cite{chlamtavc2020approximating}.

We only detail the scaling procedure in this proof. The rest of this proof is identical to the proof of Theorem \ref{thm:mdrcjt_vanilla} and therefore omitted. 

\subsubsection{Scaling the weights}

    The overall scaling approach here is based on \cite{grigorescu2024directed,horvath2018multi}. Let the resource consumption of $P_{s,t}$ be denoted by $\res(P_{s,t})$. Let $\bdgt_{max} =  \max_{(s,t) \in D}| {\bdgt(s,t)[0]} |$ and $\bdgt_{min} = \min_{(s,t) \in D}  | {\bdgt(s,t)[0]} |$. Also, let $\lenlow = \min_{e \in E} \boldsymbol{r_e}[0] \text{ and } \lenup = \max_{e \in E} \boldsymbol{r_e}[0]$. In the given instance of \pks, let \hop be the smallest positive integer such that for every $(s,t) \in D$, there exists a feasible $s \leadsto t$ walk $p(s,t)$ with fewer than \hop edges. Recall that we assume that there are no negative cycles for the zeroth resource and the budget for the last $m$ resources is always a constant. Therefore, there exists a positive constant $c$ with $\hop \leq c \cdot \poly(n)$. Let $\Delta = \theta \cdot (\bdgt_{min}) /\hop$: intuitively $\Delta$ is a measure of the level of precision we need when we scale the edge lengths. 

    Here is the scaling procedure: for any edge $e \in E$, we set $\sres_{e}[1] = d_{e} \cdot \Delta^1$ where $d_{e}$ is some integer which ensures $(d_{e} - 1) \cdot \Delta < \boldsymbol{r}_e[0] \leq d_{e} \cdot \Delta$ is true. The last $m$ resources undergo no change due to the scaling procedure i.e., $\sres_{e}[i] = \boldsymbol{r_e}[i] \; \forall i \in [1,m], e \in E$. We call this new graph with the scaled edge resource consumption as the scaled graph $\bar{G}$. $\bar{G}$ has the vertex set $V$ and edge set $E$, but the resource consumptions in $\bar{G}$ are set to $\sres_e$. The costs for the edges in $\bar{G}$ are inherited from the corresponding edges in $G$.

    In the following discussion, we are going to slightly abuse notation and use the same variable to denote a walk in both the original and scaled graph. The idea is that the scaled version of a walk from the original graph is another walk with the corresponding sequence of vertices. The only thing that will change is the function used to calculate the resource consumption of the walk in the original and scaled graph.

\begin{figure}[H]
\begin{subfigure}[b]{.45\linewidth}
\centering
\begin{tikzpicture}[scale=0.5]
    \node[fill,circle, inner sep=0pt, minimum size=0.2cm] (a) at (-3,-3) {};
    \node[fill,circle, inner sep=0pt, minimum size=0.2cm] (c) at (-3,3) {};
    \node[fill,circle, inner sep=0pt, minimum size=0.2cm] (b) at (4,-3) {};
    \node[fill,circle, inner sep=0pt, minimum size=0.2cm] (d) at (4,3) {};
    \node[fill,circle, inner sep=0pt, minimum size=0.2cm] (r) at (0.5,6) {};
    \node at (-4,-3.7) {$a$};
    \node at (-4,3.7) {$c$};
    \node at (5,-3.7) {$b$};
    \node at (5,3.7) {$d$};
    \node at (0.5,6.7) {$r$};

    \path[->] (a) edge node[below,align = center] {\footnotesize {\color{black} $2$} }  (b);
  
   \path[->] (b) edge node[below,align = center, rotate=300, pos=0.4] {\footnotesize {\color{black} $3$} }  (r);
   \path[->] (c) edge node[left,align = center] {\footnotesize {\color{black} $1$} }  (a);
   \path[->] (c) edge node[below,align = center] {\footnotesize {\color{black} $2$} }  (d);
   \path[->] (d) edge node[right,align = center] {\footnotesize {\color{black} $1$} }  (b);
   \path[->] (d) edge node[above,align = center, rotate=315] {\footnotesize {\color{black} $2$} }  (r);
   \path[->] (r) edge node[below,align = center, rotate= 60, pos=0.6] {\footnotesize {\color{black} $4$} }  (a);
   \path[->] (r) edge node[above,align = center, rotate=45] {\footnotesize {\color{black} $-3$} }  (c);
\end{tikzpicture}
\subcaption{The unscaled graph $G$.}
\end{subfigure}
\begin{subfigure}[b]{.45\linewidth}
\centering
\begin{tikzpicture}[scale=0.5]
     \node[fill,circle, inner sep=0pt, minimum size=0.2cm] (a) at (-3,-3) {};
    \node[fill,circle, inner sep=0pt, minimum size=0.2cm] (c) at (-3,3) {};
    \node[fill,circle, inner sep=0pt, minimum size=0.2cm] (b) at (4,-3) {};
    \node[fill,circle, inner sep=0pt, minimum size=0.2cm] (d) at (4,3) {};
    \node[fill,circle, inner sep=0pt, minimum size=0.2cm] (r) at (0.5,6) {};
    \node at (-4,-3.7) {$a$};
    \node at (-4,3.7) {$c$};
    \node at (5,-3.7) {$b$};
    \node at (5,3.7) {$d$};
    \node at (0.5,6.7) {$r$};

    \path[->] (a) edge node[below,align = center] {\footnotesize {\color{black} $2$} }  (b);
   \path[->] (b) edge node[above,align = center, rotate=300] {\footnotesize {\color{black} $4$} }  (r);
   \path[->] (c) edge node[left,align = center] {\footnotesize {\color{black} $2$} }  (a);
   \path[->] (c) edge node[below,align = center] {\footnotesize {\color{black} $2$} }  (d);
   \path[->] (d) edge node[right,align = center] {\footnotesize {\color{black} $2$} }  (b);
   \path[->] (d) edge node[above,align = center, rotate=315] {\footnotesize {\color{black} $2$} }  (r);
   \path[->] (r) edge node[above,align = center, rotate= 60] {\footnotesize {\color{black} $4$} }  (a);
   \path[->] (r) edge node[above,align = center, rotate=45] {\footnotesize {\color{black} $-2$} }  (c);
\end{tikzpicture}
\subcaption{The scaled graph $\bar{G}$}
\end{subfigure}

\caption{An example of the scaling process when we have only one resource. The numbers on top of the edge denotes the resource consumption of that edge. In this example $\Delta = 2$. All edge weights are inherited from the original instance and therefore omitted in this graph to improve clarity.}
\label{fig:original_scaled_graph}
\end{figure}

Let $\sres(P) = \sum_{e \in P} \sres_{e}$ (recall that $\res(P) = \sum_{e \in P} \boldsymbol{r_e}$) for any walk $P$. For any $(s,t) \in D$, let $P_{(s,t)}$ be a feasible $s \leadsto t$ walk with the fewest number of edges. From our assumptions, $P_{(s,t)}$ has fewer than \hop edges. We have, 

\begin{align} 
     \sres(P_{(s,t)})[0] = \sum_{e \in P_{(s,t)}} \sres_{e}[0] \leq \sum_{e \in P_{(s,t)}} (\boldsymbol{r_e}[0] +\Delta).
     \notag
\end{align}

Since $P_{(s,t)}$ has $\leq \hop$ edges,

\begin{align} 
     \sres(P_{(s,t)})[0] \leq \left(\sum_{e \in P_{(s,t)}} \boldsymbol{r_e}[0] \right) + \hop \cdot (\Delta).
     \notag
\end{align}

Thus, we have,
\begin{align} \label{eq:horvathjunc_inequality}
     \sres(P_{(s,t)})[0] \leq \res(P_{(s,t)})[0] + \theta_1 \cdot \bdgt_{min} 
     \notag
     \\
     \leq \res(P_{(s,t)})[0] + \theta_1 \cdot \bdgt[0] \cdot \siign(\bdgt[0]).
\end{align}

Furthermore, since $\boldsymbol{r_e} \preceq \sres_e$, we have,
\begin{align} \label{eq:horvathjunc_inequality2}
     \res(P_{(s,t)}) \preceq \sres(P_{(s,t)}).
\end{align}

Roughly speaking, \eqref{eq:horvathjunc_inequality} and \eqref{eq:horvathjunc_inequality2} tell us that the resource consumptions of the scaled walks are close enough to the resource consumptions of their respective original walks (this is obvious for the last $m$ resources which undergo no change).

More formally, we have the following claim:
\begin{claim} \label{cl:junctree_scalingfeasible}
     If $P_{(s,t)}$ is a feasible solution for $(s,t) \in D$ (i.e., $\res^1(P_{(s,t)}) \preceq \bdgt(s,t)$), then $P_{(s,t)}$ is a $\theta$-feasible solution when we scale the weights (recall Definition \ref{def:theta_feasible_path}).

     Furthermore, if ${P_{(s,t)}}$ is a $\theta$-feasible solution in the scaled graph, then ${P_{(s,t)}}$ is a $\theta$-feasible solution in the unscaled/original graph.
\end{claim}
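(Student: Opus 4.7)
\textbf{Proof proposal for Claim \ref{cl:junctree_scalingfeasible}.}

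The plan is to prove each of the two directions separately, in both cases by combining the already-established inequalities \eqref{eq:horvathjunc_inequality} and \eqref{eq:horvathjunc_inequality2} with the two definitions of feasibility (Definition \ref{def:feasible_path}) and $\theta$-feasibility (Definition \ref{def:theta_feasible_path}). The resources split naturally into two groups: the first resource, which is the only one affected by scaling, and the last $m$ resources, where $\sres_e[i]=\boldsymbol{r_e}[i]$ for all $e\in E$ and $i\in[2,m+1]$, so that $\sres(P)[i]=\res(P)[i]$ for any walk $P$. The latter group is therefore immediate in both directions and the only real content concerns the first coordinate.

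For the forward direction, assume $P_{(s,t)}$ is feasible in the original graph, so $\res(P_{(s,t)})[1]\le \bdgt(s,t)[1]$. Applying \eqref{eq:horvathjunc_inequality} I obtain
\[
\sres(P_{(s,t)})[1]\;\le\;\res(P_{(s,t)})[1] + \theta\cdot \bdgt(s,t)[1]\cdot \siign(\bdgt(s,t)[1])\;\le\;\bdgt(s,t)[1]\bigl(1+\theta\cdot \siign(\bdgt(s,t)[1])\bigr),
\]
which is exactly the first bullet of Definition \ref{def:theta_feasible_path}. Here one has to verify the sign cases are consistent: when $\bdgt(s,t)[1]>0$ the right-hand side equals $(1+\theta)\bdgt(s,t)[1]$, and when $\bdgt(s,t)[1]<0$ the added term $\theta\bdgt(s,t)[1]\siign(\bdgt(s,t)[1])=\theta|\bdgt(s,t)[1]|>0$ relaxes the bound, giving $(1-\theta)\bdgt(s,t)[1]$, which is again of the stated form. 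The remaining coordinates follow because $\sres(P_{(s,t)})[i]=\res(P_{(s,t)})[i]\le \bdgt(s,t)[i]$ for $i\in[2,m+1]$.

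For the reverse direction, assume $P_{(s,t)}$ is $\theta$-feasible in the scaled graph. By \eqref{eq:horvathjunc_inequality2} we have $\res(P_{(s,t)})[1]\le \sres(P_{(s,t)})[1]\le \bdgt(s,t)[1](1+\theta\cdot \siign(\bdgt(s,t)[1]))$, which verifies $\theta$-feasibility of the first coordinate in the unscaled graph. For $i\in[2,m+1]$, the equality $\res(P_{(s,t)})[i]=\sres(P_{(s,t)})[i]\le \bdgt(s,t)[i]$ once again transfers the constraint unchanged.

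I do not expect any significant obstacle here: the proof is essentially a bookkeeping exercise that packages \eqref{eq:horvathjunc_inequality} and \eqref{eq:horvathjunc_inequality2} into the vocabulary of Definitions \ref{def:feasible_path} and \ref{def:theta_feasible_path}. The only spot where one must pause is the sign bookkeeping for the first resource when $\bdgt(s,t)[1]$ is negative, since one needs to make sure the inequality direction in Definition \ref{def:theta_feasible_path} matches the relaxed-upper-bound interpretation after multiplying by $(1+\theta\cdot \siign(\bdgt(s,t)[1]))$; once this case analysis is done explicitly, both directions close in a few lines.
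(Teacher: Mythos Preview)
Your proposal is correct and follows essentially the same route as the paper: both directions are obtained by plugging inequalities \eqref{eq:horvathjunc_inequality} and \eqref{eq:horvathjunc_inequality2} into Definitions \ref{def:feasible_path} and \ref{def:theta_feasible_path}, with the last $m$ coordinates handled trivially since they are unchanged by scaling. Your treatment is in fact slightly more explicit than the paper's (you spell out the sign case analysis and the coordinates $i\in[2,m+1]$), but the argument is the same.
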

\begin{proof}

      If $P_{(s,t)}$ is a feasible solution for $(s,t) \in D$, then $\res^1(P_{(s,t)}) \leq \bdgt(s,t)$. Thus, using \eqref{eq:horvathjunc_inequality}, we can see that $\sres(P_{(s,t)})[0] \leq (1+\theta \cdot \siign(\bdgt(s,t)[0])) \cdot \bdgt(s,t)[0]$. Thus, $P_{s,t}$ is a $\theta$-feasible solution for the scaled version. 

      Now, let us look in the opposite direction. If $P_{(s,t)}$ is a $\theta$-feasible solution in the scaled graph, then $\sres(P_{(s,t)})[0] \leq (1+\theta \cdot \siign(\bdgt(s,t)[0])) \cdot  \bdgt(s,t)[0]$. Then using \eqref{eq:horvathjunc_inequality2}, we can see that,

      \begin{align} 
        \res^1(P_{(s,t)})[0] \leq \sres(P_{(s,t)})[0] \leq (1+\theta \cdot \siign(\bdgt(s,t)[0])) \cdot  \bdgt(s,t)[0] .
        \notag
    \end{align}
     which proves our Claim (Note that we don't need to prove anything for the last $m$ resources since they undergo no change).
\end{proof}

Next, we prove another claim that compares the cost of a partial solution in $\bar{G}$ with the cost of a partial solution in $G$.

\begin{claim} \label{cl:junctree_scalingoptimal}
    For any $f > 0$, and set of demand pairs $D' \subseteq D$, if there exists a sub graph $H$ in $G$ of total cost $\leq f$ containing a walk of resource consumption at most $\bdgt(s,t)$ from $s$ to $t$ for every $(s,t) \in D'$ then there exists a sub graph $\bar{H}$ in $\bar{G}$ of total cost $\leq f$ containing a $\theta$-feasible walk from $s$ to $t$ for every $(s,t) \in D'$.

    In addition, for any $f > 0$, if there exists a subgraph $\bar{H}$ in $\bar{G}$ of total cost $\leq f$ containing a $\theta$-feasible walk from $s$ to $t$ for every $(s,t) \in D'$ then there exists a subgraph $H$ in $G$ of total cost $\leq f$ containing a $\theta$-feasible walk from $s$ to $t$ for every $(s,t) \in D'$.
\end{claim}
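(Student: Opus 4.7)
The plan is to observe that $G$ and $\bar{G}$ share exactly the same vertex set $V$ and edge set $E$, and that the edge costs in $\bar{G}$ are inherited unchanged from $G$; they differ only in the resource-consumption vectors (and only in the first coordinate, since $\sres_e[i] = \boldsymbol{r_e}[i]$ for $i \in [2,m+1]$). Consequently, any subgraph of $G$ corresponds to a subgraph of $\bar{G}$ using the same edge set, and vice versa, with identical cost. So the entire content of the claim reduces to transferring the walk-feasibility properties between the two graphs, which is exactly what Claim~\ref{cl:junctree_scalingfeasible} provides.

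For the forward direction, I would start with $H \subseteq G$ of cost $\le f$ containing, for each $(s,t) \in D'$, a feasible walk $p(s,t)$. Define $\bar{H}$ to be the subgraph of $\bar{G}$ on exactly the same edge set as $H$; then $\cost(\bar{H}) = \cost(H) \le f$ because edge costs agree. For every $(s,t) \in D'$, the sequence of edges defining $p(s,t)$ is equally a walk in $\bar{H}$ since $E(\bar{H}) = E(H)$. By the first part of Claim~\ref{cl:junctree_scalingfeasible}, a feasible walk in $G$ is $\theta$-feasible when viewed in $\bar{G}$, so $p(s,t)$ is a $\theta$-feasible $s \leadsto t$ walk in $\bar{H}$, as required.

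For the reverse direction, start with $\bar{H} \subseteq \bar{G}$ of cost $\le f$ containing, for each $(s,t) \in D'$, a $\theta$-feasible walk $\bar{p}(s,t)$. Again let $H$ be the subgraph of $G$ on the same edge set, so $\cost(H) = \cost(\bar{H}) \le f$. The walk $\bar{p}(s,t)$ uses only edges of $H$, so it is also a walk in $H$. Applying the second part of Claim~\ref{cl:junctree_scalingfeasible} (which shows that $\theta$-feasibility in $\bar{G}$ implies $\theta$-feasibility in $G$), we conclude that $\bar{p}(s,t)$ is a $\theta$-feasible $s \leadsto t$ walk in $H$, as required.

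There is no real obstacle here: the claim is essentially a packaging lemma that lifts the walk-level comparison of Claim~\ref{cl:junctree_scalingfeasible} to the subgraph level. The only thing worth being careful about is that the scaling changes \emph{only} the first resource coordinate of the consumption vectors and leaves the edge set, vertex set, and costs untouched, so the identity map on edges works as the bijection between the two sides. I would write the proof as two short paragraphs, one per direction, each invoking the appropriate half of Claim~\ref{cl:junctree_scalingfeasible}.
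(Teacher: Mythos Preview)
Your proposal is correct and follows essentially the same approach as the paper: set $\bar{H}=H$ (respectively $H=\bar{H}$), observe that costs are inherited unchanged so the cost bound transfers, and invoke the two halves of Claim~\ref{cl:junctree_scalingfeasible} to transfer feasibility/$\theta$-feasibility of each walk. The paper's proof is slightly terser but the logic is identical.
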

\begin{proof}
   The first part of the claim can be proved easily. We can just set $\bar{H} = H$. Now, for any $(s,t) \in D$, $H$ has some walk $p_1(s,t)$ that is feasible. Then using the first part of Claim \ref{cl:junctree_scalingfeasible} we can see that $p_1(s,t)$ is a $\theta$-feasible walk for the same demand pair $(s,t)$ in $\bar{H}$. Since we are using the same set of walks and costs are inherited, the cost will remain the same.

    For the second part, we can again set $H = \bar{H}$. Now, for any $(s,t) \in D$, $\bar{H}$ has some walk $p_1(s,t)$ that is $\theta$-feasible. Using the second part of Claim \ref{cl:junctree_scalingfeasible} we can see that $p_1(s,t)$ is a $\theta$-feasible walk for the same demand pair $(s,t)$ in $H$. Since we are using the same set of walks and costs are inherited, the cost will remain the same.
\end{proof}

\subsubsection{Turning distance constraints into connectivity constraints}

\paragraph{High level idea and potential challenges:}

For a specific root vertex $r$, we turn our distance constraints with edge costs problem into a connectivity problem with edge costs. The overall process for this is as follows. We first build a product graph. After building this product graph, a number of dummy terminals are added to various layers - now the problem of finding a $s \leadsto r \leadsto t$ walk within the resource budget changes into a problem of connecting the correct dummy terminals. 

\paragraph{Graph construction:} Let us now see our graph construction. Let $\Delta$ be the scaling parameter associated with the given instance of \spks. Let $\boldsymbol{\delta}$ be a vector assigned as follows: $\boldsymbol{\delta}[0] = \Delta$, $\boldsymbol{\delta}[i] = 1 \; \forall i \in [1,m]$. Recall that we use the vectors $\boldsymbol{t^{-}}$ and $\boldsymbol{t^{+}}$ represent the smallest and largest possible multiple of $\boldsymbol{\delta}$ the resource consumption of a feasible walk could take. 

Observe that $\boldsymbol{t}^{-}[0] = \lfloor \min \bigl( \left( min_{e \in E} \boldsymbol{r_e}[1] \right) \cdot \hop / \Delta , 0 \bigr)  \rfloor$ - this happens when we take $\hop$ consecutive edges of scaled weight at least $min_{e \in E} \boldsymbol{r_e}[1]/\Delta$. Plugging in the value of $\Delta$, we have $\boldsymbol{t}^{-}[0] \geq (\eta \cdot \hop^2) / \theta = O((\eta \cdot n^2) / \theta)$ (Recall that $\hop \leq cn$). Now note that, $\boldsymbol{t}^{+}[0] = \lceil \bdgt_{max} \cdot (1+\theta) / \Delta \rceil + | \boldsymbol{t}^{-}[0] | $. This is because any feasible scaled walk has a length at most $\bdgt_{max} \cdot (1+\theta)$  - but a subwalk could be longer because it could decrease its length by as much as $t^{-} \cdot \Delta$ when it takes edges of negative length. Plugging in the value of $\Delta$, we have $\boldsymbol{t}^+[0] \leq \xi \cdot \hop /\theta + | \boldsymbol{t}^{-}[0] |$

For $i \in [1,m]$, 

\begin{itemize}
    \item when $i$ is a packing resource, $\boldsymbol{t}^{-}[i] = 0$ and $\boldsymbol{t}^{+}[i] = \threshold$.
    \item when $i$ is a covering resource, $\boldsymbol{t}^{+}[i] = 0$ and $\boldsymbol{t}^{-}[i] = -\threshold$.
\end{itemize}. 

We construct a product graph $\bar{G}_r$ with the following vertices:

\begin{align*}
         \Bar{V}_r^L = \left((V\setminus r) \times \prod_{i=0}^{m} \{\boldsymbol{t^{-}}[i] \cdot \boldsymbol{\delta[i]},\ldots,-2 \cdot \boldsymbol{\delta[i]} ,-1 \cdot \boldsymbol{\delta[i]},0 ,1 \cdot \boldsymbol{\delta[i]} ,2 \cdot \boldsymbol{\delta[i]} ,\ldots,\boldsymbol{t^{+}}[i] \cdot \boldsymbol{\delta[i]}\} \times \{L\}\right) \notag \cup \{(r,\zvec,L)\}, \
\end{align*}

\begin{align*}
         \Bar{V}_r^R = \left((V\setminus r) \times \prod_{i=0}^{m} \{\boldsymbol{t^{-}}[i] \cdot \boldsymbol{\delta[i]},\ldots,-2 \cdot \boldsymbol{\delta[i]} ,-1 \cdot \boldsymbol{\delta[i]},0 ,1 \cdot \boldsymbol{\delta[i]} ,2 \cdot \boldsymbol{\delta[i]} ,\ldots,\boldsymbol{t^{+}}[i] \cdot \boldsymbol{\delta[i]}\} \times \{R\}\right)\cup \{(r,\zvec,R)\},
\end{align*}

\begin{equation}
    \begin{split}
         \bar{V}_r = \bar{V}_r^R \cup \bar{V}_r^L.
    \end{split}
\end{equation}

As an example, a vertex in the newly constructed graph looks as follows: $(u,\boldsymbol{I} \cdot \boldsymbol{\delta}, L)$ \footnote{Throughout this work, given two vectors $\boldsymbol{a}$ and $\boldsymbol{b}$, $\boldsymbol{a} \cdot \boldsymbol{b}$ denotes a new vector $\boldsymbol{c}$ with $\boldsymbol{c}[i] = \boldsymbol{a}[i] \cdot \boldsymbol{b}[i]$}. This denotes that the new vertex is a copy of the vertex $u$ from the scaled graph $\bar{G}$, and this vertex can reach the root with a resource consumption of $\boldsymbol{I} \cdot \boldsymbol{\delta}$. Let us call this $\boldsymbol{I}$ (note that $\boldsymbol{I} \in Z ^ {m+1}$) as the label of the layer. We will explain the relevance of $L$ and $R$ later on. For now, it suffices to think of them as labels to distinguish two separate copies of the same vertex set.

We connect these vertices with the following edges:

\begin{equation}
    \begin{split}
        \Bar{E}_r^R = \{((u,\boldsymbol{I} \cdot \boldsymbol{\delta},R)(v,\boldsymbol{J} \cdot \boldsymbol{\delta},R)) | (u,\boldsymbol{I}\cdot \boldsymbol{\delta},R),(v,\boldsymbol{J}\cdot \boldsymbol{\delta},R) \in \bar{V}_r^R,(u,v)\in E \\
        \text{ and } \boldsymbol{\Bar{r}}_{(u,v)}=(\boldsymbol{J}-\boldsymbol{I} )\cdot \boldsymbol{\delta} \text{ where } \boldsymbol{I},\boldsymbol{J} \in  Z^m \text{ and } \boldsymbol{\Bar{r}}_{(u,v)} \text{ is the scaled length of the edge} (u,v)\}.
    \end{split}
\end{equation}

\begin{equation}
    \begin{split}
        \Bar{E}_r^L = \{((u,\boldsymbol{I} \cdot \boldsymbol{\delta},L)(v,\boldsymbol{J} \cdot \boldsymbol{\delta},L)) | (u,\boldsymbol{I}\cdot \boldsymbol{\delta},L),(v,\boldsymbol{J}\cdot \boldsymbol{\delta},L) \in \bar{V}_r^L,(u,v)\in E \\
        \text{ and } \boldsymbol{\Bar{r}}_{(u,v)}=(\boldsymbol{I} - \boldsymbol{J} )\cdot \boldsymbol{\delta} \text{ where } \boldsymbol{I},\boldsymbol{J} \in  Z^m \text{ and } \boldsymbol{\Bar{r}}_{(u,v)} \text{ is the scaled length of the edge} (u,v)\}.
    \end{split}
\end{equation}

Intuitively, we add an edge between two vertices whenever it makes sense i.e. when the original copies of these two vertices are connected in the scaled graph $\bar{G}$ and the layer separation between these two vertices is equal to the resource consumption of the scaled edge in  $\bar{G}$. The edges in our product graph $\Bar{G}_r$ inherit the costs from the corresponding edges in the original graph. The edges and vertices are built in such a way that if there is a walk from $(u,\boldsymbol{I} \cdot \boldsymbol{\delta},L)$ to the root $r$, then in the scaled graph there is a $u \leadsto r$ walk $p(u,r)$ such that $\res(p(u,r)) = \boldsymbol{I} \cdot \boldsymbol{\delta}$. Similarly, if there is a walk from the root $r$ to $(u,\boldsymbol{I} \cdot\boldsymbol{\delta},R)$, then in the scaled graph there is a $r \leadsto u$ walk $p(r,u)$ such that $\res(p(r,u)) = \boldsymbol{I} \cdot \boldsymbol{\delta}$.

We add one final edge. This connects $(r,\zvec,L)$ to $(r,\zvec,R)$. It is a dummy edge and therefore it has zero cost.

\begin{equation}
    \Bar{E}_r = \Bar{E}_r \cup \{((r,\zvec,L),(r,\zvec,R))\}.
\end{equation}

Let,

\begin{equation}
    \Bar{E}_r = \Bar{E}_r^R \cup \Bar{E}_r^L.
\end{equation}

\begin{equation}
    \bar{G}_r = (\bar{V}_r,\bar{E}_r).
\end{equation}

\begin{definition} \label{de:valid_layer_complex}
    For a given $\theta_2 \in R^+$, we call a vector $\boldsymbol{I} \in Z^{m+1}$ {\em valid} if $\boldsymbol{t^{-}} \preceq \boldsymbol{I} \preceq \boldsymbol{t^{+}}$.
\end{definition} 

For every terminal pair $(s,t) \in D$, do the following, 

\begin{enumerate}
    \item Add new vertices $(s^t,\boldsymbol{I} \cdot \boldsymbol{\delta})$ and $(t^s,\boldsymbol{J} \cdot \boldsymbol{\delta})$ for all {\em valid} vectors $\boldsymbol{I},\boldsymbol{J}$  to $\bar{V}_r$.
    \item For all such $\boldsymbol{I}$ and $\boldsymbol{J}$ add edges $((s^t,\boldsymbol{I} \cdot \boldsymbol{\delta})(s,(\boldsymbol{I} \cdot \boldsymbol{\delta},L)))$ and $((t,(\boldsymbol{J} \cdot \boldsymbol{\delta},R))(t^s,\boldsymbol{J} \cdot \boldsymbol{\delta}))$ with zero cost to $E_r$.
    \item Now for every terminal pair $(s,t) \in P$ define:
    \begin{enumerate}
        \item terminal sets $S_{s,t} = \{(s^t,\boldsymbol{I} \cdot \boldsymbol{\delta}) \:\forall \text{ {\em valid} vectors } I\}$,
        \item $T_{s,t} = \{(t^s,\boldsymbol{J} \cdot \boldsymbol{\delta}) | \:\forall \text{ {\em valid} vectors }J\}$ and 
        \item relation $R_{s,t} = \{(s^t,\boldsymbol{I}\cdot \boldsymbol{\delta}),(t^s,\boldsymbol{J}\cdot \boldsymbol{\delta}) \in S_{s,t}\times T_{s,t} | (\boldsymbol{I}+\boldsymbol{J}) \cdot \boldsymbol{\delta} \preceq \bdgt(s,t) $.
    \end{enumerate}
\end{enumerate}

As in Section \ref{sec:jtree_vanilla_version}, the above construction does not handle one specific case where a feasible walk consumes fewer resources than $t^-[i]$. 

Observe that, since a covering resource always has $\boldsymbol{r_e}[i] \leq 0$, once a walk $p$ has $\res(p)[i] \leq \bdgt[i]$, adding more edges to the walk will never make $\res(p)[i] > \bdgt[i]$. So, if a walk has $\res(p)[i] \leq \bdgt[i]$, we don't need to track $\res(p)[i]$ any more. We just need to ensure that the walk leads to the root i.e., at this stage, we only need to preserve connectivity not the resource constraint. Thus, we don't need to build any extra layers beyond $\bdgt[i] >= t^-[i]$ for the $i^{th}$ resource. To remedy the issue mentioned in the previous paragraph, we add another set of edges. For every covering resource $i$, every valid vector $\hat{I}$ of the form $(\ldots,t^-[i],\ldots)$ and for every edge $(u,v) \in E$, we add an edge from $(u,\hat{I})$ to $(v,\hat{I})$ to $\bar{E_r}$. These edges will allow us to track walks with $\res(p)[i] \leq -\tau \leq \bdgt[i]$.

\paragraph{Relating the product graph with the scaled graph:}

Let us now create a simpler graph: Let $\hat{G}$ be a graph comprised of two copies of $\bar{G}$ - namely $G_{-}$ and $G_{+}$ intersecting in the node $r$. For every vertex $u \in V$, let $u_{+}$ and $u_{-}$ denote the copies of $u$ in $G_{+}$ and $G_{-}$ respectively. We call this graph as the {\em intersection graph}. Let $|\validlayer| = \prod_{i=1}^{m} \left(O(\boldsymbol{t^{-}}[i]+ \boldsymbol{t^{+}}[i] \right)$.

The following two claims will relate the product graph $\bar{G_r}$ with the intersection graph $\hat{G}$ (and thus indirectly the scaled graph $\bar{G}$).

\begin{claim} 
    For any $f > 0$, and set of terminal pairs $D' \subseteq D$, assume there exists a subgraph $\hat{H}$ in $\hat{G}$ of total cost $\leq f$ containing a walk of resource consumption at most $\bdgt(s,t)$ from $s_+$ to $t_-$ for every $(s,t) \in D'$. Then there exists a subgraph $\bar{H_r}$ in $\bar{G_r}$ of total cost $\leq f \cdot |\validlayer|$ containing a walk from $(s^t, \boldsymbol{I} \cdot \boldsymbol{\delta})$ to $(t^s, \boldsymbol{J} \cdot \boldsymbol{\delta})$ such that $((s^t,\boldsymbol{I} \cdot \boldsymbol{\delta}),(t^s,\boldsymbol{J}  \cdot \boldsymbol{\delta})) \in R_{s,t}$ for every $(s,t) \in D'$. 

    \end{claim}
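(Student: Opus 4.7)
The plan is to mirror the proof of the integer-version analogue (Claim~\ref{cl:junctree_layeredoriginalcomparison}), substituting the scaled resource consumption $\sres$ in place of $\res$ when defining layer labels. Fix $(s,t) \in D'$ and let $p(s,t)$ be the $s_+ \leadsto t_-$ walk in $\hat{H}$; since $\hat{G}$ is the union of two copies of $\bar{G}$ joined only at $r$, this walk must pass through $r$, so split it as $p(s,r) \cdot p(r,t)$. Set $\boldsymbol{I}, \boldsymbol{J} \in \Z^{m+1}$ by $\boldsymbol{I}\cdot\boldsymbol{\delta} = \sres(p(s,r))$ and $\boldsymbol{J}\cdot\boldsymbol{\delta} = \sres(p(r,t))$, which are well-defined because $\sres_e[1]$ is an integer multiple of $\Delta$ and $\sres_e[i] = \boldsymbol{r_e}[i] \in \Z$ for $i \ge 2$. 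I would then verify that both $\boldsymbol{I}$ and $\boldsymbol{J}$ are valid coordinate by coordinate: for $i=1$ the bound $\boldsymbol{t^-}[1] \le \lfloor (\min_e \boldsymbol{r_e}[1]) \cdot \hop / \Delta \rfloor$ lower-bounds any subwalk of at most $\hop$ edges, and $\boldsymbol{t^+}[1]$ absorbs the $\theta$-relaxed upper budget; for $i \ge 2$ the monotone sign of $\boldsymbol{r_e}[i]$ forces the cumulative consumption into $[-\threshold,0]$ or $[0,\threshold]$, which the chosen $\boldsymbol{t^-}[i], \boldsymbol{t^+}[i]$ span.

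Next, I would lift $p(s,r)$ to the $L$-side of $\bar{G}_r$ by labeling each intermediate vertex $u$ along $p(s,r)$ with the scaled consumption of the \emph{suffix} walk from $u$ to $r$; by the definition of $\Bar{E}_r^L$, this traces a legal walk from $(s,\boldsymbol{I}\cdot\boldsymbol{\delta},L)$ to $(r,\zvec,L)$. I would lift $p(r,t)$ symmetrically on the $R$-side, from $(r,\zvec,R)$ to $(t,\boldsymbol{J}\cdot\boldsymbol{\delta},R)$, using \emph{prefix} consumption. Concatenating via the zero-cost dummy edge $((r,\zvec,L),(r,\zvec,R))$ and prepending/appending the zero-cost edges to $(s^t,\boldsymbol{I}\cdot\boldsymbol{\delta})$ and $(t^s,\boldsymbol{J}\cdot\boldsymbol{\delta})$ gives the required walk. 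For membership $((s^t,\boldsymbol{I}\cdot\boldsymbol{\delta}),(t^s,\boldsymbol{J}\cdot\boldsymbol{\delta})) \in R_{s,t}$, note that $(\boldsymbol{I}+\boldsymbol{J})\cdot\boldsymbol{\delta} = \sres(p(s,t))$, and Claim~\ref{cl:junctree_scalingfeasible} together with feasibility of $p(s,t)$ in $G$ yields $\theta$-feasibility in $\bar{G}$: the sum is exactly within budget for coordinates $2,\ldots,m+1$ and within the $\theta$-relaxed budget on coordinate $1$, which is the intended reading of $R_{s,t}$ in the rational regime.

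Finally, take $\bar{H}_r$ to be the union over $(s,t) \in D'$ of these lifted walks. For the cost bound, I would invoke the overcounting analysis from the integer-version proof: each $e \in E$ appearing in $\hat{H}$ may be instantiated in $\bar{G}_r$ at many different first-coordinate layers, but once we fix the \emph{configuration} (the last $m$ coordinates of the label at the head vertex), the shift along the first coordinate is determined by $\sres_e[1]$, and all lifts of $e$ sharing a configuration at that vertex can be consolidated into a single dominating representative. Since the number of configurations is exactly $|\validlayer|$, the total number of instantiated copies of $e$ is at most $|\validlayer|$, giving $\cost(\bar{H}_r) \le |\validlayer| \cdot \cost(\hat{H}) \le f \cdot |\validlayer|$.

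The main obstacle I expect is making the consolidation step precise in the rational setting: one must argue that walks that share a configuration at $u$ but differ in the first-coordinate label can be rerouted through a single representative copy of $e$ without breaking any $R_{s,t}$ membership, and that the representative layer one picks still lies inside the valid range $[\boldsymbol{t^-}[1],\boldsymbol{t^+}[1]]$ after scaling by $\Delta$. This rests on the same dominance observation as in the integer proof, but requires care because in the rational regime the first-coordinate indices are integer multiples of $\Delta$ over a range that itself depends on $\eta$, $\xi$, and $1/\theta$; verifying that the extreme layers $\boldsymbol{t^-}[1]$ and $\boldsymbol{t^+}[1]$ were chosen large enough to contain the relevant suffix/prefix consumptions of every lifted walk is the one step that is specific to this version and is not present in the integer analogue.
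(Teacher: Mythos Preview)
Your proposal is correct and follows essentially the same approach as the paper: split the walk at $r$, set the layer labels to the scaled subwalk consumptions, lift to the product graph, and bound the cost via the configuration overcounting argument. One minor simplification: since $\hat{G}$ in this section is built from the already-scaled graph $\bar{G}$, the claim's hypothesis directly gives $\sres(p(s,t)) \preceq \bdgt(s,t)$, so your detour through Claim~\ref{cl:junctree_scalingfeasible} and a $\theta$-relaxed reading of $R_{s,t}$ is unnecessary here --- the $\theta$-relaxation enters earlier, via Claim~\ref{cl:junctree_scalingoptimal}, not in this step.
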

    
\begin{proof}
     Let $p(s_i,t_i)$ be a $s_i \leadsto t_i$ walk in $\hat{H}$. Due to the construction of $\hat{G}$, this walk is forced to go through the root $r$ and $\res(p(s_i,t_i) \leq \bdgt(s,t)$. Let $p(s_i,r)$ be a subwalk of $p(s_i,t_i)$ that ends at $r$ and let $\res(p(s_i,r)) = \boldsymbol{I} \cdot \boldsymbol{\delta}$. By construction $\boldsymbol{t^-} \preceq \boldsymbol{I} \preceq \boldsymbol{t^+}$. Similarly, let $p(r,t_i)$ be a $r \leadsto t_i$ subwalk of $p(s_i,t_i)$ and let $\res(p(s_i,r)) = \boldsymbol{J} \cdot \boldsymbol{\delta}$. By construction $\boldsymbol{t^-} \preceq \boldsymbol{J} \preceq \boldsymbol{t^+}$. 
        
    Observe that $\boldsymbol{J} + \boldsymbol{I} \preceq \bdgt(s,t)$. Additionally,  by construction $(s^t,\boldsymbol{I} \cdot \boldsymbol{\delta}),(t^s,\boldsymbol{J} \cdot \boldsymbol{\delta}) \in \bar{V}_r$ and thus $((s^t,\boldsymbol{I}),(t^s,\boldsymbol{J})) \in R_{s,t}$. Note that $(s^t,\boldsymbol{I} \cdot \boldsymbol{\delta})$ and $(t^s,\boldsymbol{J} \cdot \boldsymbol{\delta})$ can be connected by a walk analogous to $p(s_i,t_i)$. Compile all such walks and we will have $\bar{H_r}$.

    For the cost note that we effectively use the same set of edges but since $\bar{G_r}$ has multiple copies of the edges from $\bar{G}$, it could over count some edges. This issue (which does not arise in \cite{chlamtavc2020approximating}) is why we need to be careful about how many layers we build. In \cite{chlamtavc2020approximating}, we have $m=0$ and therefore the layers we include have a dominating set of size $1$ (i.e., when we are trying to reach the root, there is no need to consider an edge at a distance of $i$ from the root when we include a different copy of the same edge at a distance of $j$ from the root with $j < i$). That is not the case when $m \geq 1$. 
    
    For any layer $\boldsymbol{I}$, we call the assignment of the last $m$ elements of $\boldsymbol{I}$ as its configuration. We now note that for a specific configuration, we need only one copy of an edge. This is because when we fix a configuration, a specific assignment of the zeroth resource will dominate all other assignments (and thus we have a dominating set of size $1$ after we fix a configuration). In total, the number of configurations we have is given by $|\validlayer|$ and that proves the rest of the claim.
\end{proof}

\begin{claim}

    For any $f > 0$, assume there exists a subgraph $\bar{H_r}$ in $\bar{G_r}$ of total cost $\leq f$ containing a walk from from $(s^t, \boldsymbol{I} \cdot \cdot \boldsymbol{\delta})$ to $(t^s, \boldsymbol{J} \cdot \cdot \boldsymbol{\delta})$ such that $((s^t,\boldsymbol{I} \cdot \boldsymbol{\delta}),(t^s,\boldsymbol{J} \cdot \boldsymbol{\delta})) \in R_{s,t}$ for every $(s,t) \in D'$. Then there exists a subgraph $\hat{H}$ in $\hat{G}$ of total cost $\leq f$ containing a walk of resource consumption at most $\bdgt(s,t) $ from $s_+$ to $t_-$ for every $(s,t) \in D'$.
\end{claim}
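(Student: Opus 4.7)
The plan is to mirror the integer-version proof almost verbatim, with the only additional bookkeeping being that the edges in $\bar{G}_r$ are built from the scaled consumptions $\sres$ (with the encoding factor $\boldsymbol{\delta}$) rather than the original consumptions $\boldsymbol{r}$, so the resource bound we deliver in $\hat{G}$ will also be with respect to the scaled graph. Since $\hat{G}$ is explicitly defined as two copies of $\bar{G}$ glued at $r$, this is exactly the setting in which the claim is stated.

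First, for each $(s,t)\in D'$, fix an $(s^t, \boldsymbol{I}\cdot\boldsymbol{\delta})\leadsto(t^s,\boldsymbol{J}\cdot\boldsymbol{\delta})$ walk $p_{s,t}$ in $\bar{H}_r$; by the construction of $\bar{G}_r$, the only edge joining the $L$-copies to the $R$-copies is the zero-cost dummy edge $((r,\zvec,L),(r,\zvec,R))$, so $p_{s,t}$ must decompose as a walk inside $\bar{V}_r^L$ from $(s^t,\boldsymbol{I}\cdot\boldsymbol{\delta})$ to $(r,\zvec,L)$, followed by the dummy edge, followed by a walk inside $\bar{V}_r^R$ from $(r,\zvec,R)$ to $(t^s,\boldsymbol{J}\cdot\boldsymbol{\delta})$.

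Next, define the projection $\Phi:\bar{V}_r\to V(\hat{G})$ that sends $(u,\cdot,L)\mapsto u_+$, $(u,\cdot,R)\mapsto u_-$, $(s^t,\cdot)\mapsto s_+$, $(t^s,\cdot)\mapsto t_-$, and both root copies $(r,\zvec,L),(r,\zvec,R)\mapsto r$; extend $\Phi$ to edges by sending any edge built from $(u,v)\in E$ in the $L$- (resp.\ $R$-) half to the edge $(u_+,v_+)$ (resp.\ $(u_-,v_-)$) in $\hat{G}$, and sending the zero-cost dummy edge and the zero-cost attachment edges to the appropriate vertex identifications. Applying $\Phi$ to $p_{s,t}$ yields an $s_+\leadsto r\leadsto t_-$ walk $\hat{p}_{s,t}$ in $\hat{G}$. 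The scaled resource consumption of $\hat{p}_{s,t}$ equals that of $p_{s,t}$, which by the edge-construction of $\bar{G}_r$ telescopes to at most $(\boldsymbol{I}+\boldsymbol{J})\cdot\boldsymbol{\delta}$; since $((s^t,\boldsymbol{I}\cdot\boldsymbol{\delta}),(t^s,\boldsymbol{J}\cdot\boldsymbol{\delta}))\in R_{s,t}$, we have $(\boldsymbol{I}+\boldsymbol{J})\cdot\boldsymbol{\delta}\preceq\bdgt(s,t)$, which is exactly the required bound.

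Finally, set $\hat{H}:=\bigcup_{(s,t)\in D'}\Phi(p_{s,t})$, viewed as a subgraph of $\hat{G}$. Because $\Phi$ may identify many copies of the same underlying edge (different layer labels collapse to one $+$- or $-$-copy), the edge-set of $\hat{H}$ is a subset of the image of the edge-set of $\bar{H}_r$ under $\Phi$, and costs are inherited one-to-one from $\bar{G}_r$ to $\hat{G}$; hence $\cost(\hat{H})\le\cost(\bar{H}_r)\le f$. The only subtle step is the second one, where one must verify that the telescoping indeed gives exactly $(\boldsymbol{I}+\boldsymbol{J})\cdot\boldsymbol{\delta}$ as an upper bound on the scaled resource consumption of $p_{s,t}$; this is immediate from the definitions of $\bar{E}_r^L$ and $\bar{E}_r^R$, where any edge $((u,\boldsymbol{K}_1\cdot\boldsymbol{\delta},\cdot),(v,\boldsymbol{K}_2\cdot\boldsymbol{\delta},\cdot))$ encodes a resource step of exactly $\boldsymbol{\bar r}_{(u,v)}=\pm(\boldsymbol{K}_2-\boldsymbol{K}_1)\cdot\boldsymbol{\delta}$, so summing along the two halves of $p_{s,t}$ yields consumption at most $\boldsymbol{I}\cdot\boldsymbol{\delta}+\boldsymbol{J}\cdot\boldsymbol{\delta}$. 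There is no real obstacle here; the claim is essentially the easy direction of the reduction and mirrors the analogous lemma in the integer version.
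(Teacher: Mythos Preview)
Your proposal is correct and follows essentially the same approach as the paper: project each walk in $\bar{H}_r$ down to $\hat{G}$ via the natural layer-forgetting map, note that it must pass through $r$, read off the resource bound from the relation $R_{s,t}$, and take the union of the projected walks. The paper's proof is considerably terser (it just says ``due to the construction of $\bar{G}_r$'' for both the through-$r$ property and the resource bound), whereas you spell out the projection $\Phi$ and the telescoping explicitly; but the underlying argument is identical.
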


\begin{proof}
    This claim is much simpler. Let $p(s_i,t_i)$ be a $s_i \leadsto t_i$ walk in $\bar{H_r}$. Due to the construction of $\bar{G_r}$, this walk is forced to go through the root $r$ and $\res(p(s_i,t_i) \leq \bdgt(s,t)$. Due to the construction of $\bar{G_r}$, we can see that there is a $s_i \leadsto r \leadsto t_i$ walk $p'$ in $\hat{G}$ with the same cost and resource consumption $\preceq \res(p(s_i,t_i)) \preceq \bdgt(s,t)$. 

    Compile all such walks and we have $\hat{H}$. The overall cost of this set $\leq f$ since we can reuse the same set of edges (in practice the cost can be lower since $\hat{G}$ has only two duplicate copies of any edge unlike $\bar{G_r}$ which has several duplicate copies).
\end{proof}

\paragraph{Runtime:} 
Our runtime so far is polynomial in $|\bar{G}_r|$ (for any graph $G$, we use $|G|$ to denote the number of vertices in $G$). Note that $|\bar{G}_r|$ is a polynomial in $O(n \cdot \Pi_{i=0}^{i=m}(|\boldsymbol{t^{-}}[i]| + |\boldsymbol{t^{+}}[i]| ))$.

\paragraph{To summarize,} we have turned all budget constraints into connectivity constraints so far. We keep track of the budget constraints using some relations. 

The rest of this proof is identical to the proof of Theorem \ref{thm:mdrcjt_vanilla} and is therefore omitted.

\section{An approximation algorithm for \pks (Rational version)}
\label{sec:khalf_complex}
In this section we see a proof of Theorem \ref{thm:khalf_pks}.

\begin{restatable}{theorem}{thmkhalfpks} \label{thm:khalf_pks}

Let $k = |D|$ where $D$ is the set of demand pairs in the given instance of \pks.

Then, for any $\theta > 0$ and any two constants $\eps,m > 0$, there is a $\poly(n,\xi,\eta,1/\theta,\tau^m)$-time algorithm for \pks with an approximation ratio $\tilde{O}(k^{1/2 + \ep} \threshold^m \cdot \polylog(\tau^m \cdot ((\eta+\xi)/\theta)))$  returning a subgraph $H$ that contains a $\theta$-feasible walk $p(s,t)$ for each $(s,t) \in D$. Here, $\eta$ and $\xi$ are the condition numbers defined in \eqref{def:eta} and \eqref{def:xi}, respectively. When $1/\theta, \eta, \xi, \tau \in \poly(n)$, the algorithm runs in polynomial time.
\end{restatable}

\begin{proof}
    The proof closely follows that for Theorem \ref{thm:vpks_khalf}. The main difference is that we use $\theta$-relaxed resource-constrained junction tree solutions.

    \begin{definition}
A \emph{$\theta$-relaxed resource-constrained junction tree solution} is a collection of $\theta$-relaxed resource-constrained junction trees rooted at different vertices, such that there exists a $\theta$-feasible walk (recall Definition \ref{def:theta_feasible_path}) for all $(s,t) \in D$.
\end{definition}

Our proof has two main ingredients. First, 
we construct a $\theta$-feasible resource-constrained junction tree solution and compare its objective with the optimal $\theta$-feasible resource-constrained junction tree solution with objective value $\opt_{junc}$. Theorem \ref{thm:mdrcjt_complex} implies a $\poly(n,\xi,\eta,1/\theta,\tau^m,1/\ep)$-time algorithm that finds a $\theta$-feasible resource-constrained junction tree solution of cost at most $\tilde{O}(k^{\ep} \threshold^m \cdot \polylog(\tau^m \cdot ((\eta+\xi)/\theta))) \opt_{junc}$. Second, we show the existence of an $O(\sqrt{n})$-approximate solution consisting of resource-constrained junction trees, i.e., $\opt_{junc} \le O(\sqrt{n}) \opt$ where $\opt$ is the optimal cost when the resource constraints are strict. Combining these two ingredients implies Theorem \ref{thm:khalf_pks}.

We use a density argument as in the proof for Theorem \ref{thm:vpks_khalf}. The main difference is that here we consider $\theta$-relaxed resource-constrained junction trees. Let $\opt_{\theta}$ denote the cost of the optimal solution where the resource constraints are $\theta$-relaxed. Clearly, $\opt_\theta \le \opt$ because the resource constraints for $\opt$ are stricter. The following Lemma follows the same way as Lemma \ref{lem:sqrt-k-den}.

\begin{restatable}{lemma}{lemsqrtkdentheta}
There exists a $\theta$-relaxed resource-constrained junction tree $\cJ$ with density at most $\opt_\theta / \sqrt{k}$.
\end{restatable}

Using this lemma and the iterative procedure as in the proof of Theorem \ref{thm:vpks_khalf} implies $\opt_{junc} \le O(\sqrt{k}) \opt_\theta \le O(\sqrt{k}) \opt$, which implies Theorem \ref{thm:khalf_pks}.
\end{proof}

\newpage
\section{Acknowledgments}

We thank the anonymous reviewers for their valuable suggestions and comments, which helped us improve the quality of this writeup. We are also grateful to Kent Quanrud for pointing out that the constraints in \pks can be interpreted as packing and covering constraints, which inspired us to re-frame the problem in its current form.

\bibliographystyle{acm}
\bibliography{reference}

\section{List of Symbols and Notation}\label{sec:appendix}

We present here the main notation used throughout our paper. This list is cumulative, so notation introduced in earlier sections (e.g., Section~\ref{sec:intro}) may also appear in later ones.

\subsection{Notation from Section \ref{sec:intro}}

\begin{itemize}
    \item $\tau$ — magnitude threshold.  
    \item $m$ — number of resources in an instance of \pks or number of sets in an instance of \rcs.  
    \item $k$ — number of demand pairs.  
    \item $n$ — number of nodes in the graph.  
    \item $\mathbf{r}_e$ — resource consumption vector of edge $e$.  
    \item $\bdgt(s,t)$ — resource budget for demand pair $(s,t)$.  
    \item $D$ — set of demand pairs.  
    \item $\sigma(e)$ — cost of edge $e$.  
    \item $\ctrl(s,t)$ — control vector in \rcs, encoding distance, must-visit, and must-avoid requirements.  
    \item $\ell_e$ — length of edge $e$.  
\end{itemize}

\subsection{Notation from Section \ref{sec:jtree_vanilla_version}}

\begin{itemize}
    \item $\mathbf{t}^{-}, \mathbf{t}^{+}$ — componentwise lower and upper bounds on the resource consumption of any feasible subpath.  
    \item $\bar{G}_r$ — product graph rooted at $r$, constructed from $G$.  
    \item $\bar{G}_r^R, \bar{G}_r^L$ — right (edges directed away from root) and left (edges directed towards root) components of $\bar{G}_r$.  
    \item $\hat{G}$ — intersection graph, consisting of two copies of $G$, namely $G_{-}$ and $G_{+}$, intersecting at $r$.  
    \item $|\validlayer| = \prod_{i=1}^m O(\mathbf{t}^-[i] + \mathbf{t}^+[i])$ — number of valid layer configurations.  
    \item $\bar{G}_r^{\mathrm{up}}$ — layered graph from applying height reduction to $\bar{G}_r^R$.  
    \item $\bar{G}_r^{\mathrm{down}}$ — layered graph from applying height reduction to $\bar{G}_r^L$.  
    \item $\bar{T}_r$ — graph obtained by connecting the root $r^{\mathrm{up}}$ in $\bar{G}_r^{\mathrm{up}}$ to the root $r^{\mathrm{down}}$ in $\bar{G}_r^{\mathrm{down}}$.  
    \item $\Psi : V(\bar{T}_r) \to V(\bar{G}_r)$ — mapping between vertices of the height-reduced graph and the product graph.  
    \item $S_{s,t}^r = \{\Psi^{-1}((s^t,\mathbf{I})) \mid (s^t,\mathbf{I}) \in S_{s,t}\}$ — set of terminals in $\bar{T}_r$ corresponding to $S_{s,t}$.  
    \item $T_{s,t}^r = \{\Psi^{-1}((t^s,\mathbf{I})) \mid (t^s,\mathbf{I}) \in T_{s,t}\}$ — set of terminals in $\bar{T}_r$ corresponding to $T_{s,t}$.  
    \item $R_{s,t}^r = \{(s^t_r,\mathbf{I}),(t^s_r,\mathbf{J}) \in S_{s,t}^r \times T_{s,t}^r \mid \mathbf{I} + \mathbf{J} \preceq \bdgt(s,t)\}$.  
\end{itemize}

\subsection{Notation from Section \ref{sec:khalf_vanilla}}

\begin{itemize}
    \item $\opt_{\mathrm{junc}}$ — cost of an optimal resource-constrained junction tree.  
\end{itemize}

\subsection{Notation from Section \ref{sec:jtree_complex_version}}

\begin{itemize}
    \item Condition numbers:  
    \[
        \eta := \frac{\left|\min\left\{\min_{e \in E}\{\mathbf{r}_e[1]\},0\right\}\right|}{\min_{(s,t) \in D} |\bdgt(s,t)[0]|}, \quad
        \xi := \frac{\max_{(s,t) \in D} |\bdgt(s,t)[0]|}{\min_{(s,t) \in D} |\bdgt(s,t)[0]|}.
    \]
    \item $\theta$ — tolerance parameter (see Definition~\ref{def:theta_feasible_path}).  
    \item $\bdgt_{\max} = \max_{(s,t) \in D} |\bdgt(s,t)[0]|$.  
    \item $\bdgt_{\min} = \min_{(s,t) \in D} |\bdgt(s,t)[0]|$.  
    \item $\lenlow = \min_{e \in E} \mathbf{r}_e[0]$.  
    \item $\lenup = \max_{e \in E} \mathbf{r}_e[0]$.  
    \item $\hop$ — smallest positive integer such that for every $(s,t) \in D$, there exists a feasible $s \leadsto t$ walk with fewer than $\hop$ edges.  
    \item $\Delta = \theta \cdot \bdgt_{\min}/\hop$.  
    \item $\bar{G}$ — scaled version of $G$.  
\end{itemize}

\end{document}